\DeclareFontFamily{U}{mathb}{\hyphenchar\font45}
\DeclareFontShape{U}{mathb}{m}{n}{ <-6> mathb5 <6-7> mathb6 <7-8>
  mathb7 <8-9> mathb8 <9-10> mathb9 <10-12> mathb10 <12-> mathb12 }{}
\DeclareSymbolFont{mathb}{U}{mathb}{m}{n}
\DeclareMathSymbol{\prec}{\mathrel}{mathb}{"A0}
\DeclareMathSymbol{\succ}{\mathrel}{mathb}{"A1}
\DeclareMathSymbol{\preceq}{\mathrel}{mathb}{"A8}
\DeclareMathSymbol{\succeq}{\mathrel}{mathb}{"A9}
\DeclareMathSymbol{\precneq}{\mathrel}{mathb}{"AC}
\DeclareMathSymbol{\succneq}{\mathrel}{mathb}{"AD}
\definecolor{gray}{gray}{0.4}
\newcommand{\removelatexerror}{\let\@latex@error\@gobble}
    \NewDocumentCommand\qty{O{}mm}{\SI[#1]{#2}{#3}}
    \NewDocumentCommand\unit{O{}m}{\si[#1]{#2}}
\newtheorem{theorem}{Theorem}[section]
\newtheorem{proposition}[theorem]{Proposition}
\theoremstyle{definition}
\newtheorem{definition}[theorem]{Definition}
\theoremstyle{remark}
\newtheorem{example}[theorem]{Example}  
\newtheorem{remark}[theorem]{Remark}
\newcommand{\LM}{\mathrm{lm}}%for leading term
\newcommand{\ZZ}{\mathbb{Z}}
\newcommand{\NN}{\mathbb{N}}
\newcommand{\QQ}{\mathbb{Q}}
\newcommand{\RR}{\mathbb{R}}
\newcommand{\HS}{\mathsf{HS}}
\newcommand{\Ker}{\mathrm{ker}}
\newcommand{\Span}{\mathrm{Span}}
\newcommand{\lm}{\LM}
\newcommand{\SPol}{\textup{S-Pol}}
\newcommand{\multiwhomo}{$\WW$-homogeneous\xspace}
\newcommand{\multiwdeg}{$\WW$-degree\xspace}
\newcommand{\dmax}{d_{\mathrm{max}}}
\newcommand{\WW}{\mathbf{W}}
\newcommand{\mdeg}{\mathrm{mdeg}}
\newcommand{\pdeg}{\Mdeg}
\newcommand{\Mdeg}{\mathrm{Mdeg}}
\newcommand{\wdeg}{\mathrm{wdeg}}
\newcommand{\Wnb}[1][]{\mathbf{W}_{\mathbf{n}}^{\rm b#1}}
\newcommand{\Wpb}[1][]{\mathbf{W}_{\mathbf{p}}^{\rm b#1}}
\newcommand{\WMgrevlex}[1][\mathbf{W}]{#1\textrm{-}\mathrm{Mdrl}}
\newcommand{\Mon}{\mathrm{Mon}}
\newcommand{\pbox}[2][l]{%
  % \extrarowheight=2pt
  \begin{tabular}[c]{@{}#1@{}}#2\end{tabular}%
  % \extrarowheight=6pt
}
\begin{document}

\begin{frontmatter}
  \title{On the computation of Gröbner bases for matrix-weighted homogeneous systems}

  % \author{Thibaut Verron\orcidID{0000-1111-2222-3333}}
  % % 
  % \authorrunning{T. Verron}
  % % First names are abbreviated in the running head.
  % % If there are more than two authors, 'et al.' is used.
  % % 
  % \institute{Institute for Algebra / Johannes Kepler University\\
  % Altenbergerstr. 69, 4040 Linz, Austria\\
  % \email{thibaut.verron@jku.at}\\
  % \url{thibautverron.github.io}}

  \author{Thibaut Verron}
  \ead{thibaut.verron@jku.at}

  \address{Institute for Algebra, Johannes Kepler University Linz, Austria}

  % \keywords{Gröbner bases, weighted homogeneity,
  % multihomogeneity, multiple systems of weights}

  % \begin{CCSXML}
  %   <ccs2012>
  %   <concept>
  %   <concept_id>10010147.10010148.10010149.10010150</concept_id>
  %   <concept_desc>Computing methodologies~Algebraic algorithms</concept_desc>
  %   <concept_significance>500</concept_significance>
  %   </concept>
  %   </ccs2012>
  % \end{CCSXML}

  % \ccsdesc[500]{Computing methodologies~Algebraic algorithms}

  %\maketitle

  \begin{abstract}
    In this paper, we examine the structure of systems that are weighted homogeneous for several systems of weights, and how it impacts the computation of Gröbner bases. We present several linear algebra algorithms for computing Gröbner bases for systems with this structure, either directly or by reducing to existing structures. We also present suitable optimization techniques.

    % The most natural orderings to compute a Gröbner basis for systems with this structure are weighted orderings following the systems of weights, and we discuss the possibility to use the algorithms to directly compute a basis for such an order, regardless of the structure of the system.

    As an opening towards complexity studies, we discuss potential definitions of regularity and prove that they are generic if non-empty. 
    Finally, we present experimental data from a prototype implementation of the algorithms in SageMath.
  \end{abstract}
\end{frontmatter}

% \addtolength{\abovedisplayskip}{-2pt}
% \addtolength{\belowdisplayskip}{-2pt}

% \thanks{This work was supported by the Austrian FWF grant P34872.}

\section{Introduction}
\label{sec:introduction}

Gröbner bases are a very powerful tool for manipulating polynomial ideals.
They give ways to describe the set of solutions of a system of polynomial equations, as well as obtaining various information of algebraic nature on the variety or the ideal.
Since their introduction by Buchberger \citep{Buchberger-1965}, many algorithms have been developed for computing Gröbner bases, either as adaptations of Buchberger's algorithm \citep{Giovini-1991, Gao-2015-new-framework-for, Faugere-2014-sparse} or by encoding polynomial operations in linear algebra \citep{Lazard-1983,Faugere-1999-F4, Faugere-2002-F5, Faugere-1993-fglm, Faugere-2014-subcubic-FGLM}.
A particularity of the computation of Gröbner bases is that it is proved to be an extremely difficult problem in the worst case, yet this does not materialize in practice, for systems arising in applications.
To better understand this phenomenon, research efforts have focused on identifying generic regularity properties of the polynomial systems ensuring better complexity bounds \citep{Giusti-1989-CastelnuovoRegularityCurves, Giusti-1993-DeterminationPointsIsoles,Bardet-2015}.
Those efforts were then extended to so-called structured polynomial systems, leading to dedicated algorithms with good complexity bounds.

Among such structures are multihomogeneous systems \citep{Caboara-1996, Faugere-2011-bihomo, Spaenlehauer-2012} and weighted homogeneous systems \citep{Faugere-2013-whomo-1, Faugere-2016-whomo-2}.
Both those structures have their particularities.
A Gröbner basis for multihomogeneous systems can be computed using a dedicated algorithm, grouping the computations according to the multidegree of the polynomials.
Characterizing the regularity of those systems, and analyzing the complexity of the algorithm, is however very complicated, and only done in a few cases such as bilinear systems.
For weighted homogeneous systems, one can perform a change of variables and use state-of-the-art algorithms, under standard regularity assumptions.
The additional structure leads to improved complexity bounds and performances.
In \cite{Bessonov-2022-ObtainingWeightsFor}, it was shown that algorithms for weighted homogeneous systems can also offer better performances for computing Gröbner bases of ideals where the chosen weights \emph{break} the homogeneity.

More recently, algorithms have been developed for computing sparse Gröbner bases \citep{Faugere-2014-sparse, Bender-2019-grobner-basis-over}, including the weighted and multihomogeneous cases.

In this work, we consider a structure which generalizes both the multihomogeneous and weighted homogeneous structures, by considering systems that are weighted homogeneous for several systems of weights~\citep{Kreuzer-2005-ComputationalCommutativeAlgebra}.
Equivalently, such systems are homogeneous for a graduation defined by a matrix of weights.

This structure can appear for instance in physics, where dimensional homogeneity is intrinsically multidimensional.
Consider for example the following vector equation
\begin{equation}
  \label{eq:39}
  m \mathbf{a} - q(\mathbf{E} + \mathbf{v} \times \mathbf{B}) = 0
\end{equation}
characterizing the movement (speed $\mathbf{v}$ in \unit{m.s^{-1}}, acceleration $\mathbf{a}$ in \unit{m.s^{-2}}) of a particle with mass $m$ (in \unit{kg}) and charge $q$ (in $\unit{As}$) under the action of an electric field $\mathbf{E}$ (in \unit{kg.m.A^{-1}.s^{-3}})  and a magnetic field $\mathbf{B}$ (in \unit{kg.A^{-1}.s^{-2}}).
This equation is dimensionally homogeneous, which means that all terms have the same degree in \unit{kg}, in \unit{m}, in \unit{s} and in \unit{A}.
In the language of weighted degrees, it means that the polynomial is weighted homogeneous for the $4$ systems of weights
% \begin{align}
%   \label{eq:49}
%   &
%     \hspace{0.5em}
%     \begin{matrix}
%       \mathbf{v} & \mathbf{a} & m & \hspace{-0.3em}q & \mathbf{E} & \hspace{-0.1em}\mathbf{B}
%     \end{matrix}
%     \\
%   W_{\unit{kg}} =& \begin{pmatrix} 0&0&1&0&1&1 \end{pmatrix} \\
%   W_{\unit{m}} =& \begin{pmatrix} 1&1&0&0&0&-1 \end{pmatrix} \\
%   W_{\unit{s}} =& \begin{pmatrix} -1&-2&0&1&-2&1 \end{pmatrix} \\
%   W_{\unit{A}} =& \begin{pmatrix} 0&0&0&1&-1&-1 \end{pmatrix} 
% \end{align}
\begin{equation}
  \label{eq:54}
  \begin{array}{r@{\,}cccccc@{\;}l}
      &\mathbf{v} & \mathbf{a} & m & \hspace{-0.3em}q & \mathbf{E} & \hspace{-0.1em}\mathbf{B} &\\
    W_{\unit{kg}} = \Big(& 0&0&1&0&1&1& \Big) \\
    W_{\unit{m}} = \Big(& 1&1&0&0&1&0& \Big) \\
    W_{\unit{s}} = \Big(& -2&-1&0&1&-3&-2& \Big) \\
    W_{\unit{A}} = \Big(& 0&0&0&1&-1&-1& \Big).
  \end{array}
\end{equation}
% \begin{align}
%   \label{eq:40}
%   W_{\unit{kg}} = (\overbrace{0, \dots, 0}^{\mathbf{v}},\overbrace{0, \dots, 0}^{\mathbf{a}},1,0,1, \dots,1,1, \dots,1)
% \end{align}
% \begin{equation}
%   \label{eq:38}
%   m
%   \begin{pmatrix}
%     a_1 \\ a_2 \\ a_3
%   \end{pmatrix}
%   - q \left(
%    \begin{pmatrix}
%      E_1 \\ E_2 \\ E_3
%    \end{pmatrix}
%    +
%    \begin{pmatrix}
%      v_1 \\ v_2 \\ v_3
%    \end{pmatrix}
%    \times
%    \begin{pmatrix}
%      B_1 \\ B_2 \\ B_3
%    \end{pmatrix}
%   \right)
% \end{equation}
This structure may also appear as a construction for monomial orderings, in addition to preexisting homogeneity \citep{Collart-1997-walk}.%\fxfatal{Physics example?}
% We inventory some properties of this grading which are useful from an algorithmic point of view.
% We then present some matrix-based algorithms to compute Gröbner bases for matrix-weighted homogeneous systems, before discussing their complexity.

In \cite{Kreuzer-2005-ComputationalCommutativeAlgebra}, the computation of Gröbner bases for such systems is examined in a very general fashion, by refining the selection strategy in Buchberger's algorithm to compute a (matrix-weighted) homogeneous basis.
The algorithms which we present are adapted from the more recent algorithms based on linear algebra, and in particular the Matrix-F5 algorithm \citep{Bardet-2004}.
The adaptation of this algorithm to take advantage of the structure of the systems is a general approach that was key to the development of algorithms for the already mentioned homogeneous, multihomogeneous or weighted homogeneous structures, as well as the sparse systems.

Beyond the performance advantage given by fast linear algebra, the use of matrices allows for a clearer overview of the computations done in the algorithm, and it makes it possible to design further optimizations.
Specifically, in our case, the parallelization strategy used for multihomogeneous systems can be generalized (Proposition~\ref{prop:parallel}), and a new signature criterion can be used to eliminate entire matrices at once (Proposition \ref{prop:sig-gcd}).

To streamline the presentation, we introduce a general version of the Matrix-F5 algorithm, observing that for all those known structures, the specialized algorithm can be obtained by precomputing a list of computation steps and then running the Matrix-F5 algorithm with those steps.
% This generalization covers the classical homogeneous structure~\cite{Bardet-2015}, the bihomogeneous~\cite{Faugere-2011-bihomo} or multihomogeneous~\cite{Spaenlehauer-2012} structures, the group-invariant structure~\cite{Faugere-2012-group}, the weighted homogeneous structure~\cite{Faugere-2013-whomo-1,Faugere-2016-whomo-2}, and the sparse structure~\cite{Faugere-2014-sparse,Bender-2019-grobner-basis-over}.
The three algorithms which we present for matrix-weighted homogeneous systems also fall in this framework.

The first algorithm is dedicated to the matrix-weighted homogeneous structure and considers the polynomials matrix-weighted degree by matrix-weighted degree.
We then show that several optimizations are available for this algorithm, notably parallelization, and pruning some steps using signatures.

The second algorithm uses a change of variables to reduce the problem to the known multihomogeneous case.
The resulting algorithm performs exactly the same steps as the dedicated algorithm, but such a transformation makes it possible to reduce the implementation effort and to transfer existing optimizations.
This is similar in spirit to the weighted homogeneous case, which could be reduced via a change of variable to the homogeneous case.

Finally, we present the specialization of the sparse-Matrix-F5 \citep{Faugere-2014-sparse} algorithm to our structure, taking advantage of the sparsity without a particular treatment to the structure.
Again, this makes it possible to take advantage of existing algorithms, but with the caveat that the computation is restricted to the zero-dimensional case.

A general fact about Gröbner basis algorithms dedicated to a structure is that they compute a Gröbner basis for an order carefully chosen to ensure that the structure is preserved throughout the algorithm.
In addition to those theoretical properties which make it possible to obtain complexity bounds, this order usually gives the fastest Gröbner basis computation in practice.
For some applications, having any Gröbner basis is enough, for instance for computing the Hilbert series of the ideal \citep{Traverso-1996-hilbert-functions-buchberger, Bao-2021-ChiralRingsFutakia}.
Furthermore, multi-step strategies have been designed around this feature, combining those ``easy'' Gröbner basis computations with change of order algorithms to obtain a Gröbner basis for any wanted order.
In the multihomogeneous case, this ``easy'' ordering is a graded block ordering, and in the weighted homogeneous case, it is a weight-graded ordering.

In the matrix-weighted homogeneous case, similarly to those two cases, the most natural orders, and the one the aforementioned algorithms are tailored to, are orders graded with respect to the systems of weights.
It is a classical fact that any monomial ordering can be represented as such a graded ordering with respect to a matrix of weights.
This opens the possibility of looking at the situation from the opposite angle, by choosing the weights which represent the desired monomial ordering.
Of course, it is unlikely that the systems of interest are matrix-weighted homogeneous for those weights, but it is possible to adapt the algorithm to run on affine systems, i.e., not necessarily matrix-weighted homogeneous.
This opens different questions from the usual cases because it is unlikely that we can or want to avoid degree falls in this case.

The complexity of Gröbner basis algorithms is usually examined under some regularity assumptions which are proved to be satisfied for generic systems.
For the matrix-weighted homogeneous structure, similarly to the multihomogeneous case, it is not easy to define what a regular sequence is.
We examine several regularity properties, by analogy with known cases.
We show that those properties are generic if not empty.
This restriction is necessary and not unexpected, as this phenomenon was already present for the more specific weighted homogeneous structure.
With this caveat, the genericity of the properties is also not unexpected, as it stems from the construction and reduction of the matrices in the algorithm.
As an aside, with the general presentation of the algorithm, the proof of genericity immediately generalizes to other structures using a similar algorithmic template.
Finally, as an opening towards complexity analyses, we show how to compute the Hilbert series of ideals under some of those hypotheses.
% We also discuss the possibility of defining a notion of dimension for zeroes of matrix-homogeneous systems, and how it does not help with defining regularity.

We present some experimental results obtained with a prototype implementation of the algorithms\footnote{\url{https://gitlab.com/thibaut.verron/matrix-homo-gb}}, showing that the matrix-weighted strategy allows us to significantly reduce the size of the matrices and the time spent in reductions.

\paragraph{Structure of the paper}
In Section~\ref{sec:definitions}, we recall the main definitions concerning $\ZZ^{k}$-gradings on polynomial algebras, and we include properties that are relevant to the computation of Gröbner bases.
In Section~\ref{sec:general-matrix-f5}, we present algorithms for computing Gröbner bases for matrix-weighted homogeneous systems, starting with the general form of Matrix-F5 for structured systems.
We conclude that section with a discussion on inhomogeneous systems.
In Section~\ref{sec:hilb-mult-regul}, we discuss possible definitions of regularity, prove their conditional genericity, and show some of their early consequences.
Finally, in Section~\ref{sec:applications}, we report on experimental result.
% first we present a general form of Matrix-F5 for structured systems (Section~\ref{sec:generic-matrix-f5}), before instantiating it to Matrix-weighted homogeneous systems (Section~\ref{sec:multi-mathbfw-homog}) and giving optimizations and criteria (Section~\ref{sec:optimizations}). Then we present the algorithm using multihomogeneous systems (Section~\ref{sec:converting-into-from}) and sparse systems (Section~\ref{sec:sparse-grobner-basis}), before discussing 

\paragraph{Acknowledgements}
The author thanks the anonymous reviewers for their careful comments and suggestions, and in particular the reviewer who asked the insightful question prompting the discussion in Example~\ref{ex:semi-reg-hs}.

\paragraph{Funding}
This work was funded by the Austrian FWF, project P34872.

\section{Notations}
\label{sec:notations}

The following notations are used throughout the paper: $K$ is a field, $n \in \NN_{>0}$, and $A=K[X_{1},\dots,X_{n}] = K[\mathbf{X}]$.
A monomial of $A$ is an element of the form $X_{1}^{\alpha_{1}}\cdots X_{n}^{\alpha_{n}}$. We denote by $\Mon(A)$ the set of all monomials of $A$.
When the indexing is clear by the context, we shall frequently use bold shorthands for indexed tuples and products, such as $\mathbf{d} = (d_{1},\dots,d_{k})$ or $\mathbf{X}^{\bm{\alpha}} = X_{1}^{\alpha_{1}}\cdots X_{n}^{\alpha_{n}}$.

\section{Definitions}
\label{sec:definitions}

\subsection{Matrix gradings}
\label{sec:pluriw-homog}

We first recall the definitions of weighted homogeneity and multihomogeneity.

\begin{definition}
  A \emph{system of weights} on $A$ is a tuple $W{=}(w_{1},\dots,w_{n})$ in $\ZZ^{n}$.
  It defines a $\ZZ$-grading on $A$, by setting the \emph{weighted degree} (or $W$-degree) of a monomial to be
  \begin{equation}
    \label{eq:5}
    \deg_{W}(X_{1}^{\alpha_{1}}\cdots X_{n}^{\alpha_{n}}) = w_{1}\alpha_{1} + \dots + w_{n}\alpha_{n},
  \end{equation}
  and setting the weighted degree of a polynomial as the maximum of the weighted degrees of monomials in its support.
  A polynomial is called \emph{weighted homogeneous} or \emph{$W$-homogeneous} if all its monomials have the same $W$-degree.
  % This defines a $\ZZ$-grading on $A$.

  Let $k \in \{1,\dots,n\}$, and $\mathbf{n} = (n_{1},\dots,n_{k}) \in \NN^{k}$ such that $n=n_{1} + n_2 + \dots + n_{k}$.
  Group the variables $X_{1},\dots,X_{n}$ accordingly, as $X_{1,1},\dots,X_{1,n_{1}}, X_{2,1},\dots,X_{2,n_{2}},\dots,X_{k,n_{k}}$.
  The \emph{multidegree} of a monomial with respect to the partition $\mathbf{n}$ is
\begin{equation}
    \label{eq:17}
    \mdeg_{\mathbf{n}}(\mathbf{X}_{1,\bullet}^{\bm{\alpha_{1}}}\cdots X_{k,\bullet}^{\bm{\alpha}_{k}})
    = (\deg(\mathbf{X}_{1,\bullet}^{\bm{\alpha}_{1}}), \dots, \deg(\mathbf{X}_{k,\bullet}^{\bm{\alpha}_{k}})).
  \end{equation}
  A polynomial is called multihomogeneous (with respect to the partition $\mathbf{n}$) with multidegree $\mathbf{d}$ if all its monomials have multidegree $\mathbf{d}$.
  This defines a $\NN^{k}$-grading on $A$.
\end{definition}

In this paper, we consider a generalization of those two notions, grading polynomials according to several systems of weights.
We recall the main definitions and properties from~\cite[Sec.~4.1.A]{Kreuzer-2005-ComputationalCommutativeAlgebra}.

\begin{definition}[{\citet[Def.~4.1.6]{Kreuzer-2005-ComputationalCommutativeAlgebra}}]
   Let $k \in \{1,\dots,n\}$.
   A \emph{matrix of weights} is a matrix $\mathbf{W}=(w_{i,j}) \in \ZZ^{k \times n}$ with rank $k$.
   We denote by $W_{1},\dots,W_{k}$ its rows.
   The matrix-weighted degree (or  $\mathbf{W}$-degree) of the monomial $\mathbf{X}^{\bm{\alpha}}$ is
   \begin{equation}
     \label{eq:4}
     \pdeg_{\mathbf{W}}(\mathbf{X}^{\bm{\alpha}}) = \mathbf{W} \cdot \bm{\alpha} = (\deg_{W_{1}}(\mathbf{X}^{\bm{\alpha}}), \dots, \deg_{W_{k}}(\mathbf{X}^{\bm{\alpha}})).
   \end{equation}
   A polynomial $f$ is called $\mathbf{W}$-homogeneous (or matrix-weighted homogeneous) with matrix-weighted degree $\mathbf{d}$ if all the monomials in its support have matrix-weighted degree $\mathbf{d}$.
 \end{definition}   

 \begin{remark}
   In~\cite{Kreuzer-2005-ComputationalCommutativeAlgebra}, matrix-weighted homogeneity is seen as a general case of homogeneity, which allows the authors to simply refer to the corresponding notion as ``degree'', ``homogeneity'', etc., with the usual definition of homogeneity qualified as \emph{standard homogeneity}.
   In this work, we wish to preserve the distinction between (standard) homogeneity, weighted homogeneity, multihomogeneity, and matrix-weighted homogeneity, so we keep the naming specific.
 \end{remark}
 
\begin{example}
  Weighted homogeneity is a particular case of matrix-weighted homogeneity, by setting $k=1$ and $W_{1}=W$.
  
  Multihomogeneity is also a particular case, using as weights the block matrix of weights defined as follows. 
\end{example}

\begin{definition}
  Let $n \in \NN$, $k \in \{1,\dots,n\}$, and $\mathbf{n} = (n_1,\dots,n_k)$ a partition of $n$.
  The block matrix of weights $\Wnb=(W^{b}_{\mathbf{n},1},\dots,W^{b}_{\mathbf{n},k})$ is defined as
  \begin{equation}
    \label{eq:19}
    W^{b}_{\mathbf{n},i} = (\underbrace{0,\dots\dots,0}_{n_{1}+\dots+n_{i-1}},
    \underbrace{1,\dots,1}_{n_{i}},
    \underbrace{0,\dots\dots,0}_{n_{i+1}+\dots+n_{k}}).
  \end{equation}
\end{definition}

\begin{example}
  Consider the matrix of weights $\mathbf{W} =
  \begin{psmallmatrix}
    1 & 1 & 1 \\ 1 & 2 & 3
  \end{psmallmatrix}
  $ and the polynomial $f = X_{1}X_{2}^{2} + X_{1}^{2}X_3$.
  It is $\mathbf{W}$-homogeneous with $\mathbf{W}$-degree $(3,5)$.
\end{example}

\begin{remark}
  Just like in the multihomogeneous case, there is no canonical way to define the matrix-weighted degree of an arbitrary polynomial.
\end{remark}

It is a straightforward verification that this defines a $\ZZ^{k}$-grading on $A$.
The $\mathbf{W}$-homogeneous components of a polynomial, and $\mathbf{W}$-homogeneous ideals, are defined as usual for this graduation.
For a $\mathbf{W}$-homogeneous ideal $I$ (possibly 0), we denote by $(A/I)_{\mathbf{d}}$ the $K$-vector span of all $\mathbf{W}$-homogeneous polynomials of $\mathbf{W}$-degree $\mathbf{d}$ in $A/I$.

\begin{definition}
    Two matrices of weights $\mathbf{W}, \mathbf{W}'$ are called \emph{equivalent} if there exists an invertible matrix $P \in \QQ^{k \times k}$ such that
  \begin{equation}
    \label{eq:8}
    \mathbf{W} = P \cdot \mathbf{W}' .
  \end{equation}
\end{definition}

Equivalent matrices of weights are exactly those which have the same matrix-weighted homogeneous components.

\begin{proposition}
  \label{prop:charac-equiv-w}
  Let $\mathbf{W}_{1}, \mathbf{W}_{2}$ be two matrices of weights.
  Then the following are equivalent:
  \begin{enumerate}
    \item $\mathbf{W}_{1}$ and $\mathbf{W}_{2}$ are equivalent
    \item for all $m_{1}, m_{2}$ monomials in $K[\mathbf{X}]$,
    \begin{equation}
      \label{eq:9}
      \Mdeg_{\mathbf{W}_{1}}(m_1) = \Mdeg_{\mathbf{W}_{1}}(m_2) \iff \Mdeg_{\mathbf{W}_{2}}(m_1) = \Mdeg_{\mathbf{W}_{2}}(m_2).
    \end{equation}
    \item for all $f \in K[\mathbf{X}]$, $f$ is $\mathbf{W}_{1}$-homogeneous iff $f$ is $\mathbf{W}_{2}$-homogeneous.
  \end{enumerate}\end{proposition}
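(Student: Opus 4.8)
The plan is to establish the cycle $(1) \Rightarrow (2) \Rightarrow (3) \Rightarrow (1)$, after recording one elementary translation. For monomials $m_1 = \mathbf{X}^{\bm\alpha}$ and $m_2 = \mathbf{X}^{\bm\beta}$, the definition of the matrix-weighted degree gives $\Mdeg_{\mathbf{W}}(m_1) = \Mdeg_{\mathbf{W}}(m_2)$ if and only if $\mathbf{W}(\bm\alpha - \bm\beta) = 0$, that is, $\bm\alpha - \bm\beta$ lies in the integer kernel $\Ker_{\ZZ}(\mathbf{W}) := \{v \in \ZZ^n : \mathbf{W} v = 0\}$. Since every vector of $\ZZ^n$ is a difference of two elements of $\NN^n$, condition (2) is \emph{exactly} the equality $\Ker_{\ZZ}(\mathbf{W}_1) = \Ker_{\ZZ}(\mathbf{W}_2)$, and I will use (2) in this form.

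The implication $(1) \Rightarrow (2)$ is immediate: writing $\mathbf{W}_1 = P \mathbf{W}_2$ with $P$ invertible over $\QQ$, we get $\mathbf{W}_1 v = 0 \iff P\mathbf{W}_2 v = 0 \iff \mathbf{W}_2 v = 0$, so the two integer kernels coincide. For $(2) \Rightarrow (3)$, note that $f$ is $\mathbf{W}$-homogeneous precisely when every pair of monomials in its support has equal $\mathbf{W}$-degree; since (2) makes ``equal $\mathbf{W}_1$-degree'' and ``equal $\mathbf{W}_2$-degree'' coincide pair by pair, $\mathbf{W}_1$-homogeneity and $\mathbf{W}_2$-homogeneity agree. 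To close the cycle with $(3) \Rightarrow (1)$, I first recover (2): given monomials $m_1 \neq m_2$ of equal $\mathbf{W}_1$-degree, the binomial $m_1 - m_2$ is $\mathbf{W}_1$-homogeneous, hence $\mathbf{W}_2$-homogeneous by (3), so $m_1, m_2$ have equal $\mathbf{W}_2$-degree; symmetry yields (2), i.e.\ $\Ker_{\ZZ}(\mathbf{W}_1) = \Ker_{\ZZ}(\mathbf{W}_2)$.

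It then remains to deduce equivalence of the matrices from equality of their integer kernels, which is the only substantive step. The idea is to pass from the integer kernel to the rational kernel $\Ker_{\QQ}(\mathbf{W}_i) = \{v \in \QQ^n : \mathbf{W}_i v = 0\}$: because each $\mathbf{W}_i$ has integer entries, clearing denominators shows that every element of $\Ker_{\QQ}(\mathbf{W}_i)$ is a rational multiple of an element of $\Ker_{\ZZ}(\mathbf{W}_i)$, so $\Ker_{\QQ}(\mathbf{W}_i) = \QQ \cdot \Ker_{\ZZ}(\mathbf{W}_i)$. Hence equal integer kernels force equal rational kernels. Taking orthogonal complements in $\QQ^n$, the row spaces of $\mathbf{W}_1$ and $\mathbf{W}_2$ coincide; and since both matrices have rank $k$, their rows form two bases of this common $k$-dimensional subspace, so the change-of-basis matrix $P \in \QQ^{k \times k}$ expressing the rows of $\mathbf{W}_1$ in terms of those of $\mathbf{W}_2$ is invertible and satisfies $\mathbf{W}_1 = P \mathbf{W}_2$. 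I expect this integer-versus-rational comparison to be the only delicate point; the rank-$k$ hypothesis is exactly what guarantees that $P$ exists and is invertible, rather than the rows merely spanning equal spaces with some degeneracy.
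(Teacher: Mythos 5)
Your proof is correct and follows essentially the same route as the paper: both arguments reduce condition (2) to equality of kernels (via the same device of clearing denominators and writing a kernel vector as a difference $\bm\alpha-\bm\beta$ of non-negative integer vectors), and both conclude by observing that equal row spaces of two rank-$k$ matrices yield an invertible change-of-basis matrix $P$. The only differences are organizational — you prove the cycle $(1)\Rightarrow(2)\Rightarrow(3)\Rightarrow(1)$ and spell out the binomial argument for the equivalence of (2) and (3), which the paper dismisses as clear.
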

\begin{proof}
  Let $\alpha^{(1)}$ and $\alpha^{(2)}$ be the exponent vectors of $m_{1}$ and $m_{2}$ respectively.
  If $\mathbf{W}_{1}$ and $\mathbf{W}_{2}$ are equivalent, then there exists $P$ invertible such that $\mathbf{W}_{1} = P\mathbf{W}_{2}$.
  So for $i \in \{1,2\}$, $\deg_{W_{1}}(m_{i}) =\mathbf{W}_{1}\cdot  \alpha^{(i)}  = P \cdot \mathbf{W}_{2} \cdot \alpha^{(i)}= P \cdot \deg_{W_{1}}(m_{i})$.
  Property~\eqref{eq:9} follows.

  Conversely, assume that Property 2 holds. % for all monomials $m_{1},m_{2}$.
  Let $v \in \QQ^{n}$ be a vector in $\ker(\mathbf{W}_{1})$.
  Clearing out the denominators, we can assume that $v \in \ZZ^{n}$.
  Separating the non-negative and non-positive coordinates, we can write $v = \alpha - \beta$ with $\alpha, \beta \in \NN^{n}$.
  Then $\mathbf{W}_{1}\alpha  = \mathbf{W}_{1}\beta$.
  Taking the corresponding monomials, using Property~2 and tracing back the previous computation, we get that $v \in \ker(\mathbf{W}_{2})$.
  By symmetry, $\ker(\mathbf{W}_{1})=\ker(\mathbf{W}_{2})$.
  This implies that $\mathbf{W}_{1}$ and $\mathbf{W}_{2}$ have the same row space, and thus the same reduced row echelon form.
  So they are equal up to left-multiplication by an invertible matrix.

  Properties 2 and 3 are clearly equivalent.
\end{proof}

\subsection{Size-bounded matrix of weights}
\label{sec:posit-size-bound}

So far we did not put any restrictions on the weights.
In particular, they may be zero or negative and lead to matrix-weighted degrees which are zero or negative.
In this paper, we will restrict to matrices of weights such that the number of monomials at any \multiwdeg is finite.
We recall the corresponding definitions and properties from~\cite[Sec.~4.1.C]{Kreuzer-2005-ComputationalCommutativeAlgebra}.

\begin{definition}
  A matrix of weights $\mathbf{W} = (w_{i,j}) \in \ZZ^{n \times k}$ is:
  \begin{itemize}
    \item \emph{positive} (resp. \emph{non-negative}) if all the entries $w_{i,j}$ are positive (resp. \emph{non-negative});
    \item \emph{of positive type} \citep[Def.~4.1.17]{Kreuzer-2005-ComputationalCommutativeAlgebra} if there exists $a_1, \dots, a_n \in \ZZ$ such that $a_1W_1 + \dots + a_n W_n$ is positive;
    % \item \emph{of positive type} if there exists a linear combination of the rows which has only positive entries 
    \item \emph{size-bounded} if for all $\mathbf{d} \in \ZZ^{k}$, there exists only finitely many monomials with \multiwdeg $\mathbf{d}$.
  \end{itemize}
\end{definition}

The notion of positive matrices of weights is not more restrictive than that of positive type:
\begin{proposition}
  Let $\mathbf{W}$ be a matrix of weights, the following are equivalent:
  \begin{enumerate}
    \item $\mathbf{W}$ is of positive-type;
    \item $\mathbf{W}$ is equivalent to a matrix of weights with a positive row;
    \item $\mathbf{W}$ is equivalent to a positive matrix of weights
  \end{enumerate}
\end{proposition}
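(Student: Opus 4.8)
The plan is to prove the cycle of implications $(3) \Rightarrow (2) \Rightarrow (1) \Rightarrow (3)$, where the first two steps are immediate and essentially all of the content sits in $(1) \Rightarrow (3)$.

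For $(3) \Rightarrow (2)$: if $\mathbf{W}$ is equivalent to a positive matrix of weights $\mathbf{W}'$, then every row of $\mathbf{W}'$ is in particular positive, so $\mathbf{W}'$ has a positive row and $(2)$ holds. For $(2) \Rightarrow (1)$: suppose $\mathbf{W} = P\mathbf{W}'$ with $P \in \QQ^{k\times k}$ invertible and the $i_{0}$-th row $W'_{i_{0}}$ of $\mathbf{W}'$ positive. Writing $\mathbf{W}' = P^{-1}\mathbf{W}$ expresses $W'_{i_{0}}$ as a $\QQ$-linear combination of the rows $W_{1},\dots,W_{k}$ of $\mathbf{W}$; clearing denominators by multiplying through by a positive common denominator $N$ of the relevant entries of $P^{-1}$ yields a $\ZZ$-linear combination $N W'_{i_{0}} = a_{1}W_{1} + \dots + a_{k}W_{k}$ which is still entrywise positive, since it is a positive integer times a positive vector. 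Hence $\mathbf{W}$ is of positive type.

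The heart of the argument is $(1) \Rightarrow (3)$. Suppose integers $a_{1},\dots,a_{k}$ are such that $v := a_{1}W_{1} + \dots + a_{k}W_{k}$ is positive. Since $v \neq 0$, the vector $\mathbf{a} = (a_{1},\dots,a_{k})$ is nonzero, so it can be completed to a basis of $\QQ^{k}$; let $Q \in \QQ^{k\times k}$ be an invertible matrix whose first row is $(a_{1},\dots,a_{k})$. Then the first row of $Q\mathbf{W}$ equals $v$, which is positive, while the remaining rows are arbitrary. To make them positive as well, I would shear by a large multiple of the first row: let $L = I + M\sum_{i=2}^{k} E_{i1}$ be the unipotent lower-triangular matrix that adds $M$ times the first row to each other row, where $E_{i1}$ has a single $1$ in position $(i,1)$ and $M$ is a positive integer to be chosen. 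Then $\det L = 1$, so $P := LQ$ is invertible, and the rows of $P\mathbf{W} = L(Q\mathbf{W})$ are $v$ (first row) and $(Q\mathbf{W})_{i} + M v$ for $i \geq 2$. Because every entry of $v$ is strictly positive, the entry $(Q\mathbf{W})_{ij} + M v_{j}$ becomes positive as soon as $M$ exceeds $-(Q\mathbf{W})_{ij}/v_{j}$; as there are only finitely many entries, a single large $M$ makes all of $P\mathbf{W}$ positive. Since $P$ is invertible, $\mathbf{W} = P^{-1}(P\mathbf{W})$ is equivalent to the positive matrix $P\mathbf{W}$, which proves $(3)$.

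The only genuine obstacle is the simultaneous requirement in $(1)\Rightarrow(3)$ that the conjugating matrix remain invertible while all rows are forced positive; the ``shear by a large multiple of the known positive row'' trick resolves this cleanly, and the finiteness of the number of matrix entries guarantees that a single uniform $M$ suffices. It is worth double-checking two minor points: completing $\mathbf{a}$ to a basis avoids any case distinction on which coordinate of $\mathbf{a}$ is nonzero, and the denominator-clearing in $(2)\Rightarrow(1)$ preserves positivity precisely because the scaling factor $N$ is a positive integer.
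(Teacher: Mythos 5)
Your proof is correct and rests on essentially the same two ideas as the paper's: an invertible row operation placing the positive combination $a_{1}W_{1}+\dots+a_{k}W_{k}$ into one row, followed by adding a sufficiently large multiple of that positive row to all the others; you merely orient the cycle as $(3)\Rightarrow(2)\Rightarrow(1)\Rightarrow(3)$ where the paper runs $(3)\Rightarrow(1)\Rightarrow(2)\Rightarrow(3)$. The only cosmetic point is to complete $\mathbf{a}$ to a basis using integer vectors (or clear denominators afterwards) so that $Q\mathbf{W}$ is again a matrix of weights in $\ZZ^{k\times n}$.
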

\begin{proof}
  Clearly (3) implies (1).
  If $\mathbf{W}$ is of positive type, there exists $a_1, \dots, a_n \in \ZZ$ such that $a_1W_1 + \dots + a_n W_n$ is positive.
  Let $i$ be such that $a_i$ is nonzero, and let
   $P$ be the transformation matrix replacing the $i$'th row by $\sum_{i} a_i W_i$.
  Clearly $P\mathbf{W}$ is equivalent to $\mathbf{W}$ and its $i$'th row is positive.

  If one of $\mathbf{W}$'s rows is positive, adding sufficiently many copies of that row to the other rows, one obtains an equivalent matrix whose entries are all positive.
\end{proof}

\begin{remark}
  The notion of \emph{positive matrix} defined in~\cite[Def.~4.2.4]{Kreuzer-2005-ComputationalCommutativeAlgebra} is different.
\end{remark}

Therefore, matrices of weights of positive type are size-bounded (see also~\cite[Prop.~4.1.19]{Kreuzer-2005-ComputationalCommutativeAlgebra}).
More generally, size-bounded matrices of weights can be characterized by the sign of coordinates of vectors in their right kernel.
\begin{proposition}
  \label{prop:size-bounded}
  Let $\mathbf{W}$ be a matrix of weights.
  Then $\mathbf{W}$ is size-bounded if and only if there does not exist any non-trivial monomial with \multiwdeg $(0,\dots,0)$.
\end{proposition}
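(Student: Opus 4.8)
The plan is to restate the claim in terms of exponent vectors: a monomial $\mathbf{X}^{\bm{\alpha}}$ satisfies $\Mdeg_{\mathbf{W}}(\mathbf{X}^{\bm{\alpha}}) = \mathbf{W}\bm{\alpha}$, so the monomials of a fixed \multiwdeg $\mathbf{d}$ are in bijection with the lattice points $\{\bm{\alpha}\in\NN^{n}:\mathbf{W}\bm{\alpha}=\mathbf{d}\}$, and a non-trivial monomial of \multiwdeg $(0,\dots,0)$ is exactly a nonzero $\bm{\alpha}\in\NN^{n}$ lying in $\Ker(\mathbf{W})$. I would first dispatch the forward implication by contraposition, which is the easy half: if some nonzero $\bm{\alpha}\in\NN^{n}$ satisfies $\mathbf{W}\bm{\alpha}=0$, then every multiple $t\bm{\alpha}$ with $t\in\NN$ also satisfies $\mathbf{W}(t\bm{\alpha})=0$, yielding infinitely many distinct monomials of \multiwdeg $(0,\dots,0)$, so $\mathbf{W}$ is not size-bounded.

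For the converse — the substantive direction — I would argue by contradiction. Suppose $\Ker(\mathbf{W})$ contains no nonzero vector of $\NN^{n}$, yet some fixed $\mathbf{d}$ admits infinitely many exponents $\bm{\alpha}^{(1)},\bm{\alpha}^{(2)},\dots\in\NN^{n}$ with $\mathbf{W}\bm{\alpha}^{(i)}=\mathbf{d}$. Since only finitely many lattice points have bounded $\ell_{1}$-norm, the norms $|\bm{\alpha}^{(i)}|_{1}$ must tend to infinity. After normalization, the points $\bm{\alpha}^{(i)}/|\bm{\alpha}^{(i)}|_{1}$ lie in the compact standard simplex of $\RR^{n}_{\geq 0}$, so a subsequence converges to some $\bm{\beta}\geq 0$ with $|\bm{\beta}|_{1}=1$, in particular $\bm{\beta}\neq 0$. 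Passing to the limit in $\mathbf{W}\bm{\alpha}^{(i)}/|\bm{\alpha}^{(i)}|_{1}=\mathbf{d}/|\bm{\alpha}^{(i)}|_{1}\to 0$ gives $\mathbf{W}\bm{\beta}=0$, \ie a nonzero \emph{real} non-negative vector in $\Ker(\mathbf{W})$.

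It then remains to descend from this real solution to an integer one, and this is the step I expect to be the crux. The set $\{\bm{\beta}\in\RR^{n}_{\geq 0}:\mathbf{W}\bm{\beta}=0\}$ is a rational polyhedral cone, cut out by the integer equations $\mathbf{W}\bm{\beta}=0$ together with $\bm{\beta}\geq 0$; being nontrivial, it is generated by finitely many rational extreme rays, so clearing denominators produces a nonzero vector of $\NN^{n}$ in $\Ker(\mathbf{W})$. This contradicts the hypothesis and closes the argument. The delicate point throughout is precisely this interplay between the real cone (where compactness applies) and its integer points (where the hypothesis lives); everything else is routine.

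As an alternative for the converse that reuses the surrounding material, one could instead invoke a theorem of the alternative (Gordan/Stiemke): the absence of a nonzero non-negative real vector in $\Ker(\mathbf{W})$ is equivalent to the existence of $\mathbf{y}\in\QQ^{k}$ with $\mathbf{y}^{\!\top}\mathbf{W}$ entrywise positive, \ie to $\mathbf{W}$ being of positive type. Setting $c=\min_{j}(\mathbf{y}^{\!\top}\mathbf{W})_{j}>0$, any $\bm{\alpha}$ with $\mathbf{W}\bm{\alpha}=\mathbf{d}$ obeys $c\,|\bm{\alpha}|_{1}\leq \mathbf{y}^{\!\top}\mathbf{d}$, bounding the total degree and hence recovering size-boundedness from the implication already recorded before the statement.
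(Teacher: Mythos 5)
Your proof is correct, but for the substantive direction it takes a genuinely different route from the paper. The paper's converse is purely combinatorial: given infinitely many monomials of the same \multiwdeg $\mathbf{d}$, Dickson's lemma produces two of them, $b$ and $m$, with $b$ properly dividing $m$, and the quotient $m/b$ is immediately a non-trivial monomial of \multiwdeg $\mathbf{0}$ --- three lines, no passage through $\RR$. You instead go through the real picture: extract a nonzero non-negative real vector in $\Ker(\mathbf{W})$ by compactness of the simplex, then descend to a lattice point via the rationality of the polyhedral cone $\Ker(\mathbf{W})\cap\RR_{\geq 0}^{n}$ (or, in your alternative, invoke Gordan's theorem of the alternative). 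Both descents are sound; the point you correctly flag as the crux --- that a nonzero real non-negative kernel vector yields a nonzero one in $\NN^{n}$ --- is exactly what Dickson's lemma lets the paper sidestep, since comparability under divisibility hands you the non-negative integer difference directly. What your approach buys is the explicit link to the \emph{positive type} condition: the Gordan alternative shows that the absence of non-trivial monomials of \multiwdeg $\mathbf{0}$ is \emph{equivalent} to positive type, sharpening the paper's one-directional remark that positive type implies size-bounded. What it costs is heavier machinery (compactness, Minkowski--Weyl or a theorem of the alternative) where a finiteness lemma on monomial ideals suffices.
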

\begin{proof}
  Assume that there exists a non-trivial monomial $m_0$ with \multiwdeg $(0,\dots,0)$.
  Then $m_0^{k}$ has \multiwdeg $(0, \dots, 0)$ for all $k \in \NN$, and $\mathbf{W}$ is not size-bounded.

  Conversely, assume that $\mathbf{W}$ is not size-bounded, and let $\mathbf{d}$ be a \multiwdeg such that the set $M_{\mathbf{d}}$ of monomials of \multiwdeg $\mathbf{d}$ is infinite.
  By Dickson's lemma, there exists a finite subset $B_{\mathbf{d}} \subseteq M_{\mathbf{d}}$ such that all elements of $M_{\mathbf{d}}$ are divisible by an element of $B_{\mathbf{d}}$.
  Since $M_{\mathbf{d}}$ is infinite, there exists a monomial $m \in M_{\mathbf{d}} \setminus B_{\mathbf{d}}$, and by construction it is divisible by $b \in B_{\mathbf{d}}$.
  The quotient $m/b$ is non-trivial and has \multiwdeg $\mathbf{d}-\mathbf{d} = (0,\dots,0)$.
\end{proof}

\begin{remark}
  In particular, the two consequences of~\cite[Prop.~4.1.19]{Kreuzer-2005-ComputationalCommutativeAlgebra} are equivalent for free finitely-generated $A$-modules.
\end{remark}

In the rest of the paper, we will frequently restrict to matrices of weights such that $W_{1}$ is positive.

\subsection{Matrix-weighted degree-reverse-lexicographical ordering}
\label{sec:matrix-gr-reverse-lexic}
By analogy with the weighted and multihomogeneous cases, we define the following order, which will be a natural order for computing Gröbner bases.
Recall that the \emph{reverse-lexicographic ordering} is defined by
\begin{equation}
  \label{eq:6}
  \mathbf{X}^{\alpha} <_{\mathrm{revlex}} \mathbf{X}^{\beta} \iff
  \begin{cases}
    \alpha_{n} = \beta_{n},  \dots,
    \alpha_{j+1} = \beta_{j+1}\\
    \alpha_{j} > \beta_{j}.
  \end{cases}
\end{equation}

\begin{definition}
  Let $\mathbf{W} \in \ZZ^{k \times n}$ be a matrix of weights.
  The \emph{$\mathbf{W}$-matrix-weighted degree reverse lexicographical ordering} on $A$ is the order $<_{\mathbf{W}-\mathrm{Mdrl}}$, defined by 
  \begin{align}
    \label{eq:21}
    m <_{\mathbf{W}-\mathrm{Mdrl}} m'  
    % &\iff
    % \begin{cases}
    %   m <_{\mathbf{W}} m'\text{, or }\\
    %   \text{$m$ and $m'$ are incomparable for $<_{\mathbf{W}}$}
    %   \text{ and } m <_{\mathrm{revlex}} m'.
    % \end{cases} \\
   \iff
      \begin{cases}
        \pdeg_{\mathbf{W}}(m) <_{\mathrm{lex}} \pdeg_{\mathbf{W}}(m') \text{, or }\\
        %\pdeg_{\mathbf{W}}(m) = \pdeg_{\mathbf{W}}(m')
        \pdeg_{\mathbf{W}}(m) = \pdeg_{\mathbf{W}}(m')
        %\text{same matrix-weighted degree}
        \text{ and } m <_{\mathrm{revlex}} m'.
    \end{cases}
  \end{align}
\end{definition}

\begin{remark}
  The algorithm from Section~\ref{sec:general-matrix-f5} would also work with any other tie-breaker order instead of revlex. The revlex ordering is commonly considered to lead to the most efficient Gröbner basis computations in classical settings, and we stick to it for the remainder of the paper.
\end{remark}

\begin{proposition}
  Let $\mathbf{W}=(W_{1},\dots,W_{k})$ be a matrix of weights.
  Then the $\mathbf{W}$-Mdrl ordering is a total ordering, and it is compatible with the $\ZZ^{k}$-graduation induced by $\mathbf{W}$ (lexicographically ordered).
  
  Further assume that either of the following conditions holds:
  \begin{enumerate}
    \item $\mathbf{W}$ is non-negative and size-bounded;
    \item $W_{1}$ is positive.
  \end{enumerate}
  Then the $\mathbf{W}$-Mdrl ordering is a monomial ordering (i.e., for any nontrivial monomial $m$, $1 <_{\WMgrevlex} m$).
\end{proposition}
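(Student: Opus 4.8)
The plan is to treat the three assertions in turn. First, for the claim that $<_{\WMgrevlex}$ is a total order compatible with the graduation: by construction the order is the lexicographic combination of the map $m \mapsto \pdeg_{\mathbf{W}}(m)$ (compared with $<_{\mathrm{lex}}$ on $\ZZ^{k}$) followed by the reverse-lexicographic order as a tie-breaker. Since $<_{\mathrm{lex}}$ is a total order on $\ZZ^{k}$ and $<_{\mathrm{revlex}}$ is a total strict order on the set of all monomials, a routine check of irreflexivity, totality and transitivity of this lexicographic product shows that $<_{\WMgrevlex}$ is a total order. Graded compatibility is then immediate: whenever $\pdeg_{\mathbf{W}}(m) <_{\mathrm{lex}} \pdeg_{\mathbf{W}}(m')$ the first clause of the definition forces $m <_{\WMgrevlex} m'$, so the order refines the lex-ordered $\ZZ^{k}$-graduation.

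Next, I would record that $<_{\WMgrevlex}$ is always compatible with multiplication, independently of conditions (1)--(2). Given $m <_{\WMgrevlex} m'$ and a monomial $p$, multiplying by $p$ translates both $\mathbf{W}$-degrees by $\pdeg_{\mathbf{W}}(p)$; since $<_{\mathrm{lex}}$ on $\ZZ^{k}$ is invariant under translation and $<_{\mathrm{revlex}}$ is itself compatible with multiplication (the topmost index at which two exponent vectors differ, and the sign of the difference there, are unchanged by adding a common vector), both clauses of the definition are preserved, giving $mp <_{\WMgrevlex} m'p$. Hence, to conclude that $<_{\WMgrevlex}$ is a monomial order, it remains only to show $1 <_{\WMgrevlex} m$ for every nontrivial monomial $m$, and this is the sole place where hypotheses (1) or (2) are used.

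Finally I would establish $1 <_{\WMgrevlex} m$. Writing $m = \mathbf{X}^{\bm{\alpha}}$ with $\bm{\alpha} \in \NN^{n} \setminus \{0\}$, note that $\pdeg_{\mathbf{W}}(1) = (0,\dots,0)$. Under condition (2), positivity of $W_{1}$ gives $\deg_{W_{1}}(m) = W_{1}\cdot\bm{\alpha} > 0$, so already the first coordinate yields $\pdeg_{\mathbf{W}}(1) <_{\mathrm{lex}} \pdeg_{\mathbf{W}}(m)$ and we finish by the first clause. Under condition (1), non-negativity of $\mathbf{W}$ forces $\pdeg_{\mathbf{W}}(m) \ge (0,\dots,0)$ coordinatewise, so $\pdeg_{\mathbf{W}}(m)$ is either $(0,\dots,0)$ or has a strictly positive first nonzero coordinate, giving $\pdeg_{\mathbf{W}}(1) <_{\mathrm{lex}} \pdeg_{\mathbf{W}}(m)$ again. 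The remaining, and genuinely delicate, case is $\pdeg_{\mathbf{W}}(m) = (0,\dots,0)$: here $m$ would be tied with $1$ and the comparison would fall to pure reverse-lexicographic order, which is \emph{not} a monomial order (indeed $X_{n} <_{\mathrm{revlex}} 1$), so this case must be excluded outright rather than argued away. This is exactly where size-boundedness enters: Proposition~\ref{prop:size-bounded} guarantees that no nontrivial monomial has \multiwdeg $(0,\dots,0)$, so the case is vacuous. I expect this appeal to Proposition~\ref{prop:size-bounded} to be the only non-formal step, the rest being mechanical verification of the lexicographic-product and multiplicativity properties.
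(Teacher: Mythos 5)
Your proposal is correct and follows essentially the same route as the paper: condition (2) is handled by positivity of $W_{1}$ making the first coordinate of the $\mathbf{W}$-degree positive, and condition (1) by combining non-negativity with Proposition~\ref{prop:size-bounded} to rule out a nontrivial monomial of $\mathbf{W}$-degree $(0,\dots,0)$. The only difference is that you spell out the multiplicativity of the order and the lexicographic-product verification, which the paper dismisses as clear.
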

\begin{proof}
  Clearly, the $\mathbf{W}$-Mdrl ordering is a total ordering and compatible with the graduation.

  Now let $m$ be a nontrivial monomial with matrix-weighted degree $\mathbf{d}$, and assume that condition (1) holds.
  Since $\mathbf{W}$ is size-bounded, by Proposition~\ref{prop:size-bounded}, $\mathbf{d}$ is not zero.
  Furthermore, since $\mathbf{W}$ is non-negative, the non-zero coefficients of $\mathbf{d}$ are positive.
  In particular, $0 <_{\mathrm{lex}} \mathbf{d}$, and thus $1 <_{\WMgrevlex} m$.

  If instead condition (2) holds, then the first coefficient of $\mathbf{d}$ is positive, with the same conclusion.
\end{proof}

\begin{remark}
  \label{rmk:equiv-weights-orders}
  Let $\mathbf{W}$ and $\mathbf{W}'$ be two equivalent matrices of weights.
  Let $m$ and $m'$ be two monomials of $A$.
  It is not true in general that $m <_{\WMgrevlex} m'$ if and only if $m <_{\WMgrevlex[\mathbf{W}']} m'$.

  With $n=2$, take for example
  $\mathbf{W} =
    \begin{psmallmatrix}
      1 & 1 \\
      0 & 1 \\
    \end{psmallmatrix}$ and 
    $\mathbf{W}' =
    \begin{psmallmatrix}
      -1 & 0 \\
      0 & 1 \\
    \end{psmallmatrix}
    \cdot \mathbf{W}
    = \begin{psmallmatrix}
      -1 & -1 \\
      0 & 1 \\
    \end{psmallmatrix}$.
  Then $1 <_{\WMgrevlex} XY$ but $XY <_{\WMgrevlex[\mathbf{W}']} 1$.
  In particular, the order $<_{\WMgrevlex[\mathbf{W}']}$ is not even a monomial ordering, contrary to $<_{\WMgrevlex[\mathbf{W}]}$.
\end{remark}

Despite the remark, since equivalent matrices of weights define the same homogeneous components,
Gröbner bases \emph{for matrix-weighted homogeneous systems} do not depend on the choice of an equivalent matrix of weights.
\begin{proposition}
  \label{prop:equiv-w-gb}
  Let $\mathbf{W}$ and $\mathbf{W}'$ be equivalent matrices of weights.
  Let $I$ be a $\mathbf{W}$-matrix-weighted homogeneous ideal.
  Let $L$ (resp. $L'$) be the set of leading monomials of $I$ w.r.t. $<_{\WMgrevlex}$ (resp. $<_{\WMgrevlex[\mathbf{W}']}$).
  Then $L=L'$.

  In particular, if $G \subseteq I$ is a Gröbner basis of $I$ w.r.t. $<_{\WMgrevlex}$ then $G$ is a Gröbner basis of $I$ w.r.t $<_{\WMgrevlex[\mathbf{W}']}$.
\end{proposition}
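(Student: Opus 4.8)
The plan is to reduce everything to the behaviour of the two orders on a single homogeneous component, where both degenerate to the common revlex tie-breaker. First I would record the key consequence of Proposition~\ref{prop:charac-equiv-w}: a polynomial is $\mathbf{W}$-homogeneous if and only if it is $\mathbf{W}'$-homogeneous, so $I$ is also $\mathbf{W}'$-homogeneous and both $L$ and $L'$ are well defined; moreover, writing $\mathbf{W}=P\mathbf{W}'$, a monomial has $\mathbf{W}$-degree $\mathbf{d}$ exactly when it has $\mathbf{W}'$-degree $P^{-1}\mathbf{d}$, so the two gradings induce \emph{the same} decomposition of $A$ into homogeneous components, merely indexed differently.

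The central lemma I would prove is that for any nonzero $\mathbf{W}$-homogeneous polynomial $f$ one has $\LM_{<_{\WMgrevlex}}(f) = \LM_{<_{\WMgrevlex[\mathbf{W}']}}(f)$. Indeed, all monomials in the support of $f$ share a common $\mathbf{W}$-degree, so comparing any two of them with $<_{\WMgrevlex}$ falls through to the revlex tie-breaker; since $f$ is also $\mathbf{W}'$-homogeneous, the same monomials share a common $\mathbf{W}'$-degree and $<_{\WMgrevlex[\mathbf{W}']}$ likewise falls through to revlex. Hence both orders select the revlex-maximal monomial of $f$, which is a single fixed monomial. To deduce $L=L'$, I would pass from arbitrary elements of $I$ to homogeneous ones: given $f\in I\setminus\{0\}$, decompose $f=\sum_{\mathbf d} f_{\mathbf d}$ into $\mathbf{W}$-homogeneous components; each $f_{\mathbf d}\in I$ because $I$ is homogeneous, and since $<_{\WMgrevlex}$ is compatible with the grading (lexicographically ordered), $\LM_{<_{\WMgrevlex}}(f)$ is the leading monomial of the component of lex-largest $\mathbf{W}$-degree. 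Thus $L=\{\LM_{<_{\WMgrevlex}}(h):h\in I \text{ homogeneous},\,h\neq 0\}$, and symmetrically for $L'$; the lemma then gives $L=L'$ term by term.

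For the Gröbner basis statement, I would use that $L=L'$ means the two leading ideals of $I$ coincide, $\langle L\rangle=\langle L'\rangle$. For a homogeneous Gröbner basis $G$ with respect to $<_{\WMgrevlex}$ the lemma applies to each $g\in G$, giving $\LM_{<_{\WMgrevlex}}(g)=\LM_{<_{\WMgrevlex[\mathbf{W}']}}(g)$, whence $\langle\LM_{<_{\WMgrevlex[\mathbf{W}']}}(G)\rangle=\langle\LM_{<_{\WMgrevlex}}(G)\rangle=\langle L\rangle=\langle L'\rangle$, which is exactly the assertion that $G$ is a Gröbner basis for $<_{\WMgrevlex[\mathbf{W}']}$.

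The main obstacle is making the reduction to homogeneous elements airtight. For $L=L'$ this is painless, because leading monomials of a homogeneous ideal always already arise from its homogeneous elements. For the Gröbner basis consequence, however, one genuinely needs that the basis elements are homogeneous: for a non-homogeneous $g\in I$ the two orders may place the leading monomial in \emph{different} homogeneous components (this is precisely the phenomenon behind Remark~\ref{rmk:equiv-weights-orders}, where $<_{\WMgrevlex[\mathbf{W}']}$ need not even be a monomial ordering), so $\LM_{<_{\WMgrevlex}}(g)$ and $\LM_{<_{\WMgrevlex[\mathbf{W}']}}(g)$ can differ. This is not a problem in our setting because the bases produced by the algorithms of Section~\ref{sec:general-matrix-f5} are $\mathbf{W}$-homogeneous; alternatively, one may replace $G$ by the (finite) set of homogeneous components of its elements, which is again a Gröbner basis for $<_{\WMgrevlex}$ and to which the lemma applies directly.
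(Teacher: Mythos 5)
Your proof of $L=L'$ is essentially the paper's: reduce to a $\mathbf{W}$-homogeneous element of $I$ (possible since $I$ is homogeneous and the order is graded), observe that on such an element both orders degenerate to the revlex tie-breaker, and use Proposition~\ref{prop:charac-equiv-w} to transfer homogeneity to $\mathbf{W}'$. The only stylistic difference is that the paper argues monomial-by-monomial ($m\in L\Rightarrow m\in L'$, then symmetry) rather than via a stated lemma on $\LM$ of homogeneous polynomials; the content is identical.

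Where you genuinely depart from the paper is the final ``in particular'' claim, and your caution there is warranted. The paper dismisses it as immediate ``since Gröbner bases are characterized by their leading terms'', which tacitly assumes $\LM_{<_{\WMgrevlex}}(g)=\LM_{<_{\WMgrevlex[\mathbf{W}']}}(g)$ for every $g\in G$; as you point out, this can fail when $g$ is not $\mathbf{W}$-homogeneous, and then the statement itself breaks. A concrete instance: take $\mathbf{W}=\begin{psmallmatrix}1&1&1\\0&1&2\end{psmallmatrix}$ and the equivalent $\mathbf{W}'=\begin{psmallmatrix}1&0\\0&-1\end{psmallmatrix}\cdot\mathbf{W}=\begin{psmallmatrix}1&1&1\\0&-1&-2\end{psmallmatrix}$ (both with positive first row, hence both orders are monomial orderings), and let $I=\langle XZ,\,YZ\rangle$, which is $\mathbf{W}$-homogeneous. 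Then $G=\{XZ,\ XZ+YZ\}$ is a Gröbner basis for $<_{\WMgrevlex}$ (leading monomials $XZ$ and $YZ$, since $\Mdeg_{\mathbf{W}}(YZ)=(2,3)>_{\mathrm{lex}}(2,2)=\Mdeg_{\mathbf{W}}(XZ)$), but for $<_{\WMgrevlex[\mathbf{W}']}$ both elements have leading monomial $XZ$, so $G$ is not a Gröbner basis for that order even though $L=L'$. So the corollary needs exactly the hypothesis you supply — that $G$ consists of $\mathbf{W}$-homogeneous polynomials (harmless in context, since the algorithms of Section~\ref{sec:general-matrix-f5} produce such bases, and any Gröbner basis of a homogeneous ideal can be replaced by the homogeneous components of its elements). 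Your write-up is correct and, on this point, more careful than the paper's.
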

\begin{proof}
  Let $m \in L$, and let $f \in I$ such that $m = \lm(f)$ w.r.t. $<_{\WMgrevlex}$.
  Since $I$ is a $\mathbf{W}$-matrix-weighted homogeneous ideal, all matrix-weighted homogeneous components of $f$ lie in $I$, so w.l.o.g. we can assume that $f$ is matrix-weighted homogeneous.
  In particular, all the terms in its support have the same $\mathbf{W}$-matrix-weighted degree, so $m$ is also the leading monomial of $m$ w.r.t. the revlex order.
  And since $\mathbf{W}$ and $\mathbf{W}'$ are equivalent, $f$ is also $\mathbf{W}'$-matrix-weighted homogeneous (Proposition~\ref{prop:charac-equiv-w}), and by the same argument, $m$ is the leading monomial w.r.t. $<_{\WMgrevlex[\mathbf{W}']}$.
  So $m \in L'$.
  We conclude that $L \subseteq L'$, and by symmetry $L = L'$.

  The consequences on the Gröbner bases is immediate since Gröbner basis are characterized by their leading terms.
\end{proof}

\begin{remark}
  With the hypotheses of the proposition, one can observe that the Gröbner basis $G$ is in fact a Gröbner basis for the revlex ordering, and conversely, that any Gröbner basis w.r.t. revlex and comprised only of $\mathbf{W}$-homogeneous polynomials is a Gröbner basis w.r.t. $<_{\WMgrevlex}$ and $<_{\WMgrevlex[\mathbf{W}']}$.
  In that sense, the only observation of the proposition is that $I$ must be $\mathbf{W}'$-homogeneous.

  This is similar to the classical homogeneous case: for homogeneous systems, grevlex Gröbner bases are essentially revlex Gröbner bases.

  The use of graded orders matters when discussing inhomogeneous systems (see the next remark and Section~\ref{sec:affine-systems})
\end{remark}

\begin{remark}
  If $k=n$, there is only one monomial at each $\mathbf{W}$-matrix-weighted degree, and the $\WMgrevlex$ ordering is defined only in terms of the matrix-weighted degrees.
  More generally, it is a classical fact~\citep{Robbiano-1985-TermOrderingsPolynomial} that for any monomial ordering $<$ on $A$, there exists a matrix of weights $\mathbf{W} \in \RR^{n \times n}$ such that
  \begin{equation}
    \label{eq:50}
    m \,{<}\, m' \iff \pdeg_{\mathbf{W}}(m) <_{\mathrm{lex}} \pdeg_{\mathbf{W}}(m') \iff m <_{\WMgrevlex} m'.
\end{equation}
\end{remark}

\section{Matrix-F5 algorithm}
\label{sec:general-matrix-f5}

\subsection{General Matrix-F5 algorithm for structured systems}
\label{sec:generic-matrix-f5}

We will present different variants of the Matrix-F5 algorithm adapted to the structure of matrix-weighted systems.
To separate the handling of the structure from the presentation of the algorithm, we start by presenting a general form of the Matrix-F5 algorithm allowing us to take advantage of different structures.

\newcommand{\Mac}{\textup{\texttt{Mac}}}
\begin{algorithm}
  \Input{$F=(f_{1},\dots,f_{r})$ a system of structured polynomials; $\dmax$ a bound}
  \Output{$G$ a Gröbner basis of $F$, truncated at the bound $\dmax$}
  \texttt{Steps} $\leftarrow \textsf{AlgoSteps}(F,\dmax)$\;
  $G \leftarrow \{(f_{i},(1,i)) : i \in \{1,\dots,r\}\}$\;
  %$\Mac \leftarrow ()$\;
  \For{\label{algoline:big-loop-beg}$(\textbf{d},M,S)$ in \texttt{Steps}}{
    $\Mac_{\mathbf{d}} \leftarrow$ matrix with columns indexed by $M$ and $0$ rows\;
    \For{$i \in \{1,\dots,r\}$}{
      $S_{i} \leftarrow \{m \text{ for $(m,j) \in S$ with $j=i$}\}$\;
      \For{$m \in S_{i}$ in increasing order}{
        % \If{$m = 1$}{
        % Add the coefficient vector of $f_{i}$ to the bottom of $\Mac_{d}$
        % }
        %   \If{$(m,i)$ can be eliminated by the F5 or Syzygy criterion}{ % \Comment{See Section~\ref{sec:optimizations}}
        %   Pass
        % }
        %   \Else{
        \If{$(m,j)$ can be eliminated by a criterion (Section~\ref{sec:optimizations})\label{algoline:criteria}}
        {Pass}
        \Else{
        Find an element $g \in G$ with signature $(m',i)$ such that $m'$ divides $m$
        \label{algoline:prev-elt}\;
        Add the coefficient vector of $\frac{m}{m'}g$ to the bottom of $\Mac_{d}$, with signature $(m,i)$
        \label{algoline:new-row}
        % }
      }
      }
      Reduce $\Mac_{\mathbf{d}}$ to row-echelon form, only reducing below the pivots and without permutations
      \label{algoline:echelon}\;
      Add to $G$ all polynomials corresponding to a new pivot in $\Mac_{\mathbf{d}}$ and which are not reducible by $G$, with the signature of their row
      \label{algoline:new-gb}\;
    }
  }
  \Return{$G$}
  \caption{General Matrix-F5 algorithm for structured systems}
  \label{algo:matrixF5-phomo}
\end{algorithm}

For simplicity, we defer the presentation of the signature criteria at line~\ref{algoline:criteria}, and in particular of the F5 criterion, to Section~\ref{sec:optimizations}.

The algorithm as presented is very close to the original Matrix-F5 algorithm, but with an additional degree of freedom in the order for processing the new elements.
That difference is delegated to the subroutine \textsf{AlgoSteps} which will be instantiated to obtain different variants of the algorithm (for an example of instantiation, see Algorithm~\ref{algo:matrixF5-phomo} and the accompanying comments).
For now, we only give a specification for this routine:
it takes the same input as the main algorithm, and it returns a list of ``steps'' to process, each of which is a triple $(\mathbf{d},M,S)$ where:
\begin{itemize}
  \item $\mathbf{d}$ is some ``degree'' for the step;
  \item $M$ is the set of all monomials of ``degree'' $\mathbf{d}$  in $A$;
  \item $S$ is a set of pairs $(m,i) \in \Mon(A) \times \{1,\dots,m\}$ such that $mf_{i}$ has ``degree'' $\mathbf{d}$.
\end{itemize}
The notion of ``degree'' can for example be the degree given by a $\NN$-graduation on $A$ (standard degree, weighted degree), a $\NN^{k}$-graduation (multidegree, matrix-weighted degree), or even a graduation on a subalgebra of $A$ containing all the polynomials in $F$ (sparse degree~\citep{Faugere-2014-sparse}, $G$-degree~\citep{Faugere-2013-invariant}).
We assume that the ``degrees'' are totally ordered by an order $<$, and that it is a well-order.
This assumption holds for all the previous examples, except for the $G$-degree. For the $G$-degree, the authors made it into an ordering degree by first grading by total degree.
% We say that $\mathbf{d} \simeq \mathbf{d}'$ if $\mathbf{d}$ and $\mathbf{d}'$ are either equal or incomparable.

The pairs $(m,i)$ are called \emph{signatures}~\citep{Faugere-2002-F5}, and with the notations above, we say that the signature $(m,i)$ has ``degree'' $\mathbf{d}$.
Going in depth in the theory of signature Gröbner bases is out of the scope of this paper, so we only recall the basic requirements to prove that the algorithm and the criteria in Section~\ref{sec:optimizations} are correct.
The interested reader will find more details in~\cite{Faugere-2002-F5, Roune-2012-PracticalGrobnerBasis, Arri-2010-F5CriterionRevised, Gao-2015-new-framework-for, Eder-2017-survey}.

In this paper, signatures are ordered with the ``degree over position over term'' order: let $(m,i)$ and $(m,i')$ be two signatures with respective ``degree'' $\mathbf{d}$ and $\mathbf{d}'$, 
\begin{equation}
  \label{eq:43}
  (m,i) \prec (m',i') \iff
  \begin{cases}
    \mathbf{d} < \mathbf{d}' \\
    \text{or } \mathbf{d} = \mathbf{d}' \text{ and } i < i' \\
    \text{or } \mathbf{d} = \mathbf{d}' \text{ and } i = i' \text{ and } m < m'.
  \end{cases}
\end{equation}
Note that by assumption the ``degrees'' are totally well ordered, so the signature ordering is a total order and a well order.
%Note that signatures need not be totally ordered if the ``degrees'' are not.

For $f \in \langle f_{1},\dots,f_{r}\rangle$, we say that $f$ has signature $(m,i)$ if it can be written
\begin{equation}
  \label{eq:44}
  f = \alpha m f_{i} + \sum_{j\in \{1,\dots,i\}} \sum_{\mu \in \Mon(A)} \alpha_{\mu,j}\mu f_{j}
\end{equation}
with $\alpha, \alpha_{\mu,j} \in K$, $\alpha \neq 0$, and for all signature $(\mu,j)$ with $\alpha_{\mu,j} \neq 0$, $(\mu,j) \prec (m,i)$.

Let $(m,i)$ be a signature.
% We denote by:
% \begin{itemize}
%   \item $V_{(m,i)} = \Span(\mu f_{j} : (\mu,j) \preceq (m,i))$ the $K$-vector space spanned by all polynomials which have a signature at most $(m,i)$;
%   \item $V_{\prec (m,i)} = \Span(\mu f_{j} : (\mu,j) \precneq (m,i))$ the $K$-vector space spanned by all polynomials which have a signature strictly smaller than $(m,i)$.
% \end{itemize}
For all signatures $(m,i)$, we denote by $V_{(m,i)}$ (resp. $V_{\prec (m,i)}$) the $K$-vector space spanned by all polynomials which have a signature smaller than, equal or incomparable to $(m,i)$ (resp. strictly smaller than $(m,i)$).
Concretely, $V_{(m,i)} = \Span(\mu f_{j} : (\mu,j) \preceq (m,i))$ and $V_{\prec (m,i)} = \Span(\mu f_{j} : (\mu,j) \precneq (m,i))$.

\begin{remark}
  A given polynomial can have several signatures.
  In the algorithm however, polynomials are computed and stored in memory with a signature, which allows us to talk about \emph{the} signature of a polynomial.
  An alternative point of view is that since the signature ordering is a well order, a given polynomial in the ideal admits a minimal signature, and we shall prove in Section~\ref{sec:multi-mathbfw-homog} that in our case (as in all known cases) this is the one computed in the algorithm.
\end{remark}

The reason for restricting reductions in line~\ref{algoline:echelon} is to maintain this correspondence between signatures and rows in the echelon form.
An alternative is to modify the algorithm as follows: 
\begin{itemize}
  \item at line~\ref{algoline:echelon}, reduce to row echelon form without restrictions on the pivots
  \item at line~\ref{algoline:new-gb}, the polynomial does not necessarily match the signature of its row, so only store the polynomial in $G$, without any signature
  \item at line~\ref{algoline:prev-elt}, pick the only suitable polynomial for which a signature is known, that is $g=f_{i}$ with signature $(1,i)$; in other words, form the row corresponding to $mf_{i}$.
\end{itemize}
With that modified algorithm, we end up reducing $mf_{i}$ multiple times, but on the other hand, it allows the algorithm to use fast linear algebra for the computation of the row echelon form.

The correctness of both variants of the algorithm will be justified in Section~\ref{sec:multi-mathbfw-homog}.

\begin{remark}
  A key feature of the algorithm is that the list of steps to process is computed at the very start, and does not depend on polynomials computed throughout the algorithm.
  With the variant disregarding the signatures at each step, the matrices are built using only the input, and intermediate results only affect subsequent steps \emph{via} the criteria.

  This is not surprising if one thinks of the algorithm as a refined version of Lazard's algorithm which builds the full Macaulay matrix of the ideal and reduces it.
  The refinement in the Matrix-F5 algorithm is to use signatures to exclude some rows of the Macaulay matrix, and graduations to cut it into smaller independent matrices.
\end{remark}

\subsection{Matrix-weighted homogeneous Matrix-F5 algorithm}
\label{sec:multi-mathbfw-homog}

We now specialize Algorithm~\ref{algo:matrixF5-phomo} to weighted-matrix-weighted gradings.
Let $\mathbf{W}$ be a matrix of weights such that $W_{1}$ is positive.
Let $\dmax \in \NN$.
The algorithm will compute a Gröbner basis of $F$ w.r.t. the $\WMgrevlex$ ordering, truncated at $W_{1}$-degree $\dmax$.

This is done by letting the routine \textsf{AlgoSteps} output steps following the grading.
Thus, the ``degree'' $\mathbf{d}$ is chosen to be the $\mathbf{W}$-matrix-weighted degree, and we consider all such $\mathbf{W}$-degrees with $d_{1} \leq \dmax$.
For each step, corresponding to a $\mathbf{W}$-degree $\mathbf{d}$, the set of monomials and signatures is exactly those which realize the $\mathbf{W}$-degree $\mathbf{d}$.
Matrix-weighted degrees are ordered using the lexicographic order so that signatures are totally ordered. We will however see (Proposition~\ref{prop:parallel}) that the algorithm does not need to follow this order.

\begin{algorithm}
  \Input{$F=(f_1,\dots,f_{r})$ a system of $\mathbf{W}$-homogeneous polynomials with respective matrix-weighted degree $\mathbf{d}_{1},\dots,\mathbf{d}_{\mathbf{m}}$; $\dmax \in \NN$}
  \Output{$\texttt{Steps}$ a list of all steps to process to compute a Gröbner basis up to $W_{1}$-degree $\dmax$}
  $\texttt{Steps} \leftarrow []$\;
  %$\mathcal{M} \leftarrow ()$\;
  \For{$d \in \{1,\dots,\dmax\}$}{
    $M \leftarrow$ set of monomials in $A$ with $W_{1}$-degree $d$\;
    % $\texttt{degs} \leftarrow \emptyset$\;
    % \For{$m \in M$}{
    %   $\mathbf{d} \leftarrow \deg_{\mathbf{W}}(m)$\;
    %   \If{$\mathbf{d} \notin \texttt{degs}$}{
    %     Add $\mathbf{d}$ to $\texttt{degs}$\;
    %     $\mathcal{M}_{\mathbf{d}} \leftarrow \emptyset$
    %   }
    %   Add $m$ to $\mathcal{M}_{\mathbf{d}}$ \;
    % }
    $\texttt{Degs} \leftarrow \{\Mdeg_{\mathbf{W}}(m) : m \in M\}$\;
    \For{$\mathbf{d} \in \texttt{Degs}$ in increasing order}{
      $\texttt{Mons} \leftarrow % \mathcal{M}_{\mathbf{d}}
      $ set of monomials of $\mathbf{W}$-degree $\mathbf{d}$ in $M$\;
      $\texttt{Sigs} \leftarrow \left\{ (m,i) : i \in \{1,\dots,r\}, \text{ $m$ monomial with } \Mdeg_{\mathbf{W}}(m) =\mathbf{d}-\mathbf{d}_{i} \right\}$\;
      Append $(\mathbf{d},\texttt{Mons},\texttt{Sigs})$ to $\texttt{Steps}$\;
    }
  }
  \Return{$\texttt{Steps}$}
  \caption{AlgoSteps for matrix-weighted homogeneous systems}
  \label{algo:algosteps-matrixhomo}
\end{algorithm}

\begin{remark}
  If $W_{1}$ is not positive but $\mathbf{W}$ is size-bounded, it is possible to design a suitable \textsf{AlgoSteps} function, by using a linear combination of the weights instead of $W_{1}$.
  Functionally, the algorithm is the same, for an equivalent system of weights.
  
  If $\mathbf{W}$ is not size-bounded, the algorithm as presented will not work, because the \textsf{AlgoSteps} subroutine will not terminate.
  However, algorithms such as F4 or F5, which build matrices at each step dynamically, based on the support of the polynomials being considered, would be able to follow a similar structure.
\end{remark}

We now prove the main loop invariants for the algorithm without criteria, which ensure that the algorithm is correct.
The proof is a transposition of the result in the homogeneous case~\citep{Bardet-2015}, and we only outline the arguments.
\begin{theorem}[Loop invariants]
  Let $\mathbf{d}$ be a matrix-weighted degree with $d_{1} \leq \dmax$, and $(m,i)$ be a signature of matrix-weighted degree $\mathbf{d}$ added in $\Mac_{\mathbf{d}}$.
  \begin{enumerate}
    \item the polynomial $\frac{m}{m'}g$ added at line~\ref{algoline:new-row} has signature $(m,i)$;
    \item the polynomials associated to the rows above and up to that signature, together with the polynomials associated to the rows in the previous matrices, span $V_{(m,i)}$;
    \item when starting step $\mathbf{d}$ (line~\ref{algoline:big-loop-beg}), any polynomial with signature at most $(m,i)$ with $(m,i)$ having matrix-weighted degree $< \mathbf{d}$ is reducible modulo $G$.
  \end{enumerate}
\end{theorem}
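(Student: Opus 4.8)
The plan is to prove the three invariants simultaneously by induction on the signature $(m,i)$, using that $\prec$ restricted to the processed signatures is a well-order: since $W_{1}$ is positive, every monomial has non-negative $W_{1}$-degree and each $W_{1}$-degree $\le\dmax$ carries finitely many monomials, so the set of signatures of $W_{1}$-degree at most $\dmax$ is finite and $\prec$-well-ordered. The base case is the initialization $G=\{(f_{i},(1,i))\}$, and the inductive step treats one signature at a time in the order in which the algorithm processes them, which by construction of Algorithm~\ref{algo:algosteps-matrixhomo} coincides with increasing $\prec$ (position-by-position, monomial-by-monomial, within each step of increasing $\mathbf{W}$-degree).

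The technical lemma underlying everything is that $\prec$ is compatible with monomial multiplication: if $(\mu,j)\prec(\nu,j')$ then $(\tau\mu,j)\prec(\tau\nu,j')$ for every monomial $\tau$. This follows by checking the three cases of \eqref{eq:43}: translation by $\Mdeg_{\mathbf{W}}(\tau)$ preserves the lexicographic comparison of matrix-weighted degrees, leaves positions untouched, and revlex is itself multiplication-compatible. Invariant (1) is then immediate: the element $g$ found at line~\ref{algoline:prev-elt} has signature $(m',i)$, so $g=\alpha m' f_{i}+(\text{terms of signature}\prec(m',i))$ with $\alpha\neq0$; multiplying by $m/m'$ sends the head to $\alpha m f_{i}$ and, by the compatibility lemma, sends every lower term to signature $\prec(m,i)$, so $\tfrac{m}{m'}g$ has signature exactly $(m,i)$.

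For invariant (2) I argue by the same induction. Writing $V_{(m,i)}=V_{\prec(m,i)}+K\,m f_{i}$, the inductive hypothesis gives that $V_{\prec(m,i)}$ is spanned by the rows of strictly smaller signature in the current and previous matrices, whose row spaces are unchanged by the permutation-free, below-pivot reduction at line~\ref{algoline:echelon}. By invariant (1) the new row equals $\alpha m f_{i}$ plus a combination of polynomials of signature $\prec(m,i)$, the latter lying in $V_{\prec(m,i)}$; hence $m f_{i}$ lies in the span of the rows up to $(m,i)$, giving one inclusion, while the reverse is clear since every computed row has signature $\preceq(m,i)$. This also shows that the element sought at line~\ref{algoline:prev-elt} always exists, as $f_{i}$ itself qualifies.

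The crux is invariant (3), which carries the actual Gröbner-basis content and must interface the grading with the echelon structure. Because $<_{\WMgrevlex}$ refines the $\mathbf{W}$-grading, the leading monomial of any $p\in V_{(m,i)}$ equals that of its top nonzero $\mathbf{W}$-homogeneous component $p_{\mathbf{e}}$, whose degree $\mathbf{e}$ satisfies $\mathbf{e}<\mathbf{d}$ since $(m,i)$ has degree $<\mathbf{d}$. As $\mathbf{e}<\mathbf{d}$ strictly, \emph{every} degree-$\mathbf{e}$ product $\mu f_{j}$ has signature $\prec(m,i)$, so by invariant (2) the whole degree-$\mathbf{e}$ component of the ideal is spanned by the rows of $\Mac_{\mathbf{e}}$, a matrix already built and reduced before step $\mathbf{d}$. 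Being in echelon form with distinct pivots, $\lm(p_{\mathbf{e}})$ is one of its pivot monomials, and the rule at line~\ref{algoline:new-gb} adds to $G$ precisely those pivots not already reducible by $G$; hence every pivot, and thus $\lm(p)$, is divisible by some $\lm(g)$ with $g\in G$. The delicate point is to confirm that these pivots are captured by $G$ \emph{before} step $\mathbf{d}$ rather than only retroactively, which is exactly where the signature-preservation from invariant (1) and the restricted reduction at line~\ref{algoline:echelon} are used; modulo this bookkeeping the argument is the transposition of the homogeneous case in~\citep{Bardet-2015}, with matrix-weighted degrees and the lexicographic order on $\ZZ^{k}$ playing the role of the total degree.
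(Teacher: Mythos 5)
Your proposal is correct and follows essentially the same route as the paper's proof: induction on signatures using the compatibility of $\prec$ with monomial multiplication for invariant (1), the double inclusion $\Span(V_{\prec(m,i)},h)=V_{(m,i)}$ for invariant (2), and the Macaulay-matrix characterization of leading monomials (all products $\mu f_j$ of each degree $\mathbf{e}<\mathbf{d}$ appear as rows of $\Mac_{\mathbf{e}}$, so every leading monomial of the ideal at that degree is a pivot captured by $G$) for invariant (3). You spell out a few points the paper leaves implicit (well-orderedness of the processed signatures, existence of the element at line~\ref{algoline:prev-elt}, the reduction to the top $\mathbf{W}$-homogeneous component), but the argument is the same.
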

\begin{proof}
  For the first invariant, we proceed by induction on the signature $(m,i)$, the base case being clear.
  Let $h = \frac{m}{m'}g$ be the polynomial added to the matrix (line~\ref{algoline:new-row}).
  Since $g'$ was added with signature $(m',i)$, by induction hypothesis it has signature $(m',i)$, and therefore it can be written as
  \begin{equation}
    \label{eq:22}
    g' = \alpha m'f_{i} + \sum_{(\mu',j) \prec (m',i)} \alpha_{\mu',j} \mu' f_{j},
  \end{equation}
  with $\alpha, \alpha_{\mu,j} \in K$ and $\alpha \neq 0$.
  As a consequence,
  \begin{equation}
    \label{eq:30}
    h = \frac{m}{m'}g = \alpha m f_{i} + \sum_{(\mu',j) \prec (m',i)} \alpha_{\mu',j} \frac{m}{m'}\mu' f_{j}.
  \end{equation}
  The order on the signatures is compatible with the monomial product, so $(\mu',j) \prec (m',i)$ iff $(\frac{m}{m'}\mu',j) \prec (m,i)$. Therefore $h$ has signature $(m,i)$.

  For the second invariant, we proceed again by induction on the signature, the base case being clear.
  Assume that the induction hypothesis is proved for all rows processed before $(m,i)$. In particular, since the signatures are totally ordered, all rows processed so far span $V_{\prec (m,i)}$.
  %Let $h=\frac{m}{m'}g$ be the polynomial added to the matrix (line~\ref{algoline:new-row}).
  We will prove by double inclusion that $\Span(V_{\prec (m,i)},h) = V_{(m,i)}$.
  Since $h$ has signature $(m,i)$, $\Span(V_{\prec (m,i)},h) \subseteq V_{(m,i)}$.

  For the reverse inclusion, write
  \begin{equation}
    \label{eq:15}
    h = \alpha mf_{i} + \sum_{(\mu,j) \prec (m,i)} \alpha_{\mu,j} \mu f_{j},
  \end{equation}
  with $\alpha, \alpha_{\mu,j} \in K$ and $\alpha \neq 0$.
  The second summand lies in $V_{\prec (m,j)}$, therefore $mf_{i} \in \Span(V_{\prec (m,i)},h)$.
  Therefore $\Span(V_{\prec (m,i)},h) = V_{(m,i)}$, and by induction, the second invariant is proved.

  The third invariant is a consequence of the characterization of Gröbner bases using Macaulay matrices: since we pick all the possible signatures in $S_{\mathbf{d}}$ for each $\mathbf{d}$, by the second invariant, at the end of the loop, the rows of $\Mac_{\mathbf{d}}$ span the space of polynomials of matrix-weighted degree $\mathbf{d}$, and they have distinct leading monomials.
  This implies that all the leading monomials of $\langle F \rangle$ with a $\mathbf{W}$-degree of $\mathbf{d}$ are a pivot in $\Mac_{\mathbf{d}}$.
  Since again this is done for every $\mathbf{d}$, all the leading monomials of $\langle F \rangle$ with a $\mathbf{W}$-degree of at most $\mathbf{d}$ are a pivot in one of the matrices computed so far.
  By construction of $G$, it implies that every polynomial with of a $\mathbf{W}$-degree at most $\mathbf{d}$ is reducible by $G$.
\end{proof}

\begin{remark}
  \label{rmk:choice-prev-step}
  In particular, the correctness of the algorithm does not depend on the choice of $g$ at line~\ref{algoline:prev-elt}.
\end{remark}

\subsection{Optimizations}
\label{sec:optimizations}

We present here a few optimizations that are available for the algorithm.
First, if $W_{1}$ is a positive system of weights, one can parallelize the computations like in the multihomogeneous case.

\begin{proposition}
  \label{prop:parallel}
  Let $\mathbf{W}$ be a matrix of weights such that $W_1$ is positive.
  Let $\mathbf{d}=(d_1,\dots,d_{k})$ and $\mathbf{d}'=(d'_{1},\dots,d'_{k})$ be two distinct matrix-weighted degrees, with $d_1=d'_{1}$.
  Let $M_{\mathbf{d}}$ and $S_{\mathbf{d}}$ (resp. $M_{\mathbf{d}'}$ and $S_{\mathbf{d}'}$) the set of monomials and signatures for the step $\mathbf{d}$ (resp. $\mathbf{d}'$).
  Then no monomial of $M_{\mathbf{d}}$ divides any monomial of $M_{\mathbf{d}'}$, and no signature of $S_{\mathbf{d}}$ divides any signature of $S_{\mathbf{d}'}$.
\end{proposition}
\begin{proof}
  For the sake of contradiction, let $\mu \in M_{\mathbf{d}}$ and $\mu' \in M_{\mathbf{d}'}$ be such that there exists  a monomial $\nu$ with $\mu' = \nu \mu$.
  Since the monomials have distinct matrix-weighted degree, $\mu \neq \mu'$, and therefore $\nu \neq 1$.
  Since $W_1$ is a positive system of weights, $\wdeg_{W_1}(\nu) > 0$.
  So $d'_{1} = \wdeg_{W_1}(\mu') = \wdeg_{W_1}(\nu) + \wdeg_{W_1}(\mu) > \wdeg_{W_1}(\mu) = d_1$, which contradicts the assumption.
  The proof is the same for the signatures.
\end{proof}

The proposition implies that for a given value of $d_1$, the steps are independent from one another, and can be performed in parallel.

The next few optimizations concern signatures.
First, we state the F5 criterion for matrix-weighted homogeneous systems.
% Intuitively, this criterion aims at avoiding useless reductions to zero by an observation on the signatures.
% Let $(m,i)$ be a signature appearing in the course of the algorithm, and assume for simplicity that we choose it to represent $mf_{i}$ at line~\ref{algoline:new-row}.
% Now assume that $m$ is the leading term of a polynomial in $\langle f_1,\dots, f_{i-1}\rangle$.
% It means that
% for some polynomials $r, p_1, \dots, p_{i-1}$, and with all the terms in $r$ smaller than $m$.
% So
% \begin{equation}
%   \label{eq:29}
%   mf_{i} = f_{i}p_1f_1 + \dots + f_{i}p_{i-1}f_{i-1} + rf_{i}
% \end{equation}
% and the row with signature $(m,i)$ is linear combination of rows appearing above in the matrix, either with signature $(\mu, j)$ with $j < i$, or with signature $(\mu,i)$ with $\mu < m$.

% The proof of the criterion that we give below reduces the problem to the known case of weighted homogeneous systems. It gives a concise proof avoiding technicalities, but it also hides the intuition.

\begin{theorem}[F5 criterion~\citep{Faugere-2002-F5}]
  \label{thm:F5-crit}
  Let $(m,i)$ be a signature appearing in the course of the algorithm.
  If $m$ lies in the ideal spanned by $f_1,\dots,f_{i-1}$, then the row inserted with signature $(m,i)$ is a linear combination of the rows inserted above.
\end{theorem}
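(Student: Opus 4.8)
The plan is to reduce the claim to a single vector-space inclusion. By the second loop invariant, the polynomials attached to the rows inserted strictly before $(m,i)$ span exactly $V_{\prec (m,i)}$. The row inserted with signature $(m,i)$ is a polynomial $h = \frac{m}{m'}g$ which, by the first loop invariant, has signature $(m,i)$; concretely $h = \alpha m f_i + s$ with $\alpha \in K \setminus \{0\}$ and $s \in V_{\prec (m,i)}$. Since $h$ and $\alpha m f_i$ differ by an element of $V_{\prec (m,i)}$, it suffices to prove that $m f_i \in V_{\prec (m,i)}$: then $h \in V_{\prec (m,i)} = \Span(\text{rows above})$, which is precisely the asserted linear dependence.

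To establish $m f_i \in V_{\prec (m,i)}$, I would use the hypothesis $m \in \langle f_1,\dots,f_{i-1}\rangle$ to write $m = \sum_{j<i} g_j f_j$. Because $m$ is a monomial, hence $\mathbf{W}$-homogeneous, I can discard the homogeneous components of the $g_j$ of the wrong $\mathbf{W}$-degree and assume each $g_j$ is $\mathbf{W}$-homogeneous with $\Mdeg_{\mathbf{W}}(g_j) = \Mdeg_{\mathbf{W}}(m) - \mathbf{d}_j$. Multiplying by $f_i$ gives $m f_i = \sum_{j<i}(g_j f_i)\,f_j$, and expanding each $g_j f_i$ on the monomial basis as $\sum_{\nu}\beta_{\nu,j}\,\nu$ yields $m f_i = \sum_{j<i}\sum_{\nu}\beta_{\nu,j}\,(\nu f_j)$, a representation of $m f_i$ purely in terms of generators $f_j$ with index $j<i$.

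The crux is the degree bookkeeping that forces every signature $(\nu,j)$ occurring in this expansion to satisfy $(\nu,j)\prec (m,i)$. The signature $(m,i)$ has $\mathbf{W}$-degree $\mathbf{d} = \Mdeg_{\mathbf{W}}(m) + \mathbf{d}_i = \Mdeg_{\mathbf{W}}(m f_i)$. Each surviving term $\nu f_j$ is $\mathbf{W}$-homogeneous of that same degree $\mathbf{d}$, so $\Mdeg_{\mathbf{W}}(\nu) = \mathbf{d} - \mathbf{d}_j$ and the signature $(\nu,j)$ also has $\mathbf{W}$-degree $\mathbf{d}$. With equal degrees and $j<i$, the ``degree over position over term'' order~\eqref{eq:43} gives $(\nu,j)\prec (m,i)$, hence each $\nu f_j \in V_{\prec (m,i)}$ and therefore $m f_i \in V_{\prec (m,i)}$. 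I expect this homogeneity-and-position accounting to be the only delicate point: it is what makes the F5 criterion valid for the lexicographically refined matrix-weighted grading, and it is exactly where the assumption that the relation comes from strictly earlier generators (index $<i$) is used to break the degree tie in favor of $(m,i)$.
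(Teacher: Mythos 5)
Your proof is correct and follows essentially the same route as the paper's: both multiply a representation $m=\sum_{j<i}g_jf_j$ by $f_i$ and use the position-over-term part of the signature order (after the degree bookkeeping you spell out) to place every resulting term $\nu f_j$ strictly below $(m,i)$. The only cosmetic differences are that the paper allows a remainder $r$ with terms smaller than $m$ (so its argument also covers $m$ being merely a leading monomial of the ideal, with $r=0$ in the case actually stated) and phrases the conclusion directly as a linear dependence among rows rather than through $V_{\prec(m,i)}$ and the second loop invariant.
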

\begin{proof}
  Assume that we chose $mf_{i}$ at line~\ref{algoline:new-row}.
  The hypothesis means that there exists polynomials $r, p_{1},\dots,p_{i-1}$ such that 
  \begin{equation}
    \label{eq:28}
    m + r = p_1f_1 + \dots + p_{i-1}f_{i-1},
  \end{equation}
  and all the terms in the support of $r$ are smaller than $m$.
  So
  \begin{equation}
    \label{eq:29}
    mf_{i} = f_{i}p_1f_1 + \dots + f_{i}p_{i-1}f_{i-1} - rf_{i}
  \end{equation}
  and the row with signature $(m,i)$ is linear combination of rows appearing above in the matrix, either with signature $(\mu, j)$ with $j < i$, or with signature $(\mu,i)$ with $\mu < m$.

  In particular, $V_{(m,i)} = V_{\prec (m,i)}$, and by the second loop invariant, it remains true regardless of the choice made at line~\ref{algoline:prev-elt}.
\end{proof}

This criterion can be used even with the algorithm simplified as described at the end of Section~\ref{sec:generic-matrix-f5}.
If we do not do this simplification and keep track of the signatures in the echelon form, we can also eliminate rows with the following Syzygy criterion, which essentially states that if the row associated with a signature is reduced to 0, the same will happen to all rows associated to a multiple of that signature.
\begin{theorem}[Syzygy criterion, e.g.~\citep{Roune-2012-PracticalGrobnerBasis}]
  \label{thm:syz-crit}
  Let $(m,i)$ be a signature appearing in the course of the algorithm at matrix-weighted degree $\mathbf{d}$, such that the row associated to the signature $(m,i)$ is a linear combination of the rows appearing above in the matrix at matrix-weighted degree $\mathbf{d}$.
  Let $\mu$ be a monomial.
  Then $(m\mu, i)$ is a linear combination of the rows appearing above in the matrix at matrix-weighted degree $\mathbf{d}' = \mathbf{d} + \pdeg_{\mathbf{W}}(\mu)$.
\end{theorem}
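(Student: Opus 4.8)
The plan is to reduce everything to the two loop invariants together with the compatibility of the signature order with monomial multiplication, which was already exploited in the proof of the first loop invariant. First I would reinterpret the hypothesis. Let $h$ be the polynomial attached to the row of signature $(m,i)$ in $\Mac_{\mathbf{d}}$; by the first loop invariant it has signature $(m,i)$, so $h = \alpha\, m f_i + v$ with $\alpha \neq 0$ and $v \in V_{\prec (m,i)}$. By the second loop invariant every row strictly above $(m,i)$ in $\Mac_{\mathbf{d}}$ carries a signature $\prec (m,i)$ and hence lies in $V_{\prec (m,i)}$, so the assumption that $h$ is a linear combination of those rows forces $h \in V_{\prec (m,i)}$, and therefore $m f_i \in V_{\prec (m,i)}$. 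Concretely this yields a relation
\[ m f_i = \sum_{(\nu,j)\prec(m,i)} \beta_{\nu,j}\, \nu f_j. \]

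Next I would multiply this relation by $\mu$, getting $m\mu f_i = \sum_{(\nu,j)\prec(m,i)} \beta_{\nu,j}\, \mu\nu f_j$. The point is that the signature order is compatible with monomial multiplication, exactly as in the proof of the first loop invariant: adding $\pdeg_{\mathbf{W}}(\mu)$ to both matrix-weighted degrees preserves their lexicographic comparison, the position $j$ is unchanged, and the revlex tie-breaker is itself multiplication-compatible, so $(\nu,j)\prec(m,i)$ implies $(\mu\nu,j)\prec(m\mu,i)$ in each of the three cases of~\eqref{eq:43}. Every term $\mu\nu f_j$ thus lies in $V_{\prec(m\mu,i)}$, whence $m\mu f_i \in V_{\prec(m\mu,i)}$.

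Finally I would transfer this membership back to the matrix $\Mac_{\mathbf{d}'}$. Writing $h'$ for the polynomial attached to the row of signature $(m\mu,i)$, the first loop invariant gives $h' = \alpha'\, m\mu f_i + v'$ with $\alpha' \neq 0$ and $v' \in V_{\prec(m\mu,i)}$, so $h' \in V_{\prec(m\mu,i)}$; moreover $h'$ is $\mathbf{W}$-homogeneous of matrix-weighted degree $\mathbf{d}'=\mathbf{d}+\pdeg_{\mathbf{W}}(\mu)$. By the second loop invariant the rows strictly above $(m\mu,i)$ in $\Mac_{\mathbf{d}'}$, together with the rows of the previously processed matrices, span $V_{\prec(m\mu,i)}$; since every previous step carries a matrix-weighted degree distinct from $\mathbf{d}'$, those rows are homogeneous of degree $\neq \mathbf{d}'$ and contribute nothing to the degree-$\mathbf{d}'$ component. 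Hence that component of $V_{\prec(m\mu,i)}$ is spanned by the rows above $(m\mu,i)$ in $\Mac_{\mathbf{d}'}$ alone, and as $h'$ equals its own degree-$\mathbf{d}'$ component it is a linear combination of those rows, as claimed.

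The main obstacle I anticipate is precisely this last descent: the hypothesis and the conclusion are stated for rows of specific matrices, while the clean argument lives in the global spaces $V_{\prec(\cdot)}$, so I must invoke $\mathbf{W}$-homogeneity and the fact that distinct steps carry distinct degrees to pass from membership in $V_{\prec(m\mu,i)}$ to an actual dependency among the rows of $\Mac_{\mathbf{d}'}$. The remaining steps are a direct manipulation of the syzygy relation and are routine.
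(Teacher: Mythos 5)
Your proof is correct and follows essentially the same route as the paper's: translate the hypothesis into $mf_i \in V_{\prec(m,i)}$, multiply by $\mu$ and use the multiplicativity of the signature order to get $m\mu f_i \in V_{\prec(m\mu,i)}$, then conclude via the second loop invariant that the rows above $(m\mu,i)$ in $\Mac_{\mathbf{d}'}$ span that space. You are merely more explicit than the paper about the final descent (using $\mathbf{W}$-homogeneity to discard the rows of previously processed matrices), which is a legitimate detail the paper's three-sentence proof leaves implicit.
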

\begin{proof}
  By the second loop invariant of the algorithm, the rows above the row corresponding to $(m\mu,i)$ in $\Mac_{\mathbf{d}'}$ span $V_{\prec (m\mu,i)}$. % the vector space spanned by all polynomials with can be written with signature $\prec(m\mu,i)$.
  By the ordering on the signatures, this space contains the span of polynomials which can be written as $\mu p$, where $p \in V_{\prec (m,i)}$. % can be written with signature $\prec(m,i)$.
  Since the polynomial associated with the row with signature $(m\mu,i)$ lies in the latter space, it lies in the former.
\end{proof}

The last optimization criterion of this section is new to the best of our knowledge.
It relies on the observation that at a given step, if all the signatures have a non-trivial gcd, then the matrix is the same as a matrix that has been considered at a previous step.
This is something that can appear in general weighted homogeneous and multihomogeneous cases too, but only in rare situations
On the other hand, for matrix-weighted homogeneous systems, it is inherent to the subdivision that some steps will have very few signatures, and very few ways to shift those signatures up.

\begin{proposition}
  \label{prop:sig-gcd}
  Let $(\mathbf{d},M_{\mathbf{d}}, S_{\mathbf{d}})$ be a step in the course of the algorithm.
  Let $\mu$ be the greatest common divisor of all the monomial parts of signatures in $S_{\mathbf{d}}$.
  If $\mu \neq 1$, then the step will not yield any new polynomial for the Gröbner basis.
\end{proposition}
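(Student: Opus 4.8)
The plan is to exploit the non-trivial common factor $\mu$ to show that the matrix $\Mac_{\mathbf{d}}$ is, up to multiplication by $\mu$, the matrix already built at the earlier step of matrix-weighted degree $\mathbf{d}':=\mathbf{d}-\Mdeg_{\mathbf{W}}(\mu)$. First I would check that this step is genuinely earlier: since $\mu \neq 1$ and $W_{1}$ is positive, $\wdeg_{W_{1}}(\mu)>0$, so $d'_{1}<d_{1}\leq\dmax$ and hence $\mathbf{d}'<_{\mathrm{lex}}\mathbf{d}$. By construction of \textsf{AlgoSteps} (Algorithm~\ref{algo:algosteps-matrixhomo}) the step $\mathbf{d}'$ is processed strictly before the step $\mathbf{d}$ and lies within the truncation bound. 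Dividing the monomial part by $\mu$ even gives a bijection $S_{\mathbf{d}}\leftrightarrow S_{\mathbf{d}'}$: a pair $(\nu,i)$ lies in $S_{\mathbf{d}}$ iff $\Mdeg_{\mathbf{W}}(\nu)=\mathbf{d}-\mathbf{d}_{i}$, and since $\mu\mid\nu$ by definition of $\mu$, this holds iff $(\nu/\mu,i)\in S_{\mathbf{d}'}$. This is the precise form of the intuition that ``the matrix has already been considered''.

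The key step is to show that every row added to $\Mac_{\mathbf{d}}$ lies in the linear space $\mu\cdot\langle F\rangle_{\mathbf{d}'}$. Let $h$ be such a row, with signature $(m,i)\in S_{\mathbf{d}}$. By the first loop invariant, $h$ has signature $(m,i)$, so it can be written as $\alpha m f_{i}+\sum_{(\nu,j)\prec(m,i)}\alpha_{\nu,j}\,\nu f_{j}$ with $\alpha\neq 0$. Now $h$ is $\mathbf{W}$-homogeneous of degree $\mathbf{d}$ (it is a row of $\Mac_{\mathbf{d}}$), and $\alpha m f_{i}$ is homogeneous of degree $\Mdeg_{\mathbf{W}}(m)+\mathbf{d}_{i}=\mathbf{d}$; since each $\nu f_{j}$ is itself homogeneous, the terms whose degree is not $\mathbf{d}$ must cancel, so I may discard them and assume every surviving $\nu f_{j}$ has degree $\mathbf{d}$, i.e. $(\nu,j)\in S_{\mathbf{d}}$. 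Then $\mu$ divides $m$ and each such $\nu$, and factoring it out gives $h=\mu h'$ with $h':=\alpha(m/\mu)f_{i}+\sum\alpha_{\nu,j}(\nu/\mu)f_{j}\in\langle F\rangle$ of $\mathbf{W}$-degree $\mathbf{d}'$. As the row space of $\Mac_{\mathbf{d}}$ is spanned by such rows, every reduced row also lies in $\mu\cdot\langle F\rangle_{\mathbf{d}'}$.

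To conclude I would combine this with the correctness already available before step $\mathbf{d}$. A non-zero reduced row $\tilde h$ factors as $\tilde h=\mu\tilde h'$ with $\tilde h'\in\langle F\rangle_{\mathbf{d}'}$, and since $\WMgrevlex$ is a monomial ordering, $\lm(\tilde h)=\mu\cdot\lm(\tilde h')$. Because $\mathbf{d}'$ was processed before $\mathbf{d}$, the third loop invariant guarantees that every element of $\langle F\rangle_{\mathbf{d}'}$ is reducible modulo the current $G$, so $\lm(\tilde h')$ is divisible by $\lm(g_{0})$ for some $g_{0}\in G$; hence $\lm(\tilde h)=\mu\,\lm(\tilde h')$ is divisible by $\lm(g_{0})$ as well, and $\tilde h$ is reducible modulo $G$. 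Consequently the ``not reducible by $G$'' test at line~\ref{algoline:new-gb} fails for every pivot, and no polynomial is added to $G$ at step $\mathbf{d}$.

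The main obstacle I anticipate is the bookkeeping in the second paragraph: passing from ``$h$ has signature $(m,i)$'' to an expression in which every term already carries matrix-weighted degree $\mathbf{d}$ (the homogeneous-projection argument), and then verifying that $\mu$ divides the monomial part of every signature that can occur there — not merely the listed generators of $S_{\mathbf{d}}$, but all terms surviving the projection. A secondary point requiring care is that the reducibility of $\langle F\rangle_{\mathbf{d}'}$ must hold at the precise moment step $\mathbf{d}$ begins, which is exactly what $\mathbf{d}'<_{\mathrm{lex}}\mathbf{d}$ together with the third loop invariant supplies.
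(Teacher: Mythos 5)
Your proof is correct and follows essentially the same route as the paper: show that every row of $\Mac_{\mathbf{d}}$ (hence every reduced row) factors as $\mu$ times an element of the ideal of matrix-weighted degree $\mathbf{d}-\Mdeg_{\mathbf{W}}(\mu)$, then invoke the third loop invariant to conclude reducibility modulo $G$. The only cosmetic difference is that the paper shortcuts the factorization step by invoking Remark~\ref{rmk:choice-prev-step} to assume the rows are exactly the $mf_{i}$, whereas you re-derive the same fact from the first loop invariant via a homogeneous-projection argument.
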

\begin{proof}
  Let $\bm{\delta}=(\delta_{1},\dots,\delta_{k})$ be the matrix-weighted degree of $\mu$.
  By Remark~\ref{rmk:choice-prev-step}, we can assume that the rows of $\Mac_{\mathbf{d}}$ are formed by adding $mf_{i}$ for each signature $(m,i)$.
  Then all polynomials associated to those rows are of the form $\mu m' f_{i}$ where $m'f_{i}$ has matrix-weighted degree $\mathbf{d}-\bm{\delta}$.
  So after reducing $\Mac_{\mathbf{d}}$ to row echelon form, for any polynomial $g$ associated to a row in $\Mac_{\mathbf{d}}$, there exists $g'$ with matrix-weighted degree $\mathbf{d}-\bm{\delta}$ such that $g=\mu g'$.
  By the third loop invariant, $g'$ is reducible modulo $G$, and therefore so is $g$.
  By construction, $g$ will not be added to the basis.
\end{proof}

% \begin{remark}
%   The conclusions of this criterion are not specific to the matrix-weighted homogeneous case.
%   However, in the known cases without weights (homogeneous and multihomogeneous) it is only applicable in trivial cases.
%   In the weighted homogeneous case, it is only applicable in case 
% \end{remark}

\subsection{Reducing to multihomogeneous systems}
\label{sec:converting-into-from}
\newcommand{\transmorph}{\mathcal{F}}
In the previous section, we presented an algorithm for computing a Gröbner basis for matrix-weighted homogeneous systems.
In this section, we will see that we can emulate the behavior of this algorithm by using state-of-the-art algorithms for multihomogeneous systems.

Those algorithms currently follow the structure of Algorithm~\ref{algo:matrixF5-phomo}, for the matrix of weights $\Wnb[*] = (W_0,W^{b}_{\mathbf{n},1},\dots,W^{b}_{\mathbf{n},k-1})$ with for $i>0$, $W^{b}_{\mathbf{n},i}$ as in Eq.~\eqref{eq:19}, and $W_{0} = (1,\dots,1) = \sum_{i=1}^{k} W_{\mathbf{n},i}^{b}$.
% \begin{equation}
%   \label{eq:26}
%   W_{0} = (1, \longdots, 1) 
% \end{equation}
This matrix of weights is equivalent to the matrix of weights $\Wnb$.

\begin{definition}
  Let $\mathbf{W}$ be a non-negative $k$-system of weights in $\ZZ^{n \times k}$.
  Consider the algebra $B = K[Y_{1,1},\dots,Y_{n,1}, Y_{1,2}, \dots, ,\dots,Y_{n,k}]$ in $nk$ variables.
  Let $\mathbf{p}$ be the partition $kn = n + n + \dots + n$, and consider the corresponding  multigrading, defined by grouping the variables $Y_{\bullet,j}$ together for all $j \in \{1,\dots,k\}$.
  % , graded by the system of weights $W_{\mathbf{W}} = (w_{1}^{(1)},w_{2}^{(1)},\dots,w_{n}^{(1)},w_{1}^{(2)},\dots,w_{n}^{(k)})$.
  We define the morphism $\transmorph_{\mathbf{W}}: A \to B$ with
  \begin{equation}
    \label{eq:51}
    \transmorph_{\mathbf{W}}(f)(\mathbf{Y}) = % f\left( \prod_{j=1}^{k} Y_{1,j}^{w_{1,j}}, \dots, \prod_{j=1}^{k} Y_{n,j}^{w_{n,j}} \right).
    f\left( \mathbf{Y}_{1,\bullet}^{\mathbf{w}_{\bullet,1}},\dots , \mathbf{Y}_{n,\bullet}^{\mathbf{w}_{\bullet,n}} \right).
  \end{equation}
  % \begin{equation}
  %   \label{eq:16}
  %   \begin{array}{rrcl}
  %     \transmorph_{\mathbf{W}}: & A & \to & B \\
  %       & X_{i} & \mapsto & \prod_{j=1}^{k} Y_{i,j}^{w_{i,j}}.
  %   \end{array}
  % \end{equation}
  % and extended by linearity.
\end{definition}
 
\begin{proposition}
  \label{prop:basic-phi}
  The map $\transmorph_{\mathbf{W}}$ is an injective morphism and sends matrix-weighted homogeneous polynomials in $A$ with matrix-weighted degree $\mathbf{d}$ onto multihomogeneous polynomials in $B$ with multidegree $\mathbf{d}$.
  Furthermore, this morphism sends the order $<_{\WMgrevlex}$ on the order $<_{\WMgrevlex[\Wpb]}$.
\end{proposition}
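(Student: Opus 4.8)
The plan is to reduce all three assertions to a single computation of how $\transmorph_{\mathbf{W}}$ acts on exponent vectors. Writing a monomial of $A$ as $\mathbf{X}^{\bm\alpha}$, the definition unfolds to $\transmorph_{\mathbf{W}}(\mathbf{X}^{\bm\alpha}) = \prod_{a=1}^{n}\prod_{i=1}^{k} Y_{a,i}^{w_{i,a}\alpha_a}$, so the exponent of $Y_{a,i}$ in the image is $w_{i,a}\alpha_a$. This is a well-defined $K$-algebra morphism into $B$ precisely because $\mathbf{W}$ is non-negative, so each $w_{i,a}\alpha_a$ is a non-negative integer and the images are genuine monomials. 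For injectivity I would use that $\transmorph_{\mathbf{W}}$ is $K$-linear and sends the monomial basis of $A$ to monomials of $B$; since distinct monomials of $B$ are linearly independent, it suffices that $\bm\alpha \mapsto (w_{i,a}\alpha_a)_{a,i}$ be injective on $\NN^{n}$. That $\ZZ$-linear map has trivial kernel over $\QQ$ if and only if no column $\mathbf{w}_{\bullet,a}$ of $\mathbf{W}$ vanishes, which holds under the standing assumption that $W_1$ is positive (more generally whenever $\mathbf{W}$ is size-bounded, via Proposition~\ref{prop:size-bounded}), giving injectivity.

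For the degree statement, the key observation is that the multidegree of the image equals the matrix-weighted degree of the source. The $i$-th component of $\mdeg_{\mathbf{p}}(\transmorph_{\mathbf{W}}(\mathbf{X}^{\bm\alpha}))$ is the total degree in the block $Y_{\bullet,i}$, namely $\sum_{a} w_{i,a}\alpha_a = \deg_{W_i}(\mathbf{X}^{\bm\alpha})$; collecting over $i$ yields $\mdeg_{\mathbf{p}} \circ \transmorph_{\mathbf{W}} = \pdeg_{\mathbf{W}}$ on monomials. Hence if $f$ is $\mathbf{W}$-homogeneous of matrix-weighted degree $\mathbf{d}$, every monomial of $f$ maps to a monomial of multidegree $\mathbf{d}$, so $\transmorph_{\mathbf{W}}(f)$ is multihomogeneous of multidegree $\mathbf{d}$.

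For the order statement I would split both $<_{\WMgrevlex}$ and $<_{\WMgrevlex[\Wpb]}$ into their two stages. The primary stage compares $\pdeg_{\mathbf{W}}$ (resp.\ the multidegree) lexicographically, and these agree verbatim by the previous paragraph. It then remains to match the revlex tie-breakers on monomials sharing a fixed matrix-weighted degree, i.e.\ on pairs $\mathbf{X}^{\bm\alpha}, \mathbf{X}^{\bm\beta}$ with $\bm\delta := \bm\alpha - \bm\beta \in \ker\mathbf{W}$. On the $X$-side, the comparison is governed by the sign of $\delta_{a^*}$, where $a^* = \max\{a : \delta_a \neq 0\}$; on the $Y$-side it is governed by the exponent difference $w_{i,a}\delta_a$ at the largest $Y$-variable (in the variable order of $B$) where the two images differ. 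The plan is to choose the ordering of the $Y$-variables so that the block of the positive row $W_1$ has the highest revlex priority and is ordered internally by $a$; then, since $w_{1,a}>0$ for every $a$, the decisive $Y$-variable is exactly $Y_{a^*,1}$, with deciding exponent difference $w_{1,a^*}\delta_{a^*}$, a positive multiple of $\delta_{a^*}$. Positivity of $W_1$ thus guarantees that the two signs coincide and the comparison is preserved, making $\transmorph_{\mathbf{W}}$ an order isomorphism onto its image.

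I expect this last point — the compatibility of the reverse-lexicographic tie-breakers — to be the main obstacle, since it is the only place where the variable ordering on $B$ and the positivity of $W_1$ genuinely enter; well-definedness, injectivity, and the degree correspondence are formal once the action on exponents is written out. The delicate part is arranging the $Y$-variable order so that the positive block is decisive, which is exactly what rules out a sign reversal; without that care (for instance if a row with a zero in the relevant column were allowed to decide), the revlex refinement need not transport correctly.
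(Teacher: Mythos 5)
The paper states this proposition without proof, so there is no argument of record to compare yours against; your write-up is, as far as I can tell, a complete and correct proof, and the decomposition you use (reduce everything to the action $\bm\alpha \mapsto (w_{i,a}\alpha_a)_{a,i}$ on exponent vectors, then handle well-definedness, injectivity, the degree correspondence, and the two stages of the order separately) is the natural one. The injectivity step is right: you need no column of $\mathbf{W}$ to vanish, which indeed follows from positivity of $W_{1}$ or from size-boundedness, and is not guaranteed by non-negativity alone --- worth stating explicitly as a hypothesis. Your identification of the revlex tie-breaker as the only delicate point is not just prudence but a genuine catch: with the variable ordering of $B$ exactly as written in the paper ($Y_{1,1},\dots,Y_{n,1},Y_{1,2},\dots,Y_{n,k}$, so that block $k$ is revlex-decisive), the order-compatibility claim is \emph{false} for general non-negative $\mathbf{W}$. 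For instance, take $\mathbf{W}=\begin{psmallmatrix}1&1&1\\2&1&0\end{psmallmatrix}$ and the monomials $X_{1}X_{3}$ and $X_{2}^{2}$, both of $\mathbf{W}$-degree $(2,2)$: one has $X_{1}X_{3}<_{\mathrm{revlex}}X_{2}^{2}$ in $A$, but $\transmorph_{\mathbf{W}}(X_{1}X_{3})=Y_{1,1}Y_{1,2}^{2}Y_{3,1}$ and $\transmorph_{\mathbf{W}}(X_{2}^{2})=Y_{2,1}^{2}Y_{2,2}^{2}$ compare the other way, because the decisive variable $Y_{2,2}$ sits in the row $W_{2}=(2,1,0)$ whose zero entry at position $3$ hides $\delta_{3}$. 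Your remedy --- list the block of the positive row $W_{1}$ last (highest revlex priority), ordered internally as $X_{1},\dots,X_{n}$, so that the decisive $Y$-variable is $Y_{a^{*},1}$ with $a^{*}=\max\{a:\delta_{a}\neq 0\}$ and the deciding exponent difference $w_{1,a^{*}}\delta_{a^{*}}$ has the sign of $\delta_{a^{*}}$ --- is exactly what is needed, and is automatic only when $\mathbf{W}$ is strictly positive. So your proof is correct, and it additionally supplies a precision (the block ordering of $B$, or alternatively a positivity hypothesis) that the statement as printed requires.
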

% \begin{proof}
%   The fact that $\transmorph_{\mathbf{W}}$ is a morphism follows directly from the construction.
%   Also by construction, a monomial in $A$ with multi-$\mathbf{W}$-degree $\mathbf{d}$ is sent to a monomial in $B$ with multidegree $\mathbf{d}$.
% \end{proof}

\begin{example}
  With the matrix of weights $\mathbf{W}=
  \begin{psmallmatrix}
    1 & 1 & 1\\
    1 & 2 & 3
  \end{psmallmatrix}$,
  \begin{equation}
    \label{eq:52}
    \transmorph_{\mathbf{W}}(X_{1}X_{2}^{2} + X_{1}^{2}X_{3}) = Y_{1,1}Y_{1,2}Y_{2,1}^{2}Y_{2,2}^{4} + Y_{1,1}^{2}Y_{1,2}^{2}Y_{3,1}Y_{3,2}^{3},
  \end{equation}
  and it is multihomogeneous with degree $3$ in $Y_{1,1},Y_{2,1},Y_{3,1}$, and degree $5$ in $Y_{1,2},Y_{2,2},Y_{3,2}$.
\end{example}

The main observation for the computation of Gröbner bases is that this morphism commutes with S-polynomials.
\begin{proposition}
  \label{prop:spol-stable-phi}
  % Let $\mu$ and $\nu$ be two monomials.
  % Then
  % \begin{equation}
  %   \label{eq:24}
  %   \lcm(\transmorph_{\mathbf{W}}(\mu), \transmorph_{\mathbf{W}}(\nu)) = \transmorph_{\mathbf{W}}(\lcm(\mu,\nu)).
  % \end{equation}
  Let $f$ and $g$ be two polynomials.
  Then
  \begin{equation}
    \label{eq:23}
    \SPol(\transmorph_{\mathbf{W}}(f), \transmorph_{\mathbf{W}}(g)) = \transmorph_{\mathbf{W}}(\SPol(f,g)).
  \end{equation}
\end{proposition}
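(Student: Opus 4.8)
The plan is to exploit that $\transmorph_{\mathbf{W}}$ is a ring homomorphism substituting each variable by a monomial, so that it maps monomials to monomials and leaves coefficients untouched, together with the order-compatibility already recorded in Proposition~\ref{prop:basic-phi}. Writing out the definition of the S-polynomial, the identity reduces to checking that $\transmorph_{\mathbf{W}}$ commutes with two operations on monomials: taking leading terms and taking least common multiples.

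First I would establish that $\transmorph_{\mathbf{W}}$ commutes with leading terms, i.e.\ $\LT(\transmorph_{\mathbf{W}}(f)) = \transmorph_{\mathbf{W}}(\LT(f))$ for every $f$. Since $\transmorph_{\mathbf{W}}$ is injective (Proposition~\ref{prop:basic-phi}), it induces a bijection between the support of $f$ and that of $\transmorph_{\mathbf{W}}(f)$ preserving the attached coefficients; and since it sends $<_{\WMgrevlex}$ onto $<_{\WMgrevlex[\Wpb]}$, this bijection is strictly increasing. Hence the $<_{\WMgrevlex}$-largest monomial of $f$ maps to the $<_{\WMgrevlex[\Wpb]}$-largest monomial of $\transmorph_{\mathbf{W}}(f)$, and with the coefficient unchanged this gives the identity on leading terms, and in particular $\LC(\transmorph_{\mathbf{W}}(f)) = \LC(f)$.

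The crux is to show that $\transmorph_{\mathbf{W}}$ commutes with lcm of monomials: for monomials $m, m'$ one should have $\transmorph_{\mathbf{W}}(\lcm(m,m')) = \lcm(\transmorph_{\mathbf{W}}(m), \transmorph_{\mathbf{W}}(m'))$. Here I would work on exponent vectors. The substitution $X_i \mapsto \mathbf{Y}_{i,\bullet}^{\mathbf{w}_{\bullet,i}}$ sends the exponent $\alpha_i$ of $X_i$ in a monomial $\mathbf{X}^{\bm{\alpha}}$ to the exponent $w_{j,i}\alpha_i$ of $Y_{i,j}$ in its image, so $\transmorph_{\mathbf{W}}$ acts coordinate-wise and linearly on exponents, each coordinate being scaled by a single entry of $\mathbf{W}$. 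Since $\lcm$ is the coordinate-wise maximum of exponent vectors, the identity amounts to $w_{j,i}\max(\alpha_i, \alpha'_i) = \max(w_{j,i}\alpha_i, w_{j,i}\alpha'_i)$ for all $i,j$. This is exactly where the hypothesis that $\mathbf{W}$ is non-negative enters: for $w_{j,i} \geq 0$, multiplication by $w_{j,i}$ is monotone and pulls through the maximum, whereas a negative weight would reverse it and the identity would fail. This is the step I expect to be the main (and essentially only) obstacle; the rest is bookkeeping.

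Finally I would assemble the pieces. Abbreviating $m_f = \LM(f)$, $m_g = \LM(g)$ and $L = \lcm(m_f, m_g)$, the S-polynomial is $\SPol(f,g) = \frac{L}{\LT(f)}f - \frac{L}{\LT(g)}g$, a difference of $K$-multiples of a monomial times $f$ or $g$. Applying the homomorphism $\transmorph_{\mathbf{W}}$ term by term, and using $\transmorph_{\mathbf{W}}(L/m_f) = \transmorph_{\mathbf{W}}(L)/\transmorph_{\mathbf{W}}(m_f)$ (valid since $m_f \mid L$ and $\transmorph_{\mathbf{W}}$ is a monomial homomorphism), expresses $\transmorph_{\mathbf{W}}(\SPol(f,g))$ as a combination of $\transmorph_{\mathbf{W}}(f)$ and $\transmorph_{\mathbf{W}}(g)$. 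On the other side, the first two steps identify the data defining $\SPol(\transmorph_{\mathbf{W}}(f), \transmorph_{\mathbf{W}}(g))$: its leading monomials are $\transmorph_{\mathbf{W}}(m_f)$ and $\transmorph_{\mathbf{W}}(m_g)$, their lcm is $\transmorph_{\mathbf{W}}(L)$, and the leading coefficients are $\LC(f)$ and $\LC(g)$. Comparing the two expressions coefficient by coefficient then yields the desired equality.
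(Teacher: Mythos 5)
Your proof is correct and complete; the paper in fact states this proposition without giving any proof, so there is no argument of the author's to compare against. Your route — reducing the identity to the two facts that $\mathcal{F}_{\mathbf{W}}$ commutes with leading terms (via injectivity and the order-compatibility recorded in Proposition~\ref{prop:basic-phi}) and with least common multiples of monomials (via the coordinate-wise identity $w_{j,i}\max(\alpha_i,\alpha_i') = \max(w_{j,i}\alpha_i,\, w_{j,i}\alpha_i')$ on exponent vectors) — is the natural one, and you correctly pinpoint the non-negativity of $\mathbf{W}$ as the hypothesis that makes the lcm step go through, which is precisely why the paper's definition of $\mathcal{F}_{\mathbf{W}}$ is restricted to non-negative matrices of weights.
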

% \begin{proof}
%   The statement on the monomials is a straightforward verification, recalling that the lcm of two monomials corresponds to the coordinate-wise max of the exponents.
%   The statement on the S-polynomials follows.
% \end{proof}

This property makes it possible to compute Gröbner bases through $\transmorph_{\mathbf{W}}$.
\begin{theorem}
  Let $F=(f_{1},\dots,f_{m})$ be a tuple of polynomials in $A$.
  Endow the algebra $A$ with the monomial order $<_{\WMgrevlex}$.
  Let $\transmorph_{\mathbf{W}}(F)=  (\transmorph_{\mathbf{W}}(f_1),\dots,\transmorph_{\mathbf{W}}(f_{m}))$ in $B$.
  Endow the algebra $B$ with the matrix of weights $\mathbf{W}_{\mathbf{p}}^{\rm b}$ (\emph{cf.} Eq.~\eqref{eq:19}).
  Then the following operations commute:
  \begin{center}
    \begin{tikzpicture}[every node/.style={font=\rm}]
      \node[matrix of nodes, ampersand replacement=\&,
      nodes={align=center, anchor=center},
      column 2/.style={nodes={text width=3.5cm}}
      ] (diagram)
      {System $F$ \&[3cm]
        Gröbner basis of $F$ w.r.t. $<_{\WMgrevlex}$ \\[0.7cm]
      System $\transmorph_{\mathbf{W}}(F)$ \&
      {Gröbner basis of $\transmorph_{\mathbf{W}}(F)$ \\
        w.r.t. $<_{\WMgrevlex[\Wpb]}$
      }  \\} ;
    \def\myshift{3pt}
    \begin{scope}[every path/.style={draw,-latex}]
      \path (diagram-1-1) -- (diagram-1-2) node[pos=0.5,above] {GB computation};
      \path (diagram-1-1) -- (diagram-2-1) node[pos=0.5,right] {$\transmorph_{\mathbf{W}}$};
      \path (diagram-2-1) -- (diagram-2-2) node[pos=0.5,below] {GB computation};
      \path (diagram-1-2) -- (diagram-2-2) node[pos=0.5,right] {$\transmorph_{\mathbf{W}}$};
    \end{scope}
    \end{tikzpicture}
  \end{center}
\end{theorem}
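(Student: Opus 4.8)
The plan is to argue via Buchberger's criterion, using only the two structural facts already established: that $\transmorph_{\mathbf{W}}$ is an injective ring morphism respecting the relevant orders (Proposition~\ref{prop:basic-phi}), and that it commutes with S-polynomials (Proposition~\ref{prop:spol-stable-phi}). First I would record that, being a morphism, $\transmorph_{\mathbf{W}}$ transports ideals compatibly: if $\langle G\rangle = \langle F\rangle$ in $A$, then writing each $g \in G$ as $g = \sum_j h_j f_j$ and each $f_i$ as $f_i = \sum_j h'_j g_j$ and applying $\transmorph_{\mathbf{W}}$ yields $\langle \transmorph_{\mathbf{W}}(G)\rangle = \langle \transmorph_{\mathbf{W}}(F)\rangle$ in $B$. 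In particular $\transmorph_{\mathbf{W}}(G)$ is a generating set of $\langle \transmorph_{\mathbf{W}}(F)\rangle$.

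The crux, and the step I expect to be the main obstacle, is to show that \emph{reduction} commutes with $\transmorph_{\mathbf{W}}$. The key sub-claim is that $\transmorph_{\mathbf{W}}$ reflects divisibility of monomials: for monomials $m',m$ of $A$, one has $\transmorph_{\mathbf{W}}(m') \mid \transmorph_{\mathbf{W}}(m)$ in $B$ if and only if $m' \mid m$ in $A$. The forward implication is immediate from multiplicativity. For the converse, write $m = \mathbf{X}^{\bm{\alpha}}$, $m'=\mathbf{X}^{\bm{\beta}}$, so that $\transmorph_{\mathbf{W}}(m) = \prod_{j,i} Y_{j,i}^{w_{i,j}\alpha_{j}}$; divisibility in $B$ gives $w_{i,j}\beta_{j} \le w_{i,j}\alpha_{j}$ for all $i,j$. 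Since $\mathbf{W}$ is non-negative and $\transmorph_{\mathbf{W}}$ is injective, no column of $\mathbf{W}$ is identically zero, so for each $j$ there is an $i$ with $w_{i,j}>0$, forcing $\beta_{j}\le\alpha_{j}$ and hence $m'\mid m$. Combined with the fact that $\transmorph_{\mathbf{W}}$ preserves leading monomials and coefficients (Proposition~\ref{prop:basic-phi}, since it merely substitutes variables by monomials and the orders correspond) and that it is injective on monomials (so $\transmorph_{\mathbf{W}}(f)$ has support exactly the image of the support of $f$, with no cancellation), this shows that a single reduction step of $f$ by $g$ in $A$ maps exactly onto a reduction step of $\transmorph_{\mathbf{W}}(f)$ by $\transmorph_{\mathbf{W}}(g)$, the cofactor $m/\LM(g)$ mapping to $\transmorph_{\mathbf{W}}(m)/\LM(\transmorph_{\mathbf{W}}(g))$. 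Iterating, any reduction chain in $A$ transports to one in $B$ and conversely.

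With these two commutations in hand, the conclusion is short. Suppose $G$ is a Gröbner basis of $\langle F\rangle$ w.r.t. $<_{\WMgrevlex}$. For any pair $g,g' \in G$, $\SPol(g,g')$ reduces to $0$ modulo $G$; transporting the reduction chain and using Proposition~\ref{prop:spol-stable-phi}, $\SPol(\transmorph_{\mathbf{W}}(g),\transmorph_{\mathbf{W}}(g')) = \transmorph_{\mathbf{W}}(\SPol(g,g'))$ reduces to $0$ modulo $\transmorph_{\mathbf{W}}(G)$ w.r.t. $<_{\WMgrevlex[\Wpb]}$. Since $\transmorph_{\mathbf{W}}(G)$ generates $\langle \transmorph_{\mathbf{W}}(F)\rangle$, Buchberger's criterion shows it is a Gröbner basis of $\langle \transmorph_{\mathbf{W}}(F)\rangle$; this already establishes that the two paths of the diagram produce Gröbner bases of the same ideal.

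To pin down the \emph{exact} commutation of the algorithm's outputs, I would invoke the uniqueness of the reduced Gröbner basis. The reduced basis is characterized by being monic, by its leading monomials forming the minimal generators of the leading-monomial ideal, and by each element being fully reduced against the others. The divisibility-reflection and order-preservation of $\transmorph_{\mathbf{W}}$ preserve each of these conditions, and $\transmorph_{\mathbf{W}}$ fixes leading coefficients, so monicity is preserved. Hence $\transmorph_{\mathbf{W}}$ of the reduced Gröbner basis of $F$ is precisely the reduced Gröbner basis of $\transmorph_{\mathbf{W}}(F)$, which is exactly the commutativity asserted by the diagram.
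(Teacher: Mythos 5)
Your proposal is correct and follows essentially the same route as the paper's proof, which also rests on Proposition~\ref{prop:spol-stable-phi} (commutation with S-polynomials, of which reductions are a special case) together with the order compatibility from Proposition~\ref{prop:basic-phi}. You simply make explicit two points the paper leaves implicit --- that $\transmorph_{\mathbf{W}}$ reflects divisibility of monomials (needed for reduction steps to transport in both directions) and the uniqueness of the reduced Gröbner basis to get exact commutation of the diagram --- which is a welcome tightening rather than a different argument.
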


\begin{proof}
  It is a consequence of the fact that a Gröbner basis can be computed by a sequence of S-polynomials and reductions (themselves a particular case of S-polynomial).
  By Proposition~\ref{prop:spol-stable-phi}, the result of a sequence of such calculations, run on polynomials of the form $\transmorph_{\mathbf{W}}(f)$, will be polynomials in the image of $\transmorph_{\mathbf{W}}$.
  And by Proposition~\ref{prop:basic-phi}, the orders are compatible, so that applying $\transmorph_{\mathbf{W}}$ preserves the propriety of being a Gröbner basis.
\end{proof}

So we can compute a Gröbner basis of $\transmorph_{\mathbf{W}}(F)$ using state-of-the-art algorithms for multihomogeneous Gröbner bases, and invert $\transmorph_{\mathbf{W}}$ to recover the wanted basis.

In practice, as mentioned, algorithms for multihomogeneous systems compute a basis for the $\Wpb[*]$ ordering, which is equivalent to the $\Wpb$ ordering.
By Proposition~\ref{prop:equiv-w-gb}, it does not change the result of a Gröbner basis computation if the input is multihomogeneous, and it amounts to a mere reordering of the matrices considered in the course of the algorithm.

The algorithm does not take into account the inherent sparsity of systems arising from the application of $\transmorph_{\mathbf{W}}$.
This leads to both larger matrices and to more matrices.
This can be handled by restricting the computation to monomials appearing in the image of $\transmorph_{\mathbf{W}}$.
In doing so, there are only finitely monomials of given $W_{1}$-degree, which allows the algorithm to run directly with the weights $\Wpb$.
With those changes, the multihomogeneous algorithms encodes the same operations, in the same order, as the matrix-weighted homogeneous algorithm.

This is similar to the weighted homogeneous case, where taking advantage of the structure in matrix algorithms requires restricting to monomials in the image of the transformation.

\subsection{Sparse Gröbner basis algorithms}
\label{sec:sparse-grobner-basis}

One last approach worth mentioning is the use of sparse Gröbner basis algorithms and in particular their specialization to multihomogeneous systems.
We recall basic facts about the sparse variant of the Matrix-F5 algorithm in that case.
Let $\mathbf{n}$ be a partition of $n$, and let $F$ be a multihomogeneous system of polynomials with multidegree $\mathbf{d}$.
For simplicity, we assume that all polynomials have the same multidegree $\mathbf{d}$.
The sparse Matrix-F5 algorithm can be emulated by running Algorithm~\ref{algo:matrixF5-phomo} with \textsf{AlgoSteps} a function returning, for all $d \in \{1,\dots,\dmax\}$, the monomials obtained by products of $d$ monomials of multidegree $\mathbf{d}$, and for signatures the pairs $(m,i)$ with $m$ a product of $d-1$ monomials of multidegree $\mathbf{d}$.
We refer the reader to the seminal papers~\cite{Faugere-2014-sparse, Bender-2019-grobner-basis-over} for more details.

This algorithm does not in general compute a Gröbner basis, but a \emph{sparse Gröbner basis}.
The notion of sparse Gröbner basis is relative to the choice of the monomials in the computation.
It is in particular suitable for listing the solutions of a system that is zero-dimensional in the multihomogeneous sense.
However, systems arising from the $\transmorph_{\mathbf{W}}$ transformation will almost never be zero-dimensional: a system of $n$ equations in $n$ variables becomes a system of $n$ equations in $kn$ variables.

It is also possible to run the sparse Matrix-F5 algorithm directly on the matrix-weighted homogeneous system.
In practice, it means that given a system $F$ of $\mathbf{W}$-matrix-weighted homogeneous polynomials with the same $\mathbf{W}$-matrix-weighted degree $\mathbf{d}$, the function \textsf{AlgoSteps} function would return, at each degree $d$, the monomials obtained by the products of $d$ monomials of matrix-weighted degree $\mathbf{d}$, and similarly for the signature.
If the system is zero-dimensional in the sparse sense, this allows one to compute a sparse Gröbner basis, and then to recover the solutions using the sparse FGLM algorithm.

\subsection{Affine systems}
\label{sec:affine-systems}

An algorithm following the $\mathbf{W}$-degree for systems that are not matrix-weighted homogeneous can be useful from at least two perspectives.
First, similar to the usual cases, the polynomials in a system may have a large matrix-weighted homogeneous component of maximal matrix-weighted degree (say, for a lexicographical comparison),
so that using the structure to guide the computation is advantageous.
Second, with the observation at the end of Section~\ref{sec:matrix-gr-reverse-lexic}, it is possible to represent \emph{any} monomial order as a $\WMgrevlex$ order for some $\mathbf{W}$.
In this case, for each $\mathbf{W}$-degree, there is exactly one monomial, and so all polynomials are affine.

Assuming that $W_{1}$ is positive, Algorithm~\ref{algo:matrixF5-phomo} can be adapted to compute a Gröbner basis for those orders, by also adding lower matrix-weighted degree monomials to $M$.

\begin{remark}
  If $W_{1}$ is not positive, by Remark~\ref{rmk:equiv-weights-orders}, it is not enough to find an equivalent system of weights such that $W_{1}$ is positive.
\end{remark}

However, adapting the criteria to deal with those affine situations is complicated.
Even in the homogeneous case (standard grading), the Matrix-F5 algorithm does not behave nicely with affine polynomials.
Indeed, it is proved~\citep{Eder-2013-AnalysisInhomogeneousSignaturebased} that whenever the degree of the polynomials falls, the resulting polynomials can no longer be used for the F5 criterion and that conversely, the F5 criterion may not catch all reductions to zero even for regular sequences.

For this reason, affine systems are usually studied under the stronger hypothesis of \emph{regularity in the affine sense}~\citep{Bardet-2004-thesis}, which encodes that the highest degree components of the polynomials form a regular sequence and thus that the F5 criterion can avoid all degree falls.

This approach could work for polynomials with a large highest $\mathbf{W}$-degree component and whose highest matrix-weighted degree components satisfy the regularity conditions in Section~\ref{sec:hilb-mult-regul}.
However, when looking at the matrices of weights for a fixed weight-matrix order, rather than letting the structure of the system guide the choice of the weights, it is expected that degree falls will happen.
Besides the obvious case of $n$ linearly independent systems of weights, for example, for the weights defining an elimination ordering, any actual elimination will be a degree fall.

In the usual case with the standard degree, when degree falls cannot be avoided, there are two approaches.
First, one can homogenize the system (see~\cite[Sec.~4.3.A]{Kreuzer-2005-ComputationalCommutativeAlgebra}).
In that case, degree falls cannot happen, and the F5-criterion can avoid reductions to zero.
This approach would work in our case, using $k$ different homogenization variables.
However, just like in the classical case, this leads to working with polynomials with a $\mathbf{W}$-degree much higher than necessary.

Note that with the proper choice of a monomial order (see~\cite[Sec.~4.4]{Kreuzer-2005-ComputationalCommutativeAlgebra}), homogenizing the system is equivalent to considering all monomials and all signatures \emph{up to} each $\mathbf{W}$-degree, coordinate-wise.
The advantage of the latter perspective or, equivalently, of greedily dehomogenizing the computed polynomials, is that it makes it possible to follow the degree falls and insert them in the matrices computed at the appropriate degree.
This invalidates the criteria, but it avoids computing polynomials with a degree higher than necessary and it is the most efficient approach in practice for standard graduations.

\section{Regularity and Hilbert multiseries}
\label{sec:hilb-mult-regul}

\subsection{Definitions}
\label{sec:definitions-1}

Analyzing the complexity of the Matrix-F5 algorithm requires bounding the minimal degree bound at which a truncated Gröbner basis is complete, which gives a bound on the number of degree steps taken by the algorithm, as well as the size of the matrices built at each step.
A valuable tool for that purpose is the Hilbert series of ideals.
Indeed, for a lot of structures, one can identify suitable definitions of the regularity of sequences, for which it is possible to compute the Hilbert series using elementary series operations and to read complexity parameters off of it.
Those regularity properties are intrinsically connected to the algorithmic process, as they correspond to rank properties of the computed matrices, and to a characterization of the reductions to zero.

Among such properties, one can list regular sequences, semi-regular sequences, and sequences in Noether position.
Those properties have been notably used in the homogeneous, weighted homogeneous, and sparse cases.

In the multihomogeneous case, the definition of regularity is significantly more complicated than for $\NN$-graduations, as it requires characterizing divisors of zero which are \emph{always} present, regardless of the ideal.
This is done only in a few specific cases, including the bilinear case.

As can be expected, the situation in the matrix-weighted case is not easier.
In this section, we give different possible definitions of regularity, and we discuss their relations with the known cases.

\begin{definition}
  Let $\WW \in \NN^{k \times n}$ be a non-negative matrix of weights, and let $A = K[X_{1},\dots,X_{n}]$.
  Let $F = (f_{1},\dots,f_{r}) \in A^{r}$ be a sequence of \multiwhomo polynomials, with respective \multiwdeg $\mathbf{d}_{i}$.
  For all $i \in \{1,\dots,r\}$, let $I_{i}$ be the ideal $\langle f_{1},\dots,f_{i}\rangle$.

  For all $i \in \{2,\dots,r\}$, and for all \multiwdeg $\mathbf{d}$, we consider the multiplication map
  \begin{equation}
    \label{eq:3}
    \begin{array}{rrcl}
      \phi_{i}^{(\mathbf{d})}: & (A/I_{i-1})_{\mathbf{d}} & \to & (A/I_{i-1})_{\mathbf{d}+\mathbf{d}_{i}} \\
      & f & \mapsto & f_{i}f.
    \end{array}
  \end{equation}
  Let $f \in A_{\mathbf{d}}$ be such that $f_{i}f \in I_{i-1}$, we say that:
  \begin{enumerate}
    \item $f$ is a \emph{trivial divisor of $0$ in $A/I_{i-1}$} if $f \in I_{i-1}$;
    \item $f$ is a \emph{semi-trivial divisor of $0$ in $A/I_{i-1}$} if $\Mdeg_{\mathbf{W}}(f) = \mathbf{d}$ and $\phi_{i}^{(\mathbf{d})}$ is surjective\footnote{This definition only loosely depends on $f$, via its $\mathbf{W}$-degree $\mathbf{d}$. Indeed, if $\phi_{i}^{(\mathbf{d})}$ is surjective, then any divisor of $0$ at $\mathbf{W}$-degree $\mathbf{d}$ is a semi-trivial divisor of $0$. The naming is intentional, as the focus is on the individual divisors of zero.};
    \item $f$ is an \emph{eliminable divisor of $0$ in $A/I_{i-1}$} if the leading term of $f$ is divisible by the leading term of a semi-trivial divisor of $0$ in $A/I_{i-1}$.
  \end{enumerate}
  We also say that:
  \begin{enumerate}
    \item $F$ is a \emph{regular sequence} if for all $i$, all divisors of $0$ in $A/I_{i-1}$ are trivial;
    \item $F$ is a \emph{semi-regular sequence} if for all $i$, all divisors of $0$ in $A/I_{i-1}$ are trivial or semi-trivial (i.e., for all $\mathbf{d}$, $\phi_{i}^{\mathbf{d}}$ is either injective or surjective);
    \item $F$ is a \emph{weakly regular sequence} if for all $i$, all divisors of $0$ in $A/I_{i-1}$ are trivial, semi-trivial or eliminable.
\end{enumerate}
\end{definition}

In Algorithm~\ref{algo:matrixF5-phomo}, different types of divisors of zero correspond to different types of reductions to zero:
\begin{enumerate}
  \item if $f$ is a trivial divisor of $0$ in $A/I_{i-1}$, then $(\lm(f),i)$ is a signature eliminated by the F5 criterion;
  \item if $f$ is a semi-trivial divisor of $0$ in $A/I_{i-1}$, then the matrix $\Mac_{d}$, after adding the signatures in $i$, has more rows than columns, and has full rank; if there are as many pivots as columns above the row associated to signature $(\lm(f),i)$, it is possible to avoid that reduction to zero;
  \item if $f$ is an eliminable divisor of $0$ in $A/I_{i-1}$, then $(\lm(f),i)$ is a signature eliminated by the syzygy criterion.
\end{enumerate}

We make a few notes on the relations between this definition and pre-existing notions:
\begin{enumerate}
  \item The definition of regular sequences does not depend on the grading and is the same as the usual definition, including in the homogeneous~\citep{Lazard-1983} and weighted homogeneous \citep{Faugere-2013-whomo-1} cases.
  \item The definition of semi-regular sequences does depend on the grading. For $\NN$-gradings, it specializes to the usual definition in the homogeneous case~\citep{Bardet-2004, Pardue-2010} and some weighted homogeneous cases~\citep{Faugere-2016-whomo-2}.
  In those cases, if $r \leq n$, there cannot exist semi-trivial divisors of $0$, and weak regularity, semi-regularity, and regularity are equivalent.
  % In particular, in those cases, if $m \leq n$, semi-regularity implies regularity.
  % \item It follows from the previous point that in the homogeneous and those weighted homogeneous cases, if $m \leq n$, weak regularity implies regularity.
  \item If $r \geq i > n$, in the homogeneous and some weighted homogeneous cases, for semi-regular sequences, it is known that there exists a (weighted) degree $\delta^{(i)}$ such that for all $d < \delta^{(i)}$, $\phi_{i}^{(d)}$ is injective, and for all $d \geq \delta_{i}$, $\phi_{i}^{(d)}$ is surjective.
  So in those cases, weak regularity implies semi-regularity.
\end{enumerate}

For known $\NN$-graduations, the situation is therefore simple: if $r \leq n$, all three properties are equivalent, and if $r > n$, regular sequences do not exist but the other two properties are equivalent.

For $\NN^{k}$-graduations, things are more complicated, as the following examples illustrate.

\begin{example}[A sequence which is semi-regular and not regular, with $r \leq n$]
  \label{ex:semi-reg-not-reg}
  Consider the algebra $K[x,y,z]$ graded by the matrix of weights $\mathbf{W} =
  \begin{psmallmatrix}
    1 & 1 & 1 \\
    1 & 2 & 3
  \end{psmallmatrix}
  $ and the sequence $(x^{2}y^{2} + x^{3}z, x^{2}y^{2} - x^{3}z, x^{2}y^{2} + 2x^{3}z)$.
  All polynomials are $\mathbf{W}$-homogeneous with $\mathbf{W}$-degree $(4,6)$.
  
  Clearly, the $3$ polynomials are not linearly independent, so the sequence is not regular: $1$ is a divisor of $0$ in $A/I_{2}$.
  But since there are only $2$ monomials of $\mathbf{W}$-degree $(4,6)$ in $A$, the quotient $A/I_{2}$ is $0$, and therefore $\phi_{3}^{(0,0)}$ is surjective. So $1$ is a semi-trivial divisor of $0$.

  This rank defect persists in later matrices, so all the divisors of $0$ for that sequence are semi-trivial.
\end{example}

\begin{example}[A sequence which is weakly regular but not semi-regular, with $r \leq n$]
  Consider the algebra $K[x,y,z]$ graded by the matrix of weights $\mathbf{W} =
  \begin{psmallmatrix}
    1 & 1 & 5\\
    1 & 2 & 5
  \end{psmallmatrix}
  $ and the sequence $(x^{10}+z^{2}, xy, xy)$.
  All polynomials are $\mathbf{W}$-homogeneous with respective $\mathbf{W}$-degree  $(10,10)$, $(2,3)$ and $(2,3)$.

  Obviously, for all $\mathbf{W}$-degree $\mathbf{d}$, the map $\phi_{3}^{\mathbf{d}}$ is the 0 map, which means that for all $f \in A$, $f$ is a divisor of $0$ in $A/I_{2}$.

  At $\mathbf{W}$-degree $(2,3)$, there is only one monomial in $A$, so the map $\phi_{3}^{(0,0)}$ is surjective, and $1$ is a semi-trivial divisor of $0$ in $A/I_{2}$.

  At $\mathbf{W}$-degree $(6,7)$, there are two monomials: $x^{5}$ and $z$, and the latter clearly does not lie in $I_{2}=I_{3}$.
  So the map $\phi_{3}^{(4,4)}$ is not surjective, and therefore the corresponding cofactor $x^{4}$ is not a semi-trivial divisor of $0$ in $A/I_{2}$.
  It is however eliminable since it is divisible by $1$ which is a semi-trivial divisor of $0$.
  
  Since the first two polynomials form a regular sequence, and all divisors of $0$ in $A/I_{2}$ will be divisible by $1$, the sequence is weakly regular.
\end{example}

\begin{example}
  The previous example is intentionally simple, to the point that one could argue that such a sequence should not be considered ``regular'' in any way.
  Here is how to build a similar but non-trivial example, in the weighted homogeneous case: consider a system of weights $W=(1,w_{2},\dots,w_{n})$, and a semi-regular, non regular, sequence $F=(f_{1},\dots,f_{r})$ in $K[x_{1},\dots,x_{n}]$, with respective $W$-degree $d_{1},\dots,d_{n}$.
  Let $j,\delta$ be such that there exists a semi-trivial divisor $f$ of $0$ in $A/\langle f_{1},\dots,f_{j-1}\rangle$ with $\deg_{W}(f) = \delta-d_{j}$.
  For simplicity, further assume that $f$ and $xf$ do not lie in $I_{j-1}$.
  % Let $\delta$ be such that for all $d \geq \delta$, the ideal $\langle f_{1},\dots,f_{r}$ contains all monomials of $W$-degree $d$.
  
  Now consider the algebra $A'=K[x_{1},\dots,x_{n},y]$ graded by the system of weights $W'=(1,w_{2}\dots,w_{n},\delta+1)$.
  % Let $\mathbf{d}$ be the smallest degree %%% TODO
  % Let $\alpha$ be such that for all $i \in \{1,\dots,r\}$,
  % \begin{equation}
  %   \label{eq:33}
  %   \Mdeg_{\mathbf{W}}(f_{i}) < \Mdeg_{\mathbf{W}}(x_{1}^{\alpha}) = (d_{1},\dots,d_{k})
  % \end{equation}
  % % with polynomials having $\mathbf{W}$-degree bounded (coordinate-wise) by $\mathbf{d}=(d_{1},\dots,d_{k})$.
  % Now consider the algebra $K[x_1,\dots,x_{n},y,z]$, graded by the matrix of weights
  % \begin{equation}
  %   \label{eq:31}
  %   \mathbf{W}' =
  %   \begin{pmatrix}
  %     \hphantom{w} & W_{1} & \hphantom{w} & d_1+1 & d_{k}+1 \\
  %     & \vdots && \vdots & \vdots \\
  %     & W_{k} & & d_1+1 & d_{k}+1
  %   \end{pmatrix},
  % \end{equation}
  In this new algebra, let $f_{0}=x_{1}^{2\delta + 2} + y^{2}$ and consider the extended sequence $F'=(f_0, f_{1}, \dots, f_{r})$.
  Let $I'_{i} = \langle f_{0},\dots,f_{i} \rangle \subseteq A'$ for $i \in \{0,\dots,r\}$, with additionally $I'_{-1}=0$.
  Since $y$ has $W$-degree $\delta+1$, for all $d \leq \delta$ and for all $i \in \{-1,\dots,r\}$, there are canonical isomorphisms between $A_{d}$ and $A'_{d}$, between $(I_i)_{d}$ and $(I'_{i})_{d}$, and between $(A/I_{i})_{d}$ and $(A'/I'_{i})_{d}$.
  In particular, since $f$ has $W$-degree $\delta - d_{j}$, $f$ is still a semi-trivial divisor of zero in $A'/I'_{j-1}$.

  Therefore, $xf f_{j} \in (I_{j-1})_{\delta+1}$, and $xf$ is a divisor of zero in $A/I_{j-1}$.
  By assumption, $xf \notin I_{j-1}$. Since $I_{j-1} = I'_{j-1} \cap A$, also $xf \notin I'_{j-1}$, and so this divisor of zero is not trivial.
  But at $W$-degree $\delta+1$, the algebra $A'$ contains the monomial $y$ which cannot lie in the vector space $I'_{j}$, so $\phi_{j}^{(\delta-d_{j}+1)}$ cannot be surjective.
  Therefore, the divisor of zero $xf$ is not semi-trivial, but it is eliminable.
  
  Similarly as in the previous example, since the polynomial $f_{0}$ is not a divisor of zero modulo the ideal spanned by $f_{1},\dots,f_{r}$, any element $f$ such that $f f_{i} \in I'_{i_1}$ with a non-trivial cofactor in $f_{0}$ will be a trivial divisor of $0$.
  Therefore all divisors of $0$ are trivial, semi-trivial, or eliminable, and the sequence is weakly regular.
\end{example}

\subsection{Hilbert multiseries}
\label{sec:hilbert-multiseries}

We recall the definition of the Hilbert series in $\ZZ^{k}$-graded settings.

\begin{definition}[{\citet[Def.~5.8.11]{Kreuzer-2005-ComputationalCommutativeAlgebra}}]
  Let $k,n \in \NN$, let $\mathbf{W} \in \ZZ^{k \times n}$ be a size-bounded $k$-matrix of weights.
  We consider the algebra $A = K[\mathbf{X}]$ with the $\mathbf{W}$-matrix-grading.
  Let $I \subset A$ be a $\mathbf{W}$-matrix-weighted homogeneous ideal.
  % For any $\ZZ^{n}$-graded algebra $V$, and any multidegree $\mathbf{d} \in \ZZ^{k}$, let $V_{\mathbf{d}}$ denote the homogeneous component of multidegree $\mathbf{d}$ in $V$.
  The \emph{Hilbert multiseries} of $A/I$ is the formal series
  \begin{equation}
    \label{eq:1}
    \HS_{A/I}(T_{1},\dots,T_{k}) = \sum_{\mathbf{d}=(d_{1},\dots,d_{k}) \in \NN^{k}} \dim_{K}((A/I)_{\mathbf{d}}) T_{1}^{d_{1}}
    \cdots T_{k}^{d_{k}}.
  \end{equation}
\end{definition}

\begin{proposition}[{\citet[Th.~5.8.15]{Kreuzer-2005-ComputationalCommutativeAlgebra}}]
  The Hilbert multiseries of the algebra $A$ is 
  \begin{equation}
    \label{eq:2}
    \HS_{A}(\mathbf{T}) = \frac{1}{\left(1-T_{1}^{w_{1,1}}\cdots T_{k}^{w_{k,1}}\right)\cdots \left(1-T_{1}^{w_{1,n}}\cdots T_{k}^{w_{k,n}}\right)}
     = \frac{1}{(1-\mathbf{T}^{W_{\bullet, 1}})\cdots (1-\mathbf{T}^{W_{\bullet, n}})}.
  \end{equation}
\end{proposition}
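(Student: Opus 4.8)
The plan is to exploit the fact that the monomials of $A$ form a $K$-basis adapted to the $\mathbf{W}$-grading: every monomial $\mathbf{X}^{\bm{\alpha}}$ is $\mathbf{W}$-homogeneous and spans a one-dimensional graded piece, so $\dim_{K}(A_{\mathbf{d}})$ is simply the number of monomials of $\mathbf{W}$-degree $\mathbf{d}$. Consequently the Hilbert multiseries can be rewritten as a single sum ranging over all monomials, or equivalently over all exponent vectors $\bm{\alpha} \in \NN^{n}$, which is what makes the product formula fall out.

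Concretely, I would start from the definition and reindex, using $\pdeg_{\mathbf{W}}(\mathbf{X}^{\bm{\alpha}}) = \mathbf{W}\bm{\alpha} = \sum_{j=1}^{n}\alpha_{j} W_{\bullet,j}$:
\[
  \HS_{A}(\mathbf{T}) = \sum_{\mathbf{d}} \dim_{K}(A_{\mathbf{d}}) \mathbf{T}^{\mathbf{d}} = \sum_{\bm{\alpha} \in \NN^{n}} \mathbf{T}^{\pdeg_{\mathbf{W}}(\mathbf{X}^{\bm{\alpha}})} = \sum_{\bm{\alpha} \in \NN^{n}} \prod_{j=1}^{n} \left(\mathbf{T}^{W_{\bullet,j}}\right)^{\alpha_{j}}.
\]
Since the exponents $\alpha_{1},\dots,\alpha_{n}$ range independently over $\NN$, the sum factors as a product of geometric series, each of which I sum in closed form:
\[
  \sum_{\bm{\alpha} \in \NN^{n}} \prod_{j=1}^{n} \left(\mathbf{T}^{W_{\bullet,j}}\right)^{\alpha_{j}} = \prod_{j=1}^{n} \sum_{\alpha_{j} \geq 0} \left(\mathbf{T}^{W_{\bullet,j}}\right)^{\alpha_{j}} = \prod_{j=1}^{n} \frac{1}{1 - \mathbf{T}^{W_{\bullet,j}}},
\]
which is the claimed expression. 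Equivalently, one may phrase this structurally: $A = K[X_{1}] \otimes_{K} \cdots \otimes_{K} K[X_{n}]$ as $\mathbf{W}$-graded algebras, Hilbert multiseries are multiplicative on graded tensor products whose components are finite-dimensional, and $\HS_{K[X_{j}]}(\mathbf{T}) = 1/(1 - \mathbf{T}^{W_{\bullet,j}})$ because the powers of $X_{j}$ contribute exactly one basis monomial in each $\mathbf{W}$-degree $e\,W_{\bullet,j}$.

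The only genuine point to check is that these formal rearrangements are legitimate coefficient by coefficient, and this is precisely where the size-bounded hypothesis enters. By Proposition~\ref{prop:size-bounded} there is no nontrivial monomial of $\mathbf{W}$-degree $(0,\dots,0)$, and more generally only finitely many monomials share any given $\mathbf{W}$-degree; hence each coefficient of the left-hand side is a finite sum, so distributing the product is valid and no infinite accumulation occurs. I would also remark that when some weights $w_{i,j}$ are negative the identity must be read in the appropriate ring of formal (Laurent) series, with $\mathbf{d}$ ranging over $\ZZ^{k}$ rather than $\NN^{k}$; size-boundedness again guarantees that each term of the expansion is well-defined. I expect no real obstacle beyond carefully isolating this finiteness statement, which is the substance behind the otherwise routine generating-function manipulation.
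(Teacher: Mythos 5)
Your proof is correct and follows essentially the same route as the paper: both reduce the claim to counting monomials per $\mathbf{W}$-degree and factor the generating function of all exponent vectors into a product of geometric series $\prod_{j}1/(1-\mathbf{T}^{W_{\bullet,j}})$ (the paper phrases this as substituting $X_{i}\mapsto \mathbf{T}^{W_{\bullet,i}}X_{i}$ in $\sum_{\bm{\alpha}}\mathbf{X}^{\bm{\alpha}}$ and evaluating at $X_{i}=1$, which is the same reindexing). Your explicit appeal to size-boundedness to justify the coefficientwise rearrangement is a welcome extra precision that the paper leaves implicit.
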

\begin{proof}
  The dimension $\dim(A_{\mathbf{d}})$ is equal to the number of monomials of $\mathbf{W}$-degree $\mathbf{d}$.
  The sum of all monomials in $A$ can be expressed as a fraction
  \begin{equation}
    \label{eq:12}
    \sum_{\bm{\alpha} \in \NN^{n}} \mathbf{X}^{\bm{\alpha}} = \frac{1}{(1-X_{1})\cdots (1-X_{n})},
  \end{equation}
  and the result follows after substituting $X_{i}$ by $\mathbf{T}^{\mathbf{w_{\bullet,i}}}X_{i}$, grouping by degree in $\mathbf{T}$ and evaluating at $X_{i}=1$.
  % If we replace $X_{i}$ by $T_{1}^{w_{i}^{(1)}}\cdots T_{k}^{w_{i}^{(k)}}X_{i}$ in the expression,
  % the left-hand side becomes
  % \begin{equation}
  %   \label{eq:13}
  %   \sum_{m \text{ monomial in } A} T_{1}^{\deg_{W^{(1)}}(m)} \cdots {T_{k}^{\deg_{W^{(k)}}(m)} m,
  % \end{equation}
  % and grouping by multi-\textbf{W}-degree, it can be rewritten
  % \begin{equation}
  %   \label{eq:14}
  %   \sum_{\mathbf{d} \in \NN^{k}} \mathbf{T}^{\mathbf{d}} \left(\sum_{\deg_{\mathbf{W}}(m)=\mathbf{d}} m\right).
  % \end{equation}
  % Finally, evaluating the expression at $X_{i}=1$ for all $i$ replaces the internal sum with the wanted dimension.
\end{proof}

The next theorem describes a way to compute the Hilbert multiseries of a quotient by an ideal generated by a regular or semi-regular sequence.

\begin{theorem}
  % As before, let $A = K[X_{1},\dots,X_{n}]$ be multi-graded by the positive $k$-system of weights $\mathbf{W}$.
  With the same notations, let $(f_{1},\dots,f_{r}) \in A^{r}$ be a sequence of $\textbf{W}$-matrix-weighted homogeneous polynomials, spanning an ideal $I$, with respective \multiwdeg $\mathbf{d}_{i}$.
  For $i \in \{0,\dots,m\}$, let $I_{i}$ be the ideal spanned by $(f_{1},\dots,f_{i})$, with $I_{0}=A$ and $I_{r}=I$.

  Then the sequence $F$ is regular iff for all $i$,
  \begin{equation}
    \label{eq:45}
    \HS_{A/I_{i}}(\mathbf{T})
  = (1-\mathbf{T}^{\mathbf{d}_{i}})\HS_{A/I_{i-1}}(\mathbf{T}) 
= \frac{(1-\mathbf{T}^{\mathbf{d}_{1}})\cdots (1-\mathbf{T}^{\mathbf{d}_{m}})}{(1-\mathbf{T}^{W_{\bullet, 1}})\cdots (1-\mathbf{T}^{W_{\bullet, n}})}.
  \end{equation}
  The sequence $F$ is semi-regular iff for all $i$,
  \begin{equation}
    \label{eq:45b}
    \HS_{A/I_{i}}(\mathbf{T}) = \left\lfloor (1-\mathbf{T}^{\mathbf{d}_{i}})\HS_{A/I_{i-1}}(\mathbf{T}) \right\rfloor,
  \end{equation}
  where $\lfloor  S \rfloor$ is obtained by removing negative coefficients from $S$.
\end{theorem}
\begin{proof}
  The proof is similar to the usual case, so we only outline the argument.
  Let $c_{\mathbf{d},i}=\dim(A/I_{i})_{\mathbf{d}}$.
  The multiplication map $\phi_{i}^{(\mathbf{d})}$ defines an exact sequence
  \begin{equation}
    \label{eq:46}
    0 \to K_{\mathbf{d}} \to (A/I_{i-1})_{\mathbf{d}} \to (A/I_{i-1})_{\mathbf{d}+\mathbf{d}_{i}}
    \to (A/I_{i})_{\mathbf{d}+\mathbf{d}_{i}} \to 0
  \end{equation}
  where $K_{\mathbf{d}} = \Ker(\phi_{i}^{(\mathbf{d})})$.
  This gives that
  \begin{equation}
    \label{eq:47}
    c_{\mathbf{d}+\mathbf{d}_{i},i} = c_{\mathbf{d}+\mathbf{d}_{i}, i-1} - c_{\mathbf{d}, i-1} - \dim(K_{\mathbf{d}}).
    % \dim((A/I_{i})_{\mathbf{d}+\mathbf{d}_{i}}) =\\
    % \dim((A/I_{i-1})_{\mathbf{d}}) - \dim((A/I_{i-1})_{\mathbf{d}+\mathbf{d}_{i}}) - \dim(K_{\mathbf{d}}).
  \end{equation}
  $F$ is a regular sequence iff $K_{\mathbf{d}}=0$.
  $F$ is a semi-regular sequence if and only if either $K_{\mathbf{d}}=0$ (if $\phi_{i}^{(\mathbf{d})}$ is injective) or $\dim_{K}((A/I_{i})_{\mathbf{d}+\mathbf{d}_{i}})=0$ (if $\phi_{i}^{(\mathbf{d})}$ is surjective, i.e., $c_{\mathbf{d},i-1}-c_{\mathbf{d}+\mathbf{d}_{i},i-1}\leq 0$% implying that $\dim((A/I_{i-1})_{\mathbf{d}}) - \dim((A/I_{i-1})_{\mathbf{d}+\mathbf{d}_{i}}) \leq 0$
  ).
  In either case, multiplying Eq.~\eqref{eq:47} by $\mathbf{T}^{\mathbf{d}-\mathbf{d}_{i}}$ and summing over $\mathbf{d}$ gives the result.
  The closed form for the regular case follows, by induction.
\end{proof}

\begin{remark}
  For semi-regular sequences, contrary to the usual case, it is not necessarily true that one can remove all coefficients after the \emph{first} negative coefficient, or even that one can remove coefficients of monomials which are multiples of that monomial with a negative coefficient, as the following example shows.
\end{remark}

\begin{example}
  \label{ex:semi-reg-hs}
  Consider again the sequence from Example~\ref{ex:semi-reg-not-reg}, with 3 polynomials of the form $x^{2}y^{2} + \alpha x^{3}z$, $\mathbf{W}$-homogeneous with $\mathbf{W}$-degree $(4,6)$ for $\mathbf{W} = \begin{psmallmatrix}
    1 & 1 & 1 \\
    1 & 2 & 3
  \end{psmallmatrix}$.
  The first negative coefficient arises when computing $\HS_{A/I_{3}}(T,U)$, at multidegree $(4,6)$, corresponding to the semi-trivial divisor of $0$ at that multidegree.
  Accordingly, the quotient $A/I_{3}$ at that multidegree is trivial, all the monomials are in the ideal.
  However, at multidegree $(5,10)$, the algebra contains the monomials $y^{5}$, $xy^{3}z$ and $x^{2}yz^{2}$, which do not lie in the ideal.
  Note also that while subsequent powers of $xy^{3}z$ and $x^{2}yz^{2}$ lie in the ideal, none of the powers of $y^{5}$ does, so the Hilbert series will have a positive coefficient at all bidegrees $(5k,10k)$.

  The problem here is that even if the ideal contains all monomials at a given $\mathbf{W}$-degree, there may be monomials of higher $\mathbf{W}$-degree which are not divisible by any monomial at that degree.
  This can be compared to the case of a single system of weights studied in~\cite{Faugere-2016-whomo-2}.
  In that case, the same phenomenon can appear if the weights are chosen without any assumption, see e.g. Figure 2 in that work.

  Let us examine more precisely the Hilbert multiseries in the present example.
  Up to the second polynomial, the sequence is regular, so
  \begin{equation}
    \label{eq:7}
    \HS_{A/I_{2}}(T,U) = \frac{(1-T^{4}U^{6})^{2}}{(1-TU)(1-TU^{2})(1-TU^{3})}.
  \end{equation}
  We plotted in Figure~\ref{fig:coeff_hs} the coefficients of
  \begin{equation}
    \label{eq:10}
    (1-T^{4}U^{6})\HS_{A/I_{2}} = \frac{(1-T^{4}U^{6})^{2}}{(1-TU)(1-TU^{2})(1-TU^{3})},
  \end{equation}
  highlighting the negative coefficients in grey.
  For instance, the bottom-left-most grey coefficient corresponds to the aforementioned multidegree $(4,6)$.
  The Hilbert multiseries $\HS_{A/I_{3}}$ is given by the sum of all the monomials with a positive coefficient.
  
  Those dropped coefficients clearly do not represent a truncation of the series.
  In this simple example, the removed coefficients appears to be a shift of the cone of all monomials in the algebra. % In this example, the explanation is that 
  % This can be explained by the fact that the initial ideal is spanned by the two monomials in $\mathbf{W}$-degree $(4,6)$, and 
  We leave the question of whether this phenomenon can be characterized and generalized, to future work.
  \end{example}

\begin{figure}
  \centering

  \begin{tikzpicture}[x=0.6cm,y=0.6cm]
    \draw[thick,->] (0,0) -- (10.5,0) node[right] {$\deg_{T}$};
    \draw[thick,->] (0,0) -- (0,10.5) node[right] {$\deg_{U}$};
    \begin{scope}[every node/.style={fill=white}]
      \foreach \x/\y/\c in {
        0/0/1, 1/1/1, 1/2/1, 2/2/1, 1/3/1, 2/3/1, 3/3/1, 2/4/2, 3/4/1, 2/5/1, 4/4/1, 3/5/2, 2/6/1, 4/5/1, 3/6/2, 5/5/1, 4/6/\textcolor{gray}{-1}, 3/7/2, 5/6/1, 4/7/2, 3/8/1, 6/6/1, 5/7/\textcolor{gray}{-1}, 4/8/3, 3/9/1, 6/7/1, 5/8/\textcolor{gray}{-1}, 4/9/2, 7/7/1, 6/8/\textcolor{gray}{-1}, 4/10/2, 7/8/1, 6/9/\textcolor{gray}{-1}, 5/10/3, 8/8/1, 7/9/\textcolor{gray}{-1}, 6/10/\textcolor{gray}{-2}, 8/9/1, 7/10/\textcolor{gray}{-1}, 9/9/1, 8/10/\textcolor{gray}{-1}, 9/10/1, 10/10/1
}
      {\node at (\x,\y) {\c};}
    \end{scope}
  \end{tikzpicture}
  
  \caption{Coefficients of the series $(1-T^{4}U^{6})\HS_{A/I_{2}} $ (in black and gray) and $\HS_{A/I_{3}}(T,U)$ (in black only), in Example~\ref{ex:semi-reg-hs}}
  \label{fig:coeff_hs}
\end{figure}

  % In the case of multiseries, it is not clear what such a truncation at the first negative coefficient would mean, or whether there is a well-defined ``truncation'' allowing to compute the Hilbert multiseries of a semi-regular sequence.

  % As such, it is not clear whether it holds that
  % \begin{equation}
  %   \label{eq:48}
  %   \HS_{A/I_{i}}}(\mathbf{T}) = \left\lfloor \frac{(1-\mathbf{T}^{\mathbf{d}_{1}})\cdots (1-\mathbf{T}^{\mathbf{d}_{m}})}{(1-\mathbf{T}^{W_{\bullet, 1}})\cdots (1-\mathbf{T}^{W_{\bullet, n}})}\right\rfloor.
  % \end{equation}
%\end{remark}

Contrary to the usual case, it is also not clear how to fold back the recurrence \eqref{eq:45} into a closed form formula for the Hilbert multiseries of a semi-regular sequence.
Even in the case of a single system of weights, fully characterizing the Hilbert series of semi-regular sequences is an open question, which is only answered under some assumptions on the weights and the weighted degrees~\cite[Th. 15]{Faugere-2016-whomo-2}.
Getting a closed form for the Hilbert series of semi-regular and weakly regular sequences, as well as extracting from those series complexity parameters such as the number of solutions or the degree of the polynomials in the Gröbner basis, should be the object of future research.

\subsection{Genericity}
\label{sec:gener-regul-sequ}

% We now give a statement about the genericity of the above properties.

We now prove a partial statement concerning the genericity of the above properties.
Recall that a property $\mathcal{P}$ of polynomial systems is said to be \emph{generic} among a subset $S$ of $A^{r}$ if there exists a non-empty Zariski-open subset $U \subseteq S$ such that all elements of $U$ have the property $\mathcal{P}$.

% \begin{remark}
%   A common pitfall when proving genericity is that it is not enough to prove that there exists a Zariski-open subset $U \subseteq S$ such that all elements of $U$ have the property $\mathcal{P}$ and that the property $\mathcal{P}$ itself is not empty.
%   The Zariski-open subset $U$ itself must be non-empty.
% \end{remark}

\begin{theorem}
  Let $\mathbf{d}_{1},\dots,\mathbf{d}_{r}$ be matrix-weighted degrees such that there exists at least one regular (resp. semi-regular, resp. weakly regular) matrix-weighted homogeneous sequence with respective matrix-weighted degree $\mathbf{d}_{i}$.
  Then, among matrix-weighted homogeneous sequences with respective matrix-weighted degree $\mathbf{d}_{i}$, regular (resp. semi-regular, resp. weakly regular) sequences are generic.
\end{theorem}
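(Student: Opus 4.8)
The plan is to exhibit, for each of the three properties, a non-empty Zariski-open subset of the parameter space on which the property holds. First I would parametrise the sequences: a $\mathbf{W}$-homogeneous sequence $(f_{1},\dots,f_{r})$ with prescribed matrix-weighted degrees $\mathbf{d}_{1},\dots,\mathbf{d}_{r}$ is exactly a point of the affine space $S = A_{\mathbf{d}_{1}}\times\dots\times A_{\mathbf{d}_{r}}$, where $A_{\mathbf{d}_{i}}$ is the (finite-dimensional) $K$-vector space of $\mathbf{W}$-homogeneous forms of matrix-weighted degree $\mathbf{d}_{i}$; in particular $S$ is irreducible. For every $i$ and every matrix-weighted degree $\mathbf{d}$, the component $(I_{i})_{\mathbf{d}}$ is the column space of the Macaulay matrix in matrix-weighted degree $\mathbf{d}$, whose entries are the coefficients of the products $m f_{j}$ and hence regular functions of the coordinates on $S$. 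Therefore $\dim_{K}(I_{i})_{\mathbf{d}}$, being the rank of that matrix, is lower semicontinuous on $S$, so $\dim_{K}(A/I_{i})_{\mathbf{d}}$ is upper semicontinuous and attains its minimum value $m_{\mathbf{d},i}$ on a non-empty open subset $U_{\mathbf{d},i}\subseteq S$.

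Next I would match the regular and semi-regular loci with these minima. The exact sequence appearing in the Hilbert-multiseries computation above shows that $\phi_{i}^{(\mathbf{d})}$ has maximal rank if and only if $\dim_{K}(A/I_{i})_{\mathbf{d}+\mathbf{d}_{i}}$ is as small as that sequence permits. Semi-regularity of $F$ is precisely maximal rank of all the maps $\phi_{i}^{(\mathbf{d})}$, and regularity is the subcase where every maximal-rank map is injective, which holds throughout as soon as one regular witness exists. In both cases the property is equivalent to the simultaneous minimality of all the dimensions $\dim_{K}(A/I_{i})_{\mathbf{d}}$. Because the ``floor'' series is a coefficientwise lower bound valid for every sequence, $m_{\mathbf{d},i}$ dominates its floor coefficient; the hypothesised witness $F_{0}$ attains the floor everywhere, so in fact $m_{\mathbf{d},i}$ equals the floor coefficient, and $F$ is regular (resp. semi-regular) if and only if $F\in\bigcap_{\mathbf{d},i}U_{\mathbf{d},i}$.

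The crux is to replace this infinite intersection by a finite one. Here I would use that each $\HS_{A/I_{i}}$ is a rational function with the fixed denominator $(1-\mathbf{T}^{W_{\bullet,1}})\cdots(1-\mathbf{T}^{W_{\bullet,n}})$, so its numerator is a genuine polynomial. The main technical point, which I expect to be the only non-formal step, is that the $W_{1}$-degree of this numerator is bounded by some $D$ \emph{uniformly} for $F$ ranging over $S$; this is a boundedness (regularity) statement for ideals generated in fixed matrix-weighted degrees inside a fixed polynomial ring, and morally it is the bound on the number of degree steps taken by the algorithm. Granting such a $D$, two series with this common denominator that agree in all matrix-weighted degrees $\mathbf{d}$ with $d_{1}\le D$ have equal numerators, hence coincide; combined with the one-sided bound $\dim_{K}(A/I_{i})_{\mathbf{d}}\ge m_{\mathbf{d},i}$ this shows that $F$ attains every minimum once it attains those with $d_{1}\le D$. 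The finite intersection $U=\bigcap_{i,\,d_{1}\le D}U_{\mathbf{d},i}$ is then a non-empty open subset of $S$ all of whose points are regular (resp. semi-regular), which is the asserted genericity.

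For weak regularity the same skeleton applies, but the eliminability condition is stated through leading monomials, which are not individually semicontinuous. I would sidestep this by noting that the pivot pattern of each Macaulay matrix — equivalently the component of the initial ideal $\operatorname{in}(I_{i})$ \wrt $<_{\WMgrevlex}$ in matrix-weighted degree $\mathbf{d}$ — is governed by the vanishing of the minors of all its column-prefix submatrices, a finite family of closed conditions. Consequently, on a non-empty open subset of $S$ the pivot patterns, the set of matrix-weighted degrees carrying semi-trivial divisors, and the resulting divisibility relations among leading terms are simultaneously constant; pinning these generic values to the weakly regular pattern by means of the witness $F_{0}$, and again truncating to the finite range $d_{1}\le D$, I obtain a non-empty open set of weakly regular sequences. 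The subtlety that leading terms vary is precisely what is resolved by passing from individual leading monomials to the full collection of prefix-rank conditions.
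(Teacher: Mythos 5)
Your overall strategy is the same as the paper's: parametrise the sequences of prescribed matrix-weighted degrees by an irreducible affine space, observe that the entries of the Macaulay matrices (and of their echelon forms) are polynomial, respectively rational, functions of the coordinates, and use semicontinuity of rank together with the hypothesised witness to conclude that the maximal-rank (equivalently, minimal-dimension) locus is a non-empty open set for each fixed $(i,\mathbf{d})$. Your treatment of weak regularity via prefix-minor conditions on the pivot pattern is likewise the same device the paper uses when it remarks that the coefficients ``after reducing to row-echelon form'' are still rational functions, so that the generic pivot pattern is well defined and is pinned down by the witness.

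The one genuine gap is the step you yourself flag: you need to pass from infinitely many dense open conditions $U_{\mathbf{d},i}$ (one per matrix-weighted degree $\mathbf{d}$, of which there may be infinitely many even below any fixed $W_{1}$-degree bound there is no a priori reason to have) to a single non-empty open set, and you do this by \emph{granting} a uniform bound $D$ on the $W_{1}$-degree of the numerator of $\HS_{A/I_{i}}$ over the whole parameter space. This is not a formality: it is a uniform multigraded regularity bound, and nothing earlier in your argument (or in the paper) supplies it; note also that your auxiliary claim that the iterated ``floor'' series is a coefficientwise lower bound for \emph{every} sequence does not follow by naive induction, because the operation $S\mapsto\lfloor(1-\mathbf{T}^{\mathbf{d}_{i}})S\rfloor$ is not monotone in $S$. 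So as written the proof is conditional on an unproven lemma. To be fair, the paper's own proof is silent on exactly this point --- it establishes openness degree by degree and then asserts genericity --- so your writeup is, if anything, more honest about where the difficulty lies; but if the referee's standard is a complete argument, the bound $D$ (or some other finiteness mechanism, e.g.\ reducing regularity to the open codimension condition $\mathrm{codim}\,I_{r}\geq r$, which avoids the degree-by-degree intersection in the regular case) still has to be supplied.
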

\begin{proof}
  In all three cases, the property is a statement on the Macaulay matrices built by Algorithm~\ref{algo:matrixF5-phomo}, at each matrix-weighted degree.
  % First, assume that the matrices are built without criteria.
  Each of the coefficients, both before and after reducing to row-echelon form, can be written as a rational function in the coefficients of the input polynomial, in the Zariski-open set defined by the denominators not vanishing.
  So the Macaulay matrix has a generic shape.
  % So the position of the non-zero entries in the Macaulay matrices is generically constant, and the zero coefficients of the generic Macaulay matrix are zero for all systems.
  % It implies that the leading terms of the Gröbner basis elements computed, and the rows reduced to zero, are generically the same, so the set of signatures eliminated by the F5 and syzygy criteria are generically the same, and the set of signatures indexing the rows of the matrices are generically the same.

  % Since the rank is upper semi-continuous, the generic shape of the Macaulay matrix realizes the maximal rank possible.
  All three properties state that the Macaulay matrices, after removing rows with the criteria, have full rank.
  Since the rank is upper semi-continuous, if there exists a system for which the Macaulay matrix at matrix-weighted degree $\mathbf{d}$ has maximal rank, so does the generic Macaulay matrix.
  This implies that the properties are generic.
\end{proof}

In general, the properties are not always generic, as the third example in Table~\ref{tab:exp1} shows.
% \begin{example}
%   Consider the systems of weights $\mathbf{W}=
%   \begin{psmallmatrix}
%     1 & 1 & 1 \\
%     1 & 2 & 3
%   \end{psmallmatrix}$ and a generic system of $3$ polynomials $f_{1},f_{2},f_{3}$ of $\mathbf{W}$-matrix-weighted degree $(3,5)$.
%   Those polynomials are spanned by $X_{1}X_{2}^{2}$ and $X_{1}^{2}X_{3}$ which have a non-trivial gcd, so they do not form a regular sequence.
%   So $f_{3}$ lies in the span of $f_{1}$ and $f_{2}$, and $1$ is a semi-trivial divisor of zero modulo $\langle f_{1},f_{2}\rangle$.
%   At matrix-weighted degree $(5,9)$, $X_{2}^{2}f_{2}$ lies in the 3-dimensional span of $X_{1}X_{3}f_{1}, X_{2}^{2}f_{1}$ and $X_{1}X_{3}f_1$, so $X_{2}^{2}$ is the leading term of a semi-trivial divisor of $0$ modulo $f_{1}$.
%
%   Going higher in degree, for instance at matrix-weighted degree $(7,13)$, there is a divisor of zero modulo $\langle f_{1},f_{2}\rangle$ with leading term $X_2^{4}$, and the Macaulay matrix has a rank defect of 1.
%   However, this divisor of $0$ is eliminable, since it is divisible by the semi-trivial divisor of zero $1$.
  %
% \end{example}
It is not unexpected, as it was already the case with weighted homogeneous systems that regular sequences are not generic when there do not exist enough monomials at the wanted weighted degrees.
Based on experiments, we expect however that it is possible to identify large classes of matrix-weighted degrees such that some of the regularity properties are generic, in particular for non-degenerate cases with $r \leq n-k$.

\section{Experimental results}
\label{sec:applications}

We wrote a toy implementation\footnote{\url{https://gitlab.com/thibaut.verron/matrix-homo-gb}} of Algorithm~\ref{algo:matrixF5-phomo} specialized to the matrix-weighted homogeneous case.
In Table~\ref{tab:exp1}, we briefly report on the largest total degree reached, the number and size of the matrices, the number of parallelized steps, and the number of reductions to zero, for generic systems.
The experiments were run with a limit of 40 parallel processes.
We report the results for three algorithms: Algorithm~\ref{algo:matrixF5-phomo} with and without filtering out steps with a non-trivial GCD, and the state of the art (following the $W_{1}$-homogeneous structure).
If the values in a row are indicated as ``$\geq$'', it means that the computation was run with the given value of $\dmax$, but that the result is only a truncated Gröbner basis.
The number of reductions to zero is the total number of reductions done by the algorithm with the two criteria, and the number in parentheses is the number of rank defects in the Macaulay matrices (syzygies that are neither trivial, semi-trivial or eliminable).

Beyond the usual signature criteria, it appears that eliminating steps with a non-trivial GCD eliminates quite many small matrices and results in a significant speed-up. This may be explained by eliminating the overhead of spawning parallel processes for trivial computations.

\newcommand{\truncated}[1]{$\geq #1$}

\begin{table*}[t]
  \centering
  \caption{Experimental data on generic matrix-weighted homogeneous systems. \\
    Each row is a system of $r$ polynomials, $\mathbf{W}$-homogeneous with the same $\mathbf{W}$-degree. We run Algorithm~\ref{algo:matrixF5-phomo} with the steps provided by Algorithm~\ref{algo:algosteps-matrixhomo} and with the GCD filter (Proposition~\ref{prop:sig-gcd}), without the GCD filter, and with the default strategy considering only the first system of weights.}
  \footnotesize
  \begin{tabu}{llllrrrrrrr}
    \toprule
    $\mathbf{W}$ & $r$ & $\mathbf{W}$-deg. & Algorithm & $\dmax$ & Deg & Steps &  Mtx/step & \pbox[r]{Max\\matrix\\size} & \pbox[r]{Reds \\ to 0} & Time (s)\\
    \midrule
    $\begin{psmallmatrix} 1 & 2 & 3 \\ 2 & 1 & 1 \end{psmallmatrix} $&
    2 & $(100,50)$ &
    GCD filter & 190 & 76
    & 10 & 1 & 400 & 0 & 1.29 \\
    &&& No filter & 190 & 76 
    & 91 & 147 & 400 & 0 & 186.89 \\
    &&& Default & \truncated{150} & \truncated{64} 
    & \truncated{51}  & 1 & \truncated{900k}  & \truncated{0}  & \truncated{754.36}  \\
    \midrule
    $\begin{psmallmatrix} 1 & 1 & 2 & 3 \\ 2 & 2 & 1 & 1 \end{psmallmatrix} $&
    2 & $(30,30)$ &
    GCD filter &  71 & 53 &
    42 & 5 & 23k &  0&  11.14 \\
    &&& No filter & 71  & 53
    & 42 & 65 & 23k & 0 & 33.83 \\
    &&& Default & \truncated{45} & \truncated{29} &
    \truncated{16} & 1 & \truncated{1100k}  & \truncated{0}  & \truncated{343.47}  \\
    \midrule
    $\begin{psmallmatrix} 1 & 1 & 2 & 3 \\ 2 & 2 & 1 & 1 \end{psmallmatrix} $&
    3 & $(30,30)$ &
    GCD filter &  85 & 44 &
    56 & 6 & 80k & 40 (9)&  33.94 \\
    &&& No filter & 85  & 44
    & 56 & 88 & 80k & 40 (9) & 88.91 \\
    &&& Default & \truncated{45} &  \truncated{34}
    & \truncated{16} & 1 & \truncated{1651k} & \truncated{0} & \truncated{517.19} \\
    \bottomrule
  \end{tabu}
  \label{tab:exp1}
\end{table*}

% \bibliographystyle{elsarticle-harv}
% \bibliography{biblio}

\begin{thebibliography}{32}
\expandafter\ifx\csname natexlab\endcsname\relax\def\natexlab#1{#1}\fi
\providecommand{\url}[1]{\texttt{#1}}
\providecommand{\href}[2]{#2}
\providecommand{\path}[1]{#1}
\providecommand{\DOIprefix}{doi:}
\providecommand{\ArXivprefix}{arXiv:}
\providecommand{\URLprefix}{URL: }
\providecommand{\Pubmedprefix}{pmid:}
\providecommand{\doi}[1]{\href{http://dx.doi.org/#1}{\path{#1}}}
\providecommand{\Pubmed}[1]{\href{pmid:#1}{\path{#1}}}
\providecommand{\bibinfo}[2]{#2}
\ifx\xfnm\relax \def\xfnm[#1]{\unskip,\space#1}\fi
%Type = Article
\bibitem[{Arri and Perry(2010)}]{Arri-2010-F5CriterionRevised}
\bibinfo{author}{Arri, A.}, \bibinfo{author}{Perry, J.}, \bibinfo{year}{2010}.
\newblock \bibinfo{title}{The {{F5 Criterion}} revised} \URLprefix
  \url{http://arxiv.org/abs/1012.3664},
  \DOIprefix\doi{10.1016/j.jsc.2011.05.004}.
%Type = Article
\bibitem[{Bao et~al.(2021)Bao, He and Xiao}]{Bao-2021-ChiralRingsFutakia}
\bibinfo{author}{Bao, J.}, \bibinfo{author}{He, Y.}, \bibinfo{author}{Xiao,
  Y.}, \bibinfo{year}{2021}.
\newblock \bibinfo{title}{Chiral rings, {{Futaki}} invariants, plethystics, and
  {{Gröbner}} bases}.
\newblock \bibinfo{journal}{Journal of High Energy Physics}
  \bibinfo{volume}{2021}, \bibinfo{pages}{203.}
\newblock \URLprefix \url{https://openaccess.city.ac.uk/id/eprint/25957/},
  \DOIprefix\doi{10.1007/JHEP01(2021)203}.
%Type = Phdthesis
\bibitem[{Bardet(2004)}]{Bardet-2004-thesis}
\bibinfo{author}{Bardet, M.}, \bibinfo{year}{2004}.
\newblock \bibinfo{title}{{\'E}tude des syst{\`e}mes alg{\'e}briques
  surd{\'e}termin{\'e}s. Applications aux codes correcteurs et {\`a} la
  cryptographie}.
\newblock \bibinfo{type}{Phd thesis}. Universit{\'e} Pierre et Marie Curie -
  Paris VI.
\newblock \URLprefix \url{http://tel.archives-ouvertes.fr/tel-00449609}.
%Type = Inproceedings
\bibitem[{Bardet et~al.(2004)Bardet, Faug{\`e}re and Salvy}]{Bardet-2004}
\bibinfo{author}{Bardet, M.}, \bibinfo{author}{Faug{\`e}re, J.C.},
  \bibinfo{author}{Salvy, B.}, \bibinfo{year}{2004}.
\newblock \bibinfo{title}{{On the complexity of Gr{\"o}bner basis computation
  of semi-regular overdetermined algebraic equations}}, in:
  \bibinfo{booktitle}{International Conference on Polynomial System Solving -
  ICPSS}, pp. \bibinfo{pages}{71--75}.
\newblock \URLprefix \url{http://www-salsa.lip6.fr/~jcf/Papers/43BF.pdf}.
%Type = Article
\bibitem[{Bardet et~al.(2015)Bardet, Faugère and Salvy}]{Bardet-2015}
\bibinfo{author}{Bardet, M.}, \bibinfo{author}{Faugère, J.C.},
  \bibinfo{author}{Salvy, B.}, \bibinfo{year}{2015}.
\newblock \bibinfo{title}{{On the complexity of the F5 Gröbner basis
  algorithm}}.
\newblock \bibinfo{journal}{Journal of Symbolic Computation}
  \bibinfo{volume}{70}, \bibinfo{pages}{49–70}.
\newblock \URLprefix \url{http://dx.doi.org/10.1016/j.jsc.2014.09.025},
  \DOIprefix\doi{10.1016/j.jsc.2014.09.025},
  \href{http://arxiv.org/abs/1312.1655}{{\tt arXiv:1312.1655}}.
%Type = Article
\bibitem[{Bender et~al.(2019)Bender, Faugère and
  Tsigaridas}]{Bender-2019-grobner-basis-over}
\bibinfo{author}{Bender, M.R.}, \bibinfo{author}{Faugère, J.C.},
  \bibinfo{author}{Tsigaridas, E.}, \bibinfo{year}{2019}.
\newblock \bibinfo{title}{Gröbner basis over semigroup algebras}.
\newblock \bibinfo{journal}{Proceedings of the 2019 on International Symposium
  on Symbolic and Algebraic Computation} \URLprefix
  \url{http://dx.doi.org/10.1145/3326229.3326248},
  \DOIprefix\doi{10.1145/3326229.3326248}.
%Type = Misc
\bibitem[{Bessonov et~al.(2022)Bessonov, Ilmer, Konstantinova, Ovchinnikov,
  Pogudin and Soto}]{Bessonov-2022-ObtainingWeightsFor}
\bibinfo{author}{Bessonov, M.}, \bibinfo{author}{Ilmer, I.},
  \bibinfo{author}{Konstantinova, T.}, \bibinfo{author}{Ovchinnikov, A.},
  \bibinfo{author}{Pogudin, G.}, \bibinfo{author}{Soto, P.},
  \bibinfo{year}{2022}.
\newblock \bibinfo{title}{{Obtaining weights for Gröbner basis computation in
  parameter identifiability problems}}.
\newblock \URLprefix \url{https://arxiv.org/abs/2202.06297},
  \DOIprefix\doi{10.48550/ARXIV.2202.06297}.
%Type = Phdthesis
\bibitem[{Buchberger(1965)}]{Buchberger-1965}
\bibinfo{author}{Buchberger, B.}, \bibinfo{year}{1965}.
\newblock \bibinfo{title}{{E}in {A}lgorithmus zum {A}uffinden der
  {B}asiselemente des {R}estklassenringes nach einem nulldimensionalen
  {P}olynomideal}.
\newblock Ph.D. thesis. University of Innsbruck, Austria.
%Type = Inproceedings
\bibitem[{Caboara et~al.(1996)Caboara, de~Dominicis and
  Robbiano}]{Caboara-1996}
\bibinfo{author}{Caboara, M.}, \bibinfo{author}{de~Dominicis, G.},
  \bibinfo{author}{Robbiano, L.}, \bibinfo{year}{1996}.
\newblock \bibinfo{title}{{Multigraded Hilbert Functions and Buchberger
  Algorithm}}, in: \bibinfo{editor}{Engeler, E.}, \bibinfo{editor}{Caviness,
  B.F.}, \bibinfo{editor}{Lakshman, Y.N.} (Eds.),
  \bibinfo{booktitle}{Proceedings of the 1996 International Symposium on
  Symbolic and Algebraic Computation, {ISSAC} '96, Zurich, Switzerland, July
  24-26, 1996}, \bibinfo{publisher}{{ACM}}. pp. \bibinfo{pages}{72--78}.
\newblock \DOIprefix\doi{10.1145/236869.236901}.
%Type = Article
\bibitem[{Collart et~al.(1997)Collart, Kalkbrenner and
  Mall}]{Collart-1997-walk}
\bibinfo{author}{Collart, S.}, \bibinfo{author}{Kalkbrenner, M.},
  \bibinfo{author}{Mall, D.}, \bibinfo{year}{1997}.
\newblock \bibinfo{title}{Converting bases with the {G}r\"{o}bner walk}.
\newblock \bibinfo{journal}{Journal of Symbolic Computation}
  \bibinfo{volume}{24}, \bibinfo{pages}{465--469}.
\newblock \DOIprefix\doi{10.1006/jsco.1996.0145}.
%Type = Article
\bibitem[{Eder(2013)}]{Eder-2013-AnalysisInhomogeneousSignaturebased}
\bibinfo{author}{Eder, C.}, \bibinfo{year}{2013}.
\newblock \bibinfo{title}{An analysis of inhomogeneous signature-based
  {{Gröbner}} basis computations}.
\newblock \bibinfo{journal}{Journal of Symbolic Computation}
  \bibinfo{volume}{59}, \bibinfo{pages}{21--35}.
\newblock \DOIprefix\doi{10.1016/j.jsc.2013.08.001}.
%Type = Article
\bibitem[{Eder and Faug{\`e}re(2017)}]{Eder-2017-survey}
\bibinfo{author}{Eder, C.}, \bibinfo{author}{Faug{\`e}re, J.C.},
  \bibinfo{year}{2017}.
\newblock \bibinfo{title}{A {S}urvey on {S}ignature-based {A}lgorithms for
  {C}omputing {G}r{\"o}bner {B}ases}.
\newblock \bibinfo{journal}{Journal of Symbolic Computation}
  \bibinfo{volume}{80}, \bibinfo{pages}{719--784}.
%Type = Inproceedings
\bibitem[{Faug{\`{e}}re and Svartz(2013)}]{Faugere-2013-invariant}
\bibinfo{author}{Faug{\`{e}}re, J.}, \bibinfo{author}{Svartz, J.},
  \bibinfo{year}{2013}.
\newblock \bibinfo{title}{Gr{\"{o}}bner bases of ideals invariant under a
  commutative group: the non-modular case}, in: \bibinfo{editor}{Kauers, M.}
  (Ed.), \bibinfo{booktitle}{International Symposium on Symbolic and Algebraic
  Computation, ISSAC'13, Boston, MA, USA, June 26-29, 2013},
  \bibinfo{publisher}{{ACM}}. pp. \bibinfo{pages}{347--354}.
\newblock \URLprefix \url{https://doi.org/10.1145/2465506.2465944},
  \DOIprefix\doi{10.1145/2465506.2465944}.
%Type = Article
\bibitem[{Faug{\`e}re(1999)}]{Faugere-1999-F4}
\bibinfo{author}{Faug{\`e}re, J.C.}, \bibinfo{year}{1999}.
\newblock \bibinfo{title}{A new efficient algorithm for computing {G}r\"obner
  bases {$(F_4)$}}.
\newblock \bibinfo{journal}{Journal of Pure and Applied Algebra}
  \bibinfo{volume}{139}, \bibinfo{pages}{61--88}.
\newblock \DOIprefix\doi{10.1016/S0022-4049(99)00005-5}.
  \bibinfo{note}{{Effective Methods in Algebraic Geometry (Saint-Malo, 1998)}}.
%Type = Inproceedings
\bibitem[{Faug{\`e}re(2002)}]{Faugere-2002-F5}
\bibinfo{author}{Faug{\`e}re, J.C.}, \bibinfo{year}{2002}.
\newblock \bibinfo{title}{A new efficient algorithm for computing {G}r\"obner
  bases without reduction to zero {$(F_5)$}}, in:
  \bibinfo{booktitle}{Proceedings of the 2002 {I}nternational {S}ymposium on
  {S}ymbolic and {A}lgebraic {C}omputation}, \bibinfo{publisher}{ACM}. pp.
  \bibinfo{pages}{75--83 (electronic)}.
\newblock \DOIprefix\doi{10.1145/780506.780516}.
%Type = Inproceedings
\bibitem[{Faug{\`{e}}re et~al.(2014)Faug{\`{e}}re, Gaudry, Huot and
  Renault}]{Faugere-2014-subcubic-FGLM}
\bibinfo{author}{Faug{\`{e}}re, J.C.}, \bibinfo{author}{Gaudry, P.},
  \bibinfo{author}{Huot, L.}, \bibinfo{author}{Renault, G.},
  \bibinfo{year}{2014}.
\newblock \bibinfo{title}{{Sub-cubic change of ordering for Gr{\"{o}}bner
  basis}}, in: \bibinfo{booktitle}{Proceedings of the 39th International
  Symposium on Symbolic and Algebraic Computation - ISSAC '14},
  \bibinfo{publisher}{ACM Press}, \bibinfo{address}{Kobe, Japan}. pp.
  \bibinfo{pages}{170--177}.
\newblock \DOIprefix\doi{10.1145/2608628.2608669}.
%Type = Article
\bibitem[{Faug{\`e}re et~al.(1993)Faug{\`e}re, Gianni, Lazard and
  Mora}]{Faugere-1993-fglm}
\bibinfo{author}{Faug{\`e}re, J.C.}, \bibinfo{author}{Gianni, P.},
  \bibinfo{author}{Lazard, D.}, \bibinfo{author}{Mora, T.},
  \bibinfo{year}{1993}.
\newblock \bibinfo{title}{Efficient computation of zero-dimensional {G}r\"obner
  bases by change of ordering}.
\newblock \bibinfo{journal}{Journal of Symbolic Computation}
  \bibinfo{volume}{16}, \bibinfo{pages}{329--344}.
\newblock \DOIprefix\doi{10.1006/jsco.1993.1051}.
%Type = Article
\bibitem[{Faug{\`e}re et~al.(2011)Faug{\`e}re, Safey El~Din and
  Spaenlehauer}]{Faugere-2011-bihomo}
\bibinfo{author}{Faug{\`e}re, J.C.}, \bibinfo{author}{Safey El~Din, M.},
  \bibinfo{author}{Spaenlehauer, P.J.}, \bibinfo{year}{2011}.
\newblock \bibinfo{title}{Gr\"obner bases of bihomogeneous ideals generated by
  polynomials of bidegree {$(1,1)$}: algorithms and complexity}.
\newblock \bibinfo{journal}{Journal of Symbolic Computation}
  \bibinfo{volume}{46}, \bibinfo{pages}{406--437}.
\newblock \DOIprefix\doi{10.1016/j.jsc.2010.10.014}.
%Type = Article
\bibitem[{Faugère et~al.(2013)Faugère, Safey El~Din and
  Verron}]{Faugere-2013-whomo-1}
\bibinfo{author}{Faugère, J.C.}, \bibinfo{author}{Safey El~Din, M.},
  \bibinfo{author}{Verron, T.}, \bibinfo{year}{2013}.
\newblock \bibinfo{title}{{On the complexity of computing Gröbner bases for
  quasi-homogeneous systems}}.
\newblock \bibinfo{journal}{Proceedings of the 38th ACM International Symposium
  on Symbolic and Algebraic Computation - ISSAC ’13} \URLprefix
  \url{https://hal.archives-ouvertes.fr/hal-00780388v2},
  \DOIprefix\doi{10.1145/2465506.2465943}.
%Type = Article
\bibitem[{Faugère et~al.(2016)Faugère, Safey El~Din and
  Verron}]{Faugere-2016-whomo-2}
\bibinfo{author}{Faugère, J.C.}, \bibinfo{author}{Safey El~Din, M.},
  \bibinfo{author}{Verron, T.}, \bibinfo{year}{2016}.
\newblock \bibinfo{title}{{On the complexity of computing Gröbner bases for
  weighted homogeneous systems}}.
\newblock \bibinfo{journal}{Journal of Symbolic Computation}
  \bibinfo{volume}{76}, \bibinfo{pages}{107 -- 141}.
\newblock \URLprefix
  \url{http://www.sciencedirect.com/science/article/pii/S0747717115001273},
  \DOIprefix\doi{10.1016/j.jsc.2015.12.001}.
%Type = Inproceedings
\bibitem[{Faugère et~al.(2014)Faugère, Spaenlehauer and
  Svartz}]{Faugere-2014-sparse}
\bibinfo{author}{Faugère, J.C.}, \bibinfo{author}{Spaenlehauer, P.J.},
  \bibinfo{author}{Svartz, J.}, \bibinfo{year}{2014}.
\newblock \bibinfo{title}{{Sparse Gröbner Bases: the Unmixed Case}}, in:
  \bibinfo{booktitle}{{Proceedings of the 39th International Symposium on
  Symbolic and Algebraic Computation (ISSAC 2014)}}, \bibinfo{address}{Kobe,
  Japon}. pp. \bibinfo{pages}{178--185}.
\newblock \DOIprefix\doi{10.1145/2608628.2608663}.
%Type = Article
\bibitem[{Gao et~al.(2015)Gao, Volny and Wang}]{Gao-2015-new-framework-for}
\bibinfo{author}{Gao, S.}, \bibinfo{author}{Volny, F.}, \bibinfo{author}{Wang,
  M.}, \bibinfo{year}{2015}.
\newblock \bibinfo{title}{A new framework for computing {G}röbner bases}.
\newblock \bibinfo{journal}{Mathematics of Computation} \bibinfo{volume}{85},
  \bibinfo{pages}{449–465}.
\newblock \URLprefix \url{http://dx.doi.org/10.1090/mcom/2969},
  \DOIprefix\doi{10.1090/mcom/2969}.
%Type = Inproceedings
\bibitem[{Giovini et~al.(1991)Giovini, Mora, Niesi, Robbiano and
  Traverso}]{Giovini-1991}
\bibinfo{author}{Giovini, A.}, \bibinfo{author}{Mora, T.},
  \bibinfo{author}{Niesi, G.}, \bibinfo{author}{Robbiano, L.},
  \bibinfo{author}{Traverso, C.}, \bibinfo{year}{1991}.
\newblock \bibinfo{title}{{“One sugar cube, please” or selection strategies
  in the Buchberger algorithm}}, in: \bibinfo{booktitle}{Proceedings of the
  1991 international symposium on Symbolic and algebraic computation - ISSAC
  '91}, pp. \bibinfo{pages}{49--54}.
\newblock \DOIprefix\doi{10.1145/120694.120701}.
%Type = Inproceedings
\bibitem[{Giusti(1989)}]{Giusti-1989-CastelnuovoRegularityCurves}
\bibinfo{author}{Giusti, M.}, \bibinfo{year}{1989}.
\newblock \bibinfo{title}{On the {Castelnuovo} regularity for curves}, in:
  \bibinfo{booktitle}{Proceedings of the {{ACM-SIGSAM}} 1989 International
  Symposium on {{Symbolic}} and Algebraic Computation},
  \bibinfo{publisher}{{Association for Computing Machinery}}. pp.
  \bibinfo{pages}{250--253}.
\newblock \URLprefix \url{https://doi.org/10.1145/74540.74571},
  \DOIprefix\doi{10.1145/74540.74571}.
%Type = Incollection
\bibitem[{Giusti and Heintz(1993)}]{Giusti-1993-DeterminationPointsIsoles}
\bibinfo{author}{Giusti, M.}, \bibinfo{author}{Heintz, J.},
  \bibinfo{year}{1993}.
\newblock \bibinfo{title}{La {{détermination des points isolés et de la
  dimension d}}'une {{variété algébrique peut se faire en temps
  polynomial}}}, in: \bibinfo{editor}{Eisenbud, D.}, \bibinfo{editor}{Robbiano,
  L.} (Eds.), \bibinfo{booktitle}{Computional {{Algebraic Geometry}} and
  {{Commutative Algebra}}}. \bibinfo{publisher}{{Cambridge University Press}}.
  number \bibinfo{number}{XXXIV} in \bibinfo{series}{Symposia {{Mathematica
  Volume}}}.
\newblock \URLprefix
  \url{http://citeseerx.ist.psu.edu/viewdoc/summary?doi=10.1.1.37.9872}.
%Type = Book
\bibitem[{Kreuzer and
  Robbiano(2005)}]{Kreuzer-2005-ComputationalCommutativeAlgebra}
\bibinfo{author}{Kreuzer, M.}, \bibinfo{author}{Robbiano, L.},
  \bibinfo{year}{2005}.
\newblock \bibinfo{title}{{Computational Commutative Algebra}}.
\newblock \bibinfo{publisher}{{Springer}}.
%Type = Inproceedings
\bibitem[{Lazard(1983)}]{Lazard-1983}
\bibinfo{author}{Lazard, D.}, \bibinfo{year}{1983}.
\newblock \bibinfo{title}{Gr{\"o}bner-bases, {G}aussian elimination and
  resolution of systems of algebraic equations}, in: \bibinfo{editor}{van
  Hulzen, J.A.} (Ed.), \bibinfo{booktitle}{EUROCAL},
  \bibinfo{publisher}{Springer}. pp. \bibinfo{pages}{146--156}.
\newblock \URLprefix \url{http://dx.doi.org/10.1007/3-540-12868-9\_99},
  \DOIprefix\doi{10.1007/3-540-12868-9\_99}.
%Type = Article
\bibitem[{Pardue(2010)}]{Pardue-2010}
\bibinfo{author}{Pardue, K.}, \bibinfo{year}{2010}.
\newblock \bibinfo{title}{Generic sequences of polynomials}.
\newblock \bibinfo{journal}{Journal of Algebra} \bibinfo{volume}{324},
  \bibinfo{pages}{579--590}.
\newblock \DOIprefix\doi{10.1016/j.jalgebra.2010.04.018}.
%Type = Inproceedings
\bibitem[{Robbiano(1985)}]{Robbiano-1985-TermOrderingsPolynomial}
\bibinfo{author}{Robbiano, L.}, \bibinfo{year}{1985}.
\newblock \bibinfo{title}{Term orderings on the polynomial ring}, in:
  \bibinfo{editor}{Caviness, B.F.} (Ed.), \bibinfo{booktitle}{{{EUROCAL}} '85},
  \bibinfo{publisher}{{Springer}}. pp. \bibinfo{pages}{513--517}.
\newblock \DOIprefix\doi{10.1007/3-540-15984-3\_321}.
%Type = Inproceedings
\bibitem[{Roune and Stillman(2012)}]{Roune-2012-PracticalGrobnerBasis}
\bibinfo{author}{Roune, B.H.}, \bibinfo{author}{Stillman, M.},
  \bibinfo{year}{2012}.
\newblock \bibinfo{title}{Practical {{Gröbner}} basis computation}, in:
  \bibinfo{booktitle}{Proceedings of the 37th {{International Symposium}} on
  {{Symbolic}} and {{Algebraic Computation}}}, \bibinfo{publisher}{{Association
  for Computing Machinery}}. pp. \bibinfo{pages}{203--210}.
\newblock \URLprefix \url{https://doi.org/10.1145/2442829.2442860},
  \DOIprefix\doi{10.1145/2442829.2442860}.
%Type = Phdthesis
\bibitem[{Spaenlehauer(2012)}]{Spaenlehauer-2012}
\bibinfo{author}{Spaenlehauer, P.J.}, \bibinfo{year}{2012}.
\newblock \bibinfo{title}{{R{\'{e}}solution de syst{\`{e}}mes
  multi-homog{\`{e}}nes et d{\'{e}}terminantiels}}.
\newblock \bibinfo{type}{Phd thesis}. Université Pierre et Marie Curie.
\newblock \URLprefix \url{https://theses.hal.science/tel-01110756v1}.
%Type = Article
\bibitem[{Traverso(1996)}]{Traverso-1996-hilbert-functions-buchberger}
\bibinfo{author}{Traverso, C.}, \bibinfo{year}{1996}.
\newblock \bibinfo{title}{{Hilbert Functions and the Buchberger Algorithm}}.
\newblock \bibinfo{journal}{Journal of Symbolic Computation}
  \bibinfo{volume}{22}, \bibinfo{pages}{355 -- 376}.
\newblock \DOIprefix\doi{http://dx.doi.org/10.1006/jsco.1996.0056}.

\end{thebibliography}

\end{document}